\definecolor{dark-red}{rgb}{.54,.0,.0}
\definecolor{dark-green}{rgb}{.0,.4,.0}
\definecolor{dark-blue}{rgb}{.04,.04,.4}
\newcounter{mnotecount}[section]
\DeclareFontFamily{OT1}{pzc}{}
\DeclareFontShape{OT1}{pzc}{m}{it}%
              {<-> s * pzcmi8t}{}
\DeclareMathAlphabet{\mathpzc}{OT1}{pzc}%
                                {m}{it}
\newtheorem{Thm}{Theorem}[section]
\newtheorem{Lem}[Thm]{Lemma}
\newtheorem{Cor}[Thm]{Corollary}
\newtheorem{Prop}[Thm]{Proposition}
\newtheorem{Rmk}[Thm]{Remark}
\newcommand{\cal}{\mathcal}
\renewcommand{\Psi}{\rho}
\newcommand{\R}{\mathbb{R}}
\newcommand{\eps}{\varepsilon}
\newcommand{\vlinha}{\frac{e^2}{r}+\frac{\Lambda}{3}r^3-\omega}
\newcommand{\vlinhazz}{\frac{e^2}{r}+\frac{\Lambda}{3}r^3-\omega_0}
\newcommand{\vlinhaz}{\frac{e^2}{{r}_0(0)}+\frac{\Lambda}{3}{r}_0^3(0)-\omega_0}
\newcommand{\muu}{1-\frac{2\omega}{r}+\frac{e^2}{r^2}-\frac{\Lambda}{3}r^2}
\newcommand{\muuz}{1-\frac{2\omega_0}{\tilde{r}_0}+\frac{e^2}{\tilde{r}_0^2}-\frac{\Lambda}{3}\tilde{r}_0^2}
\newcommand{\truemu}{\frac{2\omega}{r}-\frac{e^2}{r^2}+\frac{\Lambda}{3}r^2}
\newcommand{\mysigma}{{\cal P}}
\renewcommand{\omega}{\varpi}
\newcommand{\myumax}{U}
\newcommand{\ru}{\partial_r(1-\mu)(r_+,\varpi_0)}
\newcommand{\ckrm}{\check r_-}
\newcommand{\ckrp}{\check r_+}
\newcommand{\cg}{\Gamma}
\newcommand{\vgr}{v_{\ckrm}}
\newcommand{\ugr}{u_{\ckrm}}
\newcommand{\ugrp}{u_{\ckrp}}
\newcommand{\vgrp}{v_{\ckrp}}
\newcommand{\ug}{u_r}
\newcommand{\vg}{v_r}
\newcommand{\ugs}{u_s}
\newcommand{\vgs}{v_s}
\newcommand{\vgsd}{v_{s_2}}
\newcommand{\ugz}{u_{r_0}}
\newcommand{\vgz}{v_{r_0}}
\newcommand{\ckr}{\check{r}}
\newcommand{\uckr}{u_{\ckr}}
\newcommand{\vckr}{v_{\ckr}}
\newcommand{\udz}{u_{r_+-\delta}}
\newcommand{\vdz}{v_{r_+-\delta}}
\newcommand{\udp}{u'_{r_+-\delta}}
\newcommand{\gam}{\gamma}
\newcommand{\ugam}{u_{\gam}}
\newcommand{\vgam}{v_{\gam}}
\newcommand{\rmu}{\partial_r(1-\mu)}
\newcommand{\dMenos}{2k_-}
\newcommand{\dMais}{2k_+}
\newcommand{\cD}{{\cal D}}
\begin{document}

\newcounter{enumii_saved}

\addtocounter{part}{+2}

\title[Global uniqueness with a cosmological constant - Part 3]{On the global uniqueness for the Einstein-Maxwell-scalar field system with a cosmological constant \\ \vspace{.2cm}
{\small Part 3. Mass inflation and extendibility of the solutions}}

\author{Jo\~ao L.~Costa}
\author{Pedro M.~Gir\~ao}
\author{Jos\'{e} Nat\'{a}rio}
\author{Jorge Drumond Silva}

\address{Jo\~ao L.~Costa: 
ISCTE - Instituto Universitário de Lisboa, Portugal
and 
Center for Mathematical Analysis, Geometry and Dynamical Systems,
Instituto Superior T\'ecnico, Universidade de Lisboa, Portugal
}
\email{jlca@iscte.pt}

\address{Pedro M.~Gir\~ao, Jos\'{e} Nat\'{a}rio and Jorge Drumond Silva: 
Center for Mathematical Analysis, Geometry and Dynamical Systems,
Instituto Superior T\'ecnico, Universidade de Lisboa, Portugal}
\email{pgirao@math.ist.utl.pt}
\email{jnatar@math.ist.utl.pt}
\email{jsilva@math.ist.utl.pt}

\subjclass[2010]{Primary 83C05; Secondary 35Q76, 83C22, 83C57, 83C75}
\keywords{Einstein equations, black holes, strong cosmic censorship, Cauchy horizon, scalar field, spherical symmetry}
\thanks{Partially funded by FCT/Portugal through project PEst-OE/EEI/LA0009/2013.
P.~Girão and J.~Silva were also partially funded by FCT/Portugal through grants PTDC/MAT114397/2009 and UTA$\underline{\ }$CMU/MAT/0007/2009.}

\maketitle


\begin{center}
{\bf Abstract}
\end{center}

This paper is the third part of a trilogy dedicated to the following problem: 
given spherically symmetric characteristic
initial data for the Einstein-Maxwell-scalar
field system with a cosmological constant $\Lambda$,
with the data on the outgoing initial null
hypersurface given by a subextremal Reissner-Nordstr\"{o}m black
hole event horizon, study the future extendibility of the corresponding
maximal globally hyperbolic development as a  ``suitably regular'' Lorentzian manifold. 

In the first part~\cite{relIst1} of this series we established the well posedness of the characteristic problem, whereas
in the second part~\cite{relIst2} we studied the stability of the radius function at the Cauchy horizon. 

In this third and final paper we show that, depending on the decay rate of the initial data, mass inflation may or may not occur.
When the mass is controlled, it is possible to obtain continuous extensions of the metric across the Cauchy horizon with square integrable Christoffel symbols. 
Under slightly stronger conditions, we can bound the gradient of the scalar field. This allows the construction of (non-isometric) extensions 
of the maximal development which are classical solutions of the Einstein equations. 
Our results provide evidence against the validity of the strong cosmic censorship conjecture when $\Lambda>0$.


\newpage
{
\setcounter{tocdepth}{1}
\tableofcontents
}
\section{Introduction}

The Einstein equations are a covariant system of equations relating the geometry of spacetime to its energy and matter content. They are written in appropriate units as
\begin{equation}\label{Einstein}
R_{\mu\nu} - \frac12 R g_{\mu\nu} + \Lambda g_{\mu\nu} = 2 T_{\mu\nu},
\end{equation}
where $g_{\mu\nu}$ is the spacetime metric, $R_{\mu\nu}$ is the corresponding Ricci tensor, $R$ is the scalar curvature, $\Lambda$ is the cosmological constant and $T_{\mu\nu}$ is the energy-momentum tensor. In vacuum, $T_{\mu\nu}=0$ and \eqref{Einstein} is then a closed system of partial differential equations for the metric $g_{\mu\nu}$. In general, however, $T_{\mu\nu}$ depends on nonvanishing matter fields, and so \eqref{Einstein} must be coupled to other equations governing the matter dynamics.

In suitable coordinates, the Einstein equations become a system of quasi-linear wave equations, naturally leading to an initial value problem. Nonetheless, unlike in other evolution equations of Mathematical Physics, the spacetime geometry is not known {\em a priori}. This enables the occurrence of unexpected phenomena, in particular the failure of uniqueness of global solutions without loss of regularity.

An example is provided by the Reissner-Nordstr\"om family of solutions\footnote{Throughout this work we will simply use ``Reissner-Nordst\"om'' to mean any of the anti-de Sitter ($\Lambda<0$), the asymptotically flat ($\Lambda=0$), or the de Sitter ($\Lambda>0$) Reissner-Nordstr\"{o}m solutions.}, corresponding to static charged black holes (see for instance \cite{HE95}).
Regarding these solutions as arising from appropriate initial value problems, we encounter the surprising phenomenon that the maximal globally hyperbolic developments (informally, the largest Lorentzian manifolds determined from the initial data, via the Einstein equations) are smoothly extendible, as solutions, in a highly non-unique way. Therefore, global uniqueness fails for the Einstein equations. This puts into question the deterministic character of General Relativity, since what happens in such extensions is not uniquely determined by initial data.

The boundary of the maximal globally hyperbolic development in any given extension is known as the Cauchy horizon, and signals the breakdown of global uniqueness; in the Reissner-Nordstr\"om solutions this horizon occurs in the black hole interior. In \cite{SimpsonInternal}, Penrose and Simpson proposed a heuristic mechanism, known as the blue-shift effect, by which arbitrarily small perturbations of the black hole exterior can be infinitely amplified along the Cauchy horizon, turning it into a ``singularity'' beyond which no ``suitably regular'' extensions exist. Therefore, the previously discussed pathological features of the Reissner-Nordstr\"om family would be artifacts of those particular solutions, unstable under perturbations, and therefore devoid of physical meaning. This reinstated the belief in global uniqueness as a generic property of reasonable initial value problems for the Einstein equations, an idea substantiated in Penrose's {\em strong cosmic censorship conjecture} (SCCC), see~\cite{PenroseSingularities}, \cite{ChruscielSCC} and \cite{ChristodoulouGlobalnew}.

Making the notion of ``suitably regular'' precise is remarkably subtle, and has evolved considerably during the last decades: the original expectation was that the nonlinearities of the Einstein equations would suffice to turn the Cauchy horizon of the Reissner-Nordstr\"om solution, under arbitrarily small perturbations, into a Schwarzschild-like singularity, across which not even continuous extensions of the metric are possible (see \cite{Sbierski}). This would completely settle the question of the SCCC in the affirmative. On the other hand, the existence of generic $C^2$ extensions as solutions of the Einstein equations would completely falsify this conjecture. Nonetheless, $C^2$-inextendibility does not necessarily provide a compelling argument in favor of global uniqueness, since there are relevant solutions of the Einstein equations whose regularity is well below this threshold. Thus a formulation of the SCCC in terms of the generic blow up of the Kretschmann scalar (tidal forces), favoured by many authors, is manifestly insufficient, as it only rules out $C^2$ extensions.

In a seminal paper, Poisson and Israel \cite{IsraelPoisson} identified the blow up of a renormalized version of the Hawking mass at the Cauchy horizon as a consequence of
the blue-shift mechanism (see also \cite{hiscock} and \cite{OriInner}), implying the blow up of the Kretchmann scalar. This scenario, 
which was confirmed by Dafermos in his celebrated non-linear analysis of the spherically symmetric Einstein-Maxwell-scalar field system with $\Lambda=0$ (see \cite{Dafermos1, Dafermos2}),
became known as mass inflation. As a consequence, the current expectation is that, for generic initial data in the context of black hole spacetimes, the metric extends 
beyond the Cauchy horizon in $C^0$ but not in $C^2$. To accomodate these developments, Christodoulou \cite{Christodoulou:2008} proposed a formulation of the SCCC that excludes the generic existence of extensions of the metric with square integrable connection coefficients. As already suggested by Chru\'sciel, this guarantees that (generically) the potential extensions will not be regular enough to solve the Einstein equations, even in a weak sense.
By now, there is strong evidence that this formulation of the SCCC, which we will refer to as the Christodoulou-Chru\'sciel criterion, holds for asymptotically flat black holes (see \cite{DafermosBlack, LukWeak}). 

 It turns out that for cosmological black holes, i.e.\ black hole solutions of the Einstein equations with a positive cosmological constant $\Lambda$, the instability mechanism is expected to be
weaker. Although the introduction of this term has a negligible impact on the causal structure of the black hole interior, where the blueshift occurs, it has dramatic consequences for the structure of the exterior. In particular, a new horizon, known as the cosmological horizon, is formed. This generates an extra redshift, which counteracts the blueshift effect.
Related to this is Price's law, which predicts that, in Eddington-Finkelstein coordinates, the decay of the perturbations along the event horizon is polynomial in the asymptotically flat case, but exponential in the cosmological case.
From the start, it was clear that these facts could have a strong influence on the issue of stability of the Cauchy horizon.
This was enough to generate a considerable amount of activity concerning the SCCC in the positive cosmological constant setting, raising the possibility that mass inflation might not occur (see \cite{BradyCauchy, BradyCauchyMass, BradyCosmic}). After intense debate, the consensus was reached that the SCCC would probably prevail, at least in its weaker $C^2$
formulation. Unfortunately, most of the reasoning leading to this conclusion was based on heuristic arguments or perturbative calculations, with, for instance, the back-reaction of the metric being
``put in by hand''. 
This calls for a more detailed analysis, that takes into account the entire non-linear structure of the Einstein equations and is able to capture the fine regularity properties of potential extensions, especially in view of the growing popularity of the Christodoulou-Chru\'sciel criterion. This analysis has became even more pertinent since, in the meantime, it was discovered that supernovae observations are best fitted by models with $\Lambda>0$. 

This paper is the third part of a trilogy dedicated to the full non-linear evolution, inside a black hole, of the Einstein equations \eqref{Einstein} with nonvanishing cosmological constant $\Lambda$. The matter model consists of a massless scalar field $\phi$ and an electromagnetic field $F$, satisfying the Maxwell and wave equations
\begin{align}
& dF = d\star F = 0, \\
& \Box \phi = 0,
\end{align}
where $\star$ is the Hodge star operator and $\Box$ is the d'Alembertian (both depending on $g$). These equations couple to \eqref{Einstein} through the energy-momentum tensor
\begin{align}
& T_{\mu\nu} = \partial_\mu \phi \, \partial_\nu \phi - \frac12 \partial_\alpha \phi \, \partial^\alpha \phi \, g_{\mu\nu} + F_{\mu\alpha} F_{\nu}^{\,\,\alpha} - \frac14 F_{\alpha\beta} F^{\alpha \beta} g_{\mu\nu}. \label{Tmunu}
\end{align}
We choose this matter model because we wish to consider spherically symmetric perturbations of the Cauchy horizon of the Reissner-Nordstr\"{o}m spacetime; since Birkhoff's theorem implies that this is the only spherically symmetric electrovacuum solution, we also introduce a self-gravitating real massless scalar field to provide dynamical degrees of freedom with the same hyperbolic character.

More precisely, we study the following problem: 
given spherically symmetric characteristic
initial data for the Einstein-Maxwell-scalar
field system \eqref{Einstein}$-$\eqref{Tmunu},
with the data on the outgoing initial null
hypersurface given by a subextremal Reissner-Nordstr\"{o}m black
hole event horizon, 
and the remaining data otherwise free,
study the future extendibility of the corresponding
maximal globally hyperbolic development as a  ``suitably regular'' Lorentzian manifold. 
Strictly speaking, this problem does not address the strong cosmic censorship conjecture directly, 
because the data considered on the event horizon does not arise from the gravitational
collapse of generic Cauchy initial data. 
However, since the Price law for $\Lambda>0$ is widely expected to yield exponential decay of the scalar field along the event horizon 
(see for instance the linear analysis of~\cite{DafermosWavedeSitter, Dyatlov}),
we believe that our conclusions will provide valuable insights for this cosmological case.

In Part~1 \cite{relIst1} of this trilogy, we showed the equivalence (under appropriate regularity conditions for the initial data) between the Einstein-Maxwell-scalar
field equations \eqref{Einstein}$-$\eqref{Tmunu} under spherical symmetry
and the system of first order PDEs~\eqref{r_u}$-$\eqref{kappa_at_u}. We established existence, uniqueness and identified a breakdown criterion for solutions of this system.
In Part~2 \cite{relIst2}, we analyzed the properties of the solution
up to the Cauchy horizon, proving, in particular, the stability of the radius function. See Section~\ref{main-one-two} for a summary of our previous results, as well as for the definitions and notation that will be used henceforth.
We refer the reader to Parts 1 and 2 for the complete details.

In this paper we examine the behavior of the renormalized Hawking mass $\varpi$ (see~\eqref{bar_rafaeli}) and the scalar field at the Cauchy horizon. Depending on the control that we have on these quantities, we are able to construct extensions of the metric beyond the Cauchy horizon with different degrees of regularity. The quotient
\[
\Psi:=\frac{k_-}{k_+} > 1,
\]
of the surface gravities (see \eqref{ganhamos}) of the Cauchy and event horizons $r=r_-$ and $r=r_+$ in the reference Reissner-Nordstr\"{o}m black hole
plays an important role in our analysis.

We start by briefly recalling the strategy of Dafermos~\cite{Dafermos1, Dafermos2} to establish mass inflation (that is, blow-up of $\varpi$ at the Cauchy horizon), which naturally generalizes to the case of a non-vanishing cosmological constant. This requires the initial field $\zeta_0$
(see~\eqref{zeta} and~\eqref{RN1}) to satisfy (for constants $c>0$, $U>0$ and $u\in\left[0,U\right]$)
$$
\zeta_0(u)\geq cu^s \ {\rm for\ some} \ 0<s<\frac\Psi 2-1,
$$
see Theorem~\ref{inflation}. 
Since the mass is a scalar invariant involving first derivatives of the metric, its blow up excludes the existence
of spherically symmetric $C^1$ extensions.\footnote{In this paper we will only be concerned with extensions that are also spherically symmetric.} Moreover,
using the techniques of~\cite{DafermosBlack}, one can hope to prove that in this case the Christodoulou-Chru\'sciel inextendibility criterion holds, that is, there is no extension of the metric beyond the Cauchy horizon with Christoffel symbols in $L^2_{\rm loc}$.

The previous approach only allows us to explore a particular subregion of parameter space, corresponding to sufficiently subextremal reference solutions (see the figure below). We proceed by extending the analysis to the full parameter range. First we prove that if the field $\zeta_0$ satisfies the weaker hypothesis
$$
\zeta_0(u)\geq cu^s \ {\rm for\ some} \ 0<s<\Psi-1,
$$
then either the renormalized mass $\varpi$ or the field $\bigl|\frac\theta\lambda\bigr|$ 
(see~\eqref{lambda_0} and~\eqref{theta})
blow up at the Cauchy horizon (Theorem~\ref{inflation2}). As a consequence, the Kretschmann curvature scalar also blows up (Remark~\ref{RmkSCC1}).

On the other hand, when the initial field $\zeta_0$ satisfies
$$
|\zeta_0(u)|\leq cu^s \ {\rm for\ some} \ s>\frac{7\Psi}9-1 > 0,
$$
we show that the mass remains bounded (Theorem~\ref{thmNoMass}). This behavior is in contrast with the standard picture of spherically symmetric gravitational collapse.

The case where no mass inflation occurs is then analyzed in further detail. We construct $C^0$ spherically symmetric extensions of the metric beyond the Cauchy horizon with
the second mixed derivatives of $r$ continuous. There are two natural coordinate choices for the extension, corresponding to either $\lambda=-1$ or $\kappa=1$ (see \eqref{lambda_0} and \eqref{kappa_0})
on the outgoing null ray $u=U$.
Interestingly, these lead to inequivalent $C^2$ structures for the extended manifolds, a fact that is reflected on the behavior of the Christoffel symbols: when the initial data satisfies 
$$
c_2u^{s_2}\leq\zeta_0(u)\leq c_1u^{s_1} \ {\rm for\ some} \ \frac{7\Psi}9-1<s_1\leq s_2<\Psi-1,
$$
we prove that one of the Christoffel symbols blows up on $u=U$ at the Cauchy horizon in the $\lambda=-1$ coordinates, but not in the $\kappa=1$ coordinates. Moreover, for both coordinate systems this Christoffel symbol blows up along almost all outgoing null rays, which excludes the existence of $C^{0,1}$ extensions of the metric. Nonetheless, in the $\kappa=1$ coordinates the Christoffel symbols are in $L^2_{{\rm loc}}$ 
and the field $\phi$ is in $H^1_{{\rm loc}}$ for the whole range of initial data where there is no mass inflation (Corollary~\ref{cc-fails}). Therefore, the Christodoulou-Chru\'sciel inextendibility criterion for strong cosmic censorship does not hold in this setting.

Finally, assuming that
$$
|\zeta_0(u)|\leq cu^s \ {\rm for\ some} \ s>\frac{13\Psi}9-1,
$$
we can bound the field $\frac{\theta}{\lambda}$ at the Cauchy horizon. This allows us to prove
that the solution of the first order system
extends, non-uniquely, to a classical solution beyond the Cauchy horizon (Theorem~\ref{para-la}). 
We then show that this solution corresponds to a classical solution of the Einstein
equations extending beyond the Cauchy horizon (Theorem~\ref{fine}).
The metric for this solution is $C^1$ and such that $r\in C^2$ and $\partial_u\partial_v\Omega$ 
(see~\eqref{metric}) exists and is continuous (Remark~\ref{fail}).
However, we emphasize that the metric does not have to be $C^2$, in spite of the
Kretschmann curvature scalar being bounded.
To the best of our knowledge, these are the first results where the generic existence of extensions as solutions of the Einstein equations is established.

It should be noted that these results, while valid for all signs of the cosmological constant $\Lambda$, only provide evidence against the strong cosmic censorship conjecture in the case $\Lambda > 0$, since, as discussed above, only in this case does one expect an exponential decay of perturbations along the event horizon, even in the absence of symmetry assumptions.

In summary, for a given $\Psi$ and $cu^s\leq\zeta_0(u)\leq Cu^s$, the behavior of the solution at the Cauchy horizon depends on the value of $s$ as described in the following figure. The lines represent the limits of the various strict inequalities above. 

\begin{center}
\begin{psfrags}
\psfrag{b}{\tiny $\!\frac 97$}
\psfrag{u}{\tiny $\!\frac 92$}
\psfrag{c}{\tiny $2$}
\psfrag{d}{\tiny $3$}
\psfrag{e}{\tiny $4$}
\psfrag{z}{\tiny $\!\!1$}
\psfrag{g}{\tiny $\!\!\!\frac 49$}
\psfrag{h}{\tiny $\!\!1$}
\psfrag{i}{\tiny $\!\!2$}
\psfrag{j}{\tiny $\!\!3$}
\psfrag{k}{\tiny $s=\frac \Psi 2-1$}
\psfrag{m}{\tiny $s=\Psi-1$}
\psfrag{l}{\tiny $s=\frac{7\Psi}9-1$}
\psfrag{n}{\tiny $s=\frac{13\Psi}9-1$}
\psfrag{r}{\tiny $\Psi$}
\psfrag{s}{\tiny $s$}
\psfrag{o}{\tiny $1$}
\psfrag{p}{\tiny $\!\!0$}
\includegraphics[scale=.7]{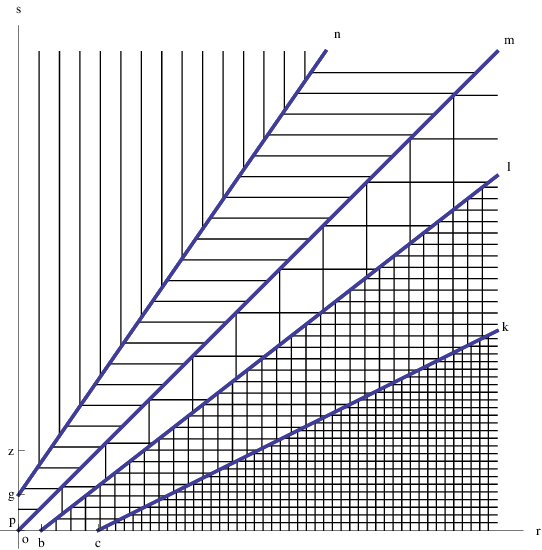}
\end{psfrags}
\end{center}
\begin{center}
\begin{psfrags}
\psfrag{e}{\tiny \!\!\!\!\!\!\!\!\!\!\!\!\!\!\!\!\!mass inflation}
\psfrag{u}{\tiny \!\!\!\!\!\!\!\!\!\!\!\!\!\!\!\!\!mass inflation}
\psfrag{b}{\tiny \!\!\!\!\!\!\!\!\!\!\!\!\!\!\!\!\!\!\!\!\!no mass inflation}
\psfrag{a}{\tiny \!\!\!\!\!\!\!\!\!\!\!\!\!\!\!\!\!\!\!\!\!no mass inflation}
\psfrag{c}{\tiny \!\!\!\!\!\!\!\!\!\!\!\!\!\!\!\!\!\!\!\!\!no mass inflation}
\psfrag{d}{\tiny \!\!\!\!\!\!\!\!\!\!\!\!\!\!\!$\theta/\lambda$ bounded}
\psfrag{f}{\tiny \!\!\!\!\!\!\!\!\!\!\!\!\!\!\!\!\!\!\!$\theta/\lambda$ unbounded}
\psfrag{v}{\tiny \!\!\!\!\!\!\!\!\!\!\!\!\!\!\!\!\!\!\!\!\!\!\!\! or $\theta/\lambda$ unbounded}
\psfrag{g}{\tiny \!\!\!\!\!\!\!\!\!\!\!\!\!\!\!\!\!\!\!\!\!smooth extension}
\psfrag{h}{\tiny \!\!\!\!\!\!\!\!\!\!\!\!\!\!\!\!\!\!\!\!\!\!\!\!\!\!\!\!\!beyond Cauchy horizon}
\includegraphics[scale=1.2]{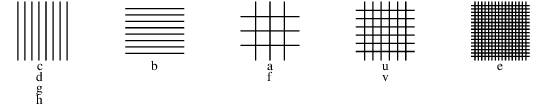}
\end{psfrags}
\end{center}

\bigskip

In Appendix~\ref{appendix-2} we explain how $\Psi$ depends on the physical parameters $r_+$, $r_-$ and $\Lambda$. In particular,
$\Psi$ is a function of $\frac{r_+}{r_-}$ and $\frac{\Lambda r_-^2}3$. 

\vspace{3mm}

\noindent {\bf Acknowledgments.} The authors thank M.~Dafermos for bringing
the {\em Epilogue}\/ of~\cite{DafermosBlack} to their attention.

\section{Framework and some results from Parts~1 and~2}\label{main-one-two}

\subsection*{The spherically symmetric Einstein-Maxwell-scalar field system with a cosmological constant}

Consider a spherically symmetric spacetime with metric
\begin{equation}\label{metric}
g=-\Omega^2(u,v)\,dudv+r^2(u,v)\,\sigma_{\mathbb{S}^2},
\end{equation}
where $\sigma_{\mathbb{S}^2}$ is the round metric on the 2-sphere. In this case, the Maxwell equations decouple from the system, since they can be immediately solved to yield
\[
F = - \frac{e \, \Omega^2(u,v)}{2 \, r^2(u,v)} \, du \wedge dv.
\]
Here $e$ is a real constant, corresponding to a total electric charge $4 \pi e$, and we have assumed zero magnetic charge without loss of generality.
The remaining equations can then be written as follows:
a wave equation for $r$,
\begin{equation}\label{wave_r} 
\partial_u\partial_vr=-\frac{\Omega^2}{4r} - \frac{\partial_ur\,\partial_vr}{r} + \frac{\Omega^2e^2}{4r^3} + \frac{\Omega^2 \Lambda r}{4},
\end{equation}
a wave equation for the massless scalar field $\phi$,
\begin{equation}\label{wave_phi} 
\partial_u\partial_v\phi=-\,\frac{\partial_ur\,\partial_v\phi+\partial_vr\,\partial_u\phi}{r},
\end{equation}
the Raychaudhuri equation in the $u$ direction,
\begin{equation}\label{r_uu} 
\partial_u\left(\frac{\partial_ur}{\Omega^2}\right)=-r\frac{(\partial_u\phi)^2}{\Omega^2},
\end{equation}
the Raychaudhuri equation in the $v$ direction,
\begin{equation}\label{r_vv} 
\partial_v\left(\frac{\partial_vr}{\Omega^2}\right)=-r\frac{(\partial_v\phi)^2}{\Omega^2},
\end{equation}
and a wave equation for $\ln\Omega$,
 \begin{equation}\label{wave_Omega} 
\partial_v\partial_u\ln\Omega=-\partial_u\phi\,\partial_v\phi-\,\frac{\Omega^2e^2}{2r^4}+\frac{\Omega^2}{4r^2}+\frac{\partial_ur\,\partial_vr}{r^2}.
\end{equation}

\subsection*{The first order system}

Given $r$, $\phi$ and $\Omega$, solutions of the Einstein equations,
we define the following quantities:
\begin{equation}\label{nu_0}
\nu:=\partial_u r,
\end{equation}
\begin{equation}\label{lambda_0}
\lambda:=\partial_v r,
\end{equation}
\begin{equation}\label{bar_rafaeli} 
\omega:=\frac{e^2}{2r}+\frac{r}{2}-\frac{\Lambda}{6}r^3+\frac{2r}{\Omega^2}\nu\lambda,
\end{equation}
\begin{equation}\label{mu_0} 
\mu:=\truemu,
\end{equation}
\begin{equation}\label{theta} 
\theta:=r\partial_v\phi,
\end{equation}
\begin{equation}\label{zeta} 
\zeta:=r\partial_u\phi
\end{equation}
and
\begin{equation}\label{kappa_0} 
 \kappa:=\frac{\lambda}{1-\mu}.
\end{equation}
Notice that we may rewrite~\eqref{bar_rafaeli} as
\begin{equation}\label{omega_sq}
\Omega^2=-\,\frac{4\nu\lambda}{1-\mu}=-4\nu\kappa.
\end{equation}
The Einstein equations imply the first order system for $(r,\nu,\lambda,\varpi,\theta,\zeta,\kappa)$
\begin{eqnarray} 
 \partial_ur&=&\nu\label{r_u},\\
 \partial_vr&=&\lambda\label{r_v},\\
 \partial_u\lambda&=&\nu\kappa\partial_r(1-\mu)\label{lambda_u},\\
 \partial_v\nu&=&\nu\kappa\partial_r(1-\mu),\label{nu_v}\\
 \partial_u\omega&=&\frac 12(1-\mu)\left(\frac\zeta\nu\right)^2\nu,\label{omega_u}\\
 \partial_v\omega&=&\frac 12\frac{\theta^2}{\kappa},\label{omega_v}\\
 \partial_u\theta&=&-\,\frac{\zeta\lambda}{r},\label{theta_u}\\
 \partial_v\zeta&=&-\,\frac{\theta\nu}{r},\label{zeta_v}\\
 \partial_u\kappa&=&\kappa\nu\frac 1r\left(\frac{\zeta}{\nu}\right)^2,\label{kappa_u}
\end{eqnarray}
with the restriction
\begin{equation}\label{kappa_at_u} 
\lambda=\kappa(1-\mu).
\end{equation}
Here by $\partial_r(1-\mu)$ we mean
$$
\partial_r(1-\mu) = \partial_r \left( 1 - \frac{2\omega}{r} + \frac{e^2}{r^2} - \frac{\Lambda}{3}r^2 \right) = \frac{2\omega}{r^2} - \frac{2e^2}{r^3} - \frac{2\Lambda}{3}r.
$$
Under appropriate regularity conditions for the initial data, the system of first order PDE~\eqref{r_u}$-$\eqref{kappa_at_u} also implies the spherically symmetric Einstein equations \eqref{wave_r}$-$\eqref{wave_Omega} (Part 1, Propositions 6.3 and 6.4).

\subsection*{Initial data}

We take the initial data on the $v$ axis to be the data on the event horizon of a subextremal Reissner-Nordstr\"{o}m solution with mass $M$, electric charge $e$ and cosmological constant $\Lambda$.
Therefore we choose initial data as follows:
\begin{equation}\label{RN1} 
\left\{
\begin{array}{lclcl}
 r(u,0)&=&r_0(u)&=&r_+-u,\\
 \nu(u,0)&=&\nu_0(u)&=&-1,\\
 \zeta(u,0)&=&\zeta_0(u),
\end{array}
\right.\qquad{\rm for}\ u\in[0,U],
\end{equation}
where $0 < U < r_+$, and
\begin{equation}\label{RN2} 
\left\{
\begin{array}{lclcl}
 \lambda(0,v)&=&\lambda_0(v)&=&0,\\
 \varpi(0,v)&=&\omega_0(v)&=&M,\\
 \theta(0,v)&=&\theta_0(v)&=&0,\\
 \kappa(0,v)&=&\kappa_0(v)&=&1,
\end{array}
\right.\qquad{\rm for}\ v\in[0,\infty[.
\end{equation}
Here $r_+>0$ is the radius of the event horizon. We assume that $\zeta_0$ is continuous with $\zeta_0(0)=0$,
and denote $M$ by $\varpi_0$.

\subsection*{Well posedness of the first order system and stability of the radius at the Cauchy horizon}

\begin{Thm} [Part 1, Theorem 4.4] \label{existence}
The characteristic initial value problem\/~\eqref{r_u}$-$\eqref{kappa_at_u}
with initial data\/~\eqref{RN1}$-$\eqref{RN2} 
has a unique solution defined on a maximal past set $\mysigma$
containing a neighborhood of $[0,\myumax]\times\{0\}\cup\{0\}\times[0,\infty[$.
\end{Thm}

\begin{Lem} [Part 2, Lemma 3.1] \label{sign} 
Let $(r,\nu,\lambda,\omega,\theta,\zeta,\kappa)$ be the maximal solution of the characteristic initial value problem\/~\eqref{r_u}$-$\eqref{kappa_at_u} with initial 
conditions\/~\eqref{RN1} and\/~\eqref{RN2}.
Then:
\begin{itemize}
\item $\kappa$ is positive;
\item $\nu$ is negative;
\item $\lambda$ is negative on ${\cal P}\setminus\{0\}\times[0,\infty[$;
\item $1 - \mu$ is negative on ${\cal P}\setminus\{0\}\times[0,\infty[$;
\item $r$ is decreasing with both $u$ and $v$;
\item $\omega$ is nondecreasing with both $u$ and $v$.
\end{itemize}
\end{Lem}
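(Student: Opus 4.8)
The plan is to propagate the signs carried by the characteristic data \eqref{RN1}--\eqref{RN2} along the two families of characteristics, exploiting that three of the evolution equations are linear in the unknown whose sign is wanted, and then to carry the one remaining sign off the initial ray $\{v=0\}$ into the bulk by means of the Raychaudhuri equation \eqref{r_vv}.

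First I would dispose of $\kappa>0$ and $\nu<0$. For fixed $v$, equation \eqref{kappa_u} is the linear homogeneous ODE $\partial_u\kappa=\bigl(\nu\,r^{-1}(\zeta/\nu)^2\bigr)\kappa$ in $u$; since $\kappa(0,v)=\kappa_0(v)=1$ this integrates to $\kappa=\exp\!\bigl(\int_0^u\nu\,r^{-1}(\zeta/\nu)^2\,du'\bigr)>0$ (the integrand being locally integrable along $u$-lines of a solution). Symmetrically, for fixed $u$, \eqref{nu_v} is the linear homogeneous ODE $\partial_v\nu=\bigl(\kappa\,\partial_r(1-\mu)\bigr)\nu$ in $v$, and $\nu(u,0)=\nu_0(u)=-1<0$ forces $\nu<0$. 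These two already yield: $\Omega^2=-4\nu\kappa>0$ by \eqref{omega_sq}; $r$ strictly decreasing in $u$ by \eqref{r_u}, hence $r<r_+$ for $u>0$ because $\lambda_0\equiv0$ makes $r\equiv r_+$ on the event horizon $\{u=0\}$; and $\varpi$ nondecreasing in $v$, since $\partial_v\varpi=\tfrac{\theta^2}{2\kappa}\ge0$ by \eqref{omega_v}.

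The remaining assertions all follow once $1-\mu<0$ on $\{u>0\}$: then $\lambda=\kappa(1-\mu)<0$ there by \eqref{kappa_at_u} and $\kappa>0$, so $r$ is strictly decreasing in $v$ for $u>0$ by \eqref{r_v}, and $\partial_u\varpi=\tfrac12(1-\mu)(\zeta/\nu)^2\nu\ge0$ by \eqref{omega_u} (product of the two nonpositive factors $1-\mu$ and $(\zeta/\nu)^2\nu$, the latter since $\nu<0$), so $\varpi$ is nondecreasing in $u$ too. To prove $1-\mu<0$, I would use that by \eqref{r_vv} the quotient $v\mapsto\lambda/\Omega^2$ is nonincreasing for each fixed $u$, its $v$-derivative $-r(\partial_v\phi)^2/\Omega^2$ being $\le0$ because $r>0$ on $\mathcal P$ (from Part~1). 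Hence it suffices to show $\lambda(u,0)<0$ --- equivalently $(1-\mu)(u,0)<0$, since $\kappa>0$ --- for $u\in(0,U]$: the sign then propagates, $\lambda(u,v)/\Omega^2(u,v)\le\lambda(u,0)/\Omega^2(u,0)<0$, giving $\lambda<0$ throughout $\{u>0\}$.

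On $\{v=0\}$ we have $r_0(u)=r_+-u$, $\nu\equiv-1$, $\zeta\equiv\zeta_0$, and \eqref{omega_u} reads $\partial_u\varpi=-\tfrac12(1-\mu)\zeta_0^2$. Writing $1-\mu=Q(r,\varpi)/r^2$ with $Q(r,\varpi):=-\tfrac\Lambda3 r^4+r^2-2\varpi r+e^2$, we have $(1-\mu)(0,0)=Q(r_+,M)/r_+^2=0$, while the $u$-derivative of $(1-\mu)(u,0)$ at $u=0$ equals $-\partial_r(1-\mu)(r_+,\varpi_0)=-\,\ru<0$ by subextremality, so $1-\mu<0$ just to the right of $u=0$. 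Let $u_\dagger:=\sup\{u\in[0,U]:1-\mu<0\text{ on }(0,u)\}>0$. On $(0,u_\dagger)$ one has $\partial_u\varpi=-\tfrac12(1-\mu)\zeta_0^2\ge0$, hence $\varpi(\cdot,0)\ge M$; and, under the standing assumption that the initial ray remains in the reference trapped region (i.e.\ $U<r_+-r_-$), $r_0(u)\in(r_-,r_+)$ for all $u\in(0,U]$. Since $\partial_\varpi Q=-2r<0$ and $r_-<r_+$ are consecutive simple roots of $Q(\cdot,M)$ with $Q(\cdot,M)<0$ strictly between them (true for every sign of $\Lambda$), we conclude, for $0<u\le u_\dagger$,
\[
(1-\mu)(u,0)=\frac{Q\bigl(r_0(u),\varpi(u,0)\bigr)}{r_0(u)^2}\le\frac{Q\bigl(r_0(u),M\bigr)}{r_0(u)^2}<0 ,
\]
which by continuity is incompatible with $u_\dagger<U$; the same estimate at $u=U$ then gives $(1-\mu)(U,0)<0$, so $(1-\mu)(\cdot,0)<0$ on $(0,U]$, as required. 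The hard part is precisely this last, self-improving argument on $\{v=0\}$: a priori one only knows $\partial_u\varpi\ge0$ where $1-\mu\le0$, so the lower bound $\varpi(\cdot,0)\ge M$ and the bound $1-\mu\le Q(\cdot,M)/r^2<0$ must be bootstrapped together, the hypothesis $U<r_+-r_-$ being exactly what keeps $r_0$ strictly between the reference horizons; if one prefers not to assume this, a two-dimensional version of the same bootstrap works, invoking the radius-stability bound of Part~2 at the first point where $1-\mu$ would vanish. Everything else is bookkeeping with the signs already in hand.
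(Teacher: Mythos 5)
The paper itself gives no proof of Lemma~\ref{sign} here (it is recalled from Part~1), so your argument has to stand on its own, and in essence it does: establishing $\kappa>0$ and $\nu<0$ by integrating the linear equations \eqref{kappa_u} (in $u$, from $\kappa_0\equiv1$) and \eqref{nu_v} (in $v$, from $\nu_0\equiv-1$), reducing everything else to the negativity of $1-\mu$ off the event horizon via \eqref{kappa_at_u}, \eqref{omega_u} and \eqref{omega_v}, propagating that negativity in $v$ by the monotonicity of $\lambda/\Omega^2$, and finally proving $(1-\mu)(\cdot,0)<0$ on $]0,U]$ by the bootstrap with $Q(r,\varpi)=r^2(1-\mu)$, $\partial_\varpi Q<0$ and $Q(\cdot,\varpi_0)<0$ on $]r_-,r_+[$, is a correct and natural sign-propagation scheme. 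Two caveats. First, the restriction you flag is genuinely needed and not removable by your suggested fallback: with $\zeta_0\equiv0$ and $U>r_+-r_-$ one has $\varpi\equiv\varpi_0$ and $\kappa\equiv1$ on $v=0$, so $\lambda(u,0)=(1-\mu)(r_0(u),\varpi_0)>0$ wherever $r_0(u)<r_-$, and the third, fourth and fifth bullets fail; hence the hypothesis $r_0(u)>r_-$ (i.e.\ $U<r_+-r_-$) must be read as part of the standing setup inherited from Part~1. Invoking the radius stability of Part~2 at a first vanishing point of $1-\mu$ would be circular, since those estimates are themselves built on the monotonicity properties of this lemma.

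Second, your appeal to the Raychaudhuri equation \eqref{r_vv} is slightly out of order: Proposition~\ref{einstein-2} is a recalled consequence of the first order system under (h1)--(h4), and it is cleaner (and avoids any regularity discussion before the signs are known) to derive the needed inequality directly from the first order system. Since $\Omega^2=-4\nu\kappa$ and $\lambda=\kappa(1-\mu)$, one has $\frac{\lambda}{\Omega^2}=-\frac{1-\mu}{4\nu}$, and \eqref{r_v}, \eqref{nu_v}, \eqref{omega_v}, \eqref{kappa_at_u} give $\partial_v\bigl(\frac{\lambda}{\Omega^2}\bigr)=\frac{\theta^2}{4r\nu\kappa}\le 0$ once $\nu<0$, $\kappa>0$, $r>0$ are known; equivalently one may use the linear equation $\partial_v(1-\mu)=\kappa\,\partial_r(1-\mu)\,(1-\mu)-\frac{\theta^2}{\kappa r}$, whose source is nonpositive, to propagate $(1-\mu)(u,0)<0$ forward in $v$. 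With these two points addressed, your proof is complete and, as far as can be judged from this paper, follows the same circle of ideas as the original one in Part~1.
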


\begin{Thm} [Part 2, Theorem 1.1] \label{r-stability}
There exists $U>0$ such that
$\mysigma$ contains $[0,\myumax]\times[0,\infty[$. Moreover,
$$
\inf_{[0,\myumax]\times[0,\infty[}r>0
$$
and
\begin{equation}\label{r-menos} 
\lim_{u\searrow 0}r(u,\infty)=r_-.
\end{equation}
\end{Thm}
Here $r_->0$ is the radius of the Cauchy horizon of the Reissner-Nordstr\"om reference solution and $$r(u, \infty) = \lim_{v \to \infty} r(u,v)$$ 
(which exists and is decreasing). Similarly, we also define $$\varpi(u, \infty) = \lim_{v \to \infty} \varpi(u,v).$$

Following the same argument as in~\cite[Section~11]{Dafermos2}, Theorem~\ref{r-stability} implies that the spacetime is extendible across the Cauchy horizon with a $C^0$ metric.

\subsection*{Two effects of any nonzero field}

\begin{Thm} [Part 2, Theorem 8.1] \label{rmenos} 
 Suppose that there exists a positive sequence $(u_n)$ converging to~0 such that $\zeta_0(u_n)\neq 0$.
 Then $r(u,\infty)<r_-$ for all $u>0$.
\end{Thm}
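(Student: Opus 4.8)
I would argue by contradiction: suppose $r(u_0,\infty)=r_-$ for some $u_0>0$. Since $\nu<0$ (Lemma~\ref{sign}), the map $u\mapsto r(u,\infty)$ is nonincreasing, and by~\eqref{r-menos} its supremum as $u\searrow 0$ is $r_-$; hence $r(u,\infty)\le r_-$ for all $u>0$, and in fact $r(u,\infty)=r_-$ for all $u\in(0,u_0]$. Since moreover $\lambda<0$ off the $v$-axis, $r(u,\cdot)$ is strictly decreasing, so $r_-<r(u,v)\le r_+$ on $(0,u_0]\times[0,\infty[$. The first thing I would extract is that the mass increases strictly: $\varpi(u,\infty)>\varpi_0$ for every $u\in(0,u_0]$. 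Indeed, along $v=0$ one has $\nu_0\equiv-1$, so~\eqref{omega_u} reads $\partial_u\varpi(\cdot,0)=-\tfrac12(1-\mu)(\cdot,0)\,\zeta_0^2\ge0$ by Lemma~\ref{sign}, with strict inequality wherever $\zeta_0\ne0$; the hypothesis together with the continuity of $\zeta_0$ provides an open subinterval of $(0,u)$ on which $\zeta_0\ne0$, so integrating gives $\varpi(u,0)>\varpi_0$, and then $\partial_v\varpi=\tfrac12\theta^2/\kappa\ge0$ forces $\varpi(u,\infty)\ge\varpi(u,0)>\varpi_0$.

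The heart of the argument would be to prove that for every $u\in(0,u_0]$ the limit $\nu(u,\infty):=\lim_{v\to\infty}\nu(u,v)$ exists, is finite, and is strictly negative. The key tool is the identity, immediate from~\eqref{mu_0},
\[
\partial_r(1-\mu)=-\frac{1-\mu}{r}+h(r),\qquad h(r):=\frac1r-\frac{e^2}{r^3}-\Lambda r,
\]
which, combined with $\kappa(1-\mu)=\lambda=\partial_vr$ (see~\eqref{kappa_at_u},~\eqref{r_v}), turns~\eqref{nu_v} into $\partial_v\ln(-\nu)=\kappa\,\partial_r(1-\mu)=-\partial_v\ln r+\kappa\,h(r)$. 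Integrating from $v=0$, where $\nu=-1$ and $r=r_+-u$, yields $\ln(-\nu(u,v))=\ln\frac{r_+-u}{r(u,v)}+\int_0^v\kappa\,h(r)\,dv'$. As $r(u,v)\to r_->0$ and $h$ is bounded on $[r_-,r_+]$, it remains only to show $\kappa(u,\cdot)\in L^1([0,\infty[)$, for then $\ln(-\nu(u,v))$ has a finite limit and $\nu(u,\infty)\in(-\infty,0)$. For the integrability of $\kappa$ I would use, on the one hand, $\int_0^\infty|\lambda|\,dv=(r_+-u)-r_-<\infty$, and on the other hand, by continuity of $1-\mu$ and the first step, $(1-\mu)(u,v)\to(1-\mu)(r_-,\varpi(u,\infty))=-\tfrac{2(\varpi(u,\infty)-\varpi_0)}{r_-}<0$ (to be read as $-\infty$ if mass inflation occurs at $u$); hence $|1-\mu|$ is bounded below by a positive constant for $v$ large, so $\kappa=|\lambda|/|1-\mu|$ is integrable near $v=\infty$ and bounded on compact $v$-intervals.

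Finally, I would fix $0<u_1<u_2\le u_0$ and integrate~\eqref{r_u}: $r(u_2,v)-r(u_1,v)=\int_{u_1}^{u_2}\nu(u,v)\,du$, whose left-hand side tends to $r(u_2,\infty)-r(u_1,\infty)=0$. Applying Fatou's lemma to $-\nu(\cdot,v)\ge0$ and using the pointwise convergence $\nu(u,v)\to\nu(u,\infty)$,
\[
\int_{u_1}^{u_2}\bigl(-\nu(u,\infty)\bigr)\,du\le\liminf_{v\to\infty}\int_{u_1}^{u_2}\bigl(-\nu(u,v)\bigr)\,du=\liminf_{v\to\infty}\bigl(r(u_1,v)-r(u_2,v)\bigr)=0,
\]
which contradicts $-\nu(u,\infty)>0$ on $(u_1,u_2)$. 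Hence $r(u,\infty)<r_-$ for all $u>0$. I expect the main obstacle to be precisely the step that yields $\nu(u,\infty)\in(-\infty,0)$: one has to control $\kappa$ in $L^1$ in $v$ and, most importantly, keep the estimate uniform across the two possibilities $\varpi(u,\infty)<\infty$ and $\varpi(u,\infty)=\infty$ — which is exactly why peeling off the factor $1-\mu$ through the identity $\partial_v\ln(-\nu)=-\partial_v\ln r+\kappa\,h(r)$, rather than bounding $\kappa\,\partial_r(1-\mu)$ directly, is essential.
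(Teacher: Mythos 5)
Your argument is correct, and it is self-contained given the facts this paper quotes from Parts 1 and 2: Lemma~\ref{sign}, Theorem~\ref{r-stability} together with~\eqref{r-menos}, and the first order system. Each step checks out: the strict mass increase along $v=0$ follows from~\eqref{omega_u} with $\nu_0\equiv-1$ and $(1-\mu)(\cdot,0)<0$, and is exactly what makes $(1-\mu)(u,v)$ converge (under the contradiction hypothesis $r(u,\infty)=r_-$) to $-\frac{2(\varpi(u,\infty)-\varpi_0)}{r_-}<0$, or to $-\infty$, so that $\kappa=|\lambda|/|1-\mu|$ is integrable in $v$; the splitting $\partial_r(1-\mu)=-(1-\mu)/r+h(r)$ turns~\eqref{nu_v} into $\partial_v\ln(-\nu)=-\partial_v\ln r+\kappa h(r)$ and yields $\nu(u,\infty)\in(-\infty,0)$ uniformly over the mass-inflation and no-mass-inflation alternatives; and the Fatou step is legitimate since $-\nu>0$ and $\int_{u_1}^{u_2}(-\nu)(u,v)\,du=r(u_1,v)-r(u_2,v)\to 0$. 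One caveat on the comparison you were asked about: this paper does not prove Theorem~\ref{rmenos} at all — it is imported from Part~2 — so there is no proof here to set your argument against line by line. The internal evidence, namely the remark in the proof of Proposition~\ref{CC} that ``from the proof of Theorem~\ref{rmenos} (see Part~2) we have $\hat\nu(u,\hat V)\le -c<0$'', indicates that the original proof also pivots on the strict negativity of $\nu$ at the Cauchy horizon, so your route appears to be the same in spirit, with your particular packaging (contradiction hypothesis, the $h(r)$ splitting, Fatou) possibly differing in detail. A minor simplification: $\int_0^\infty|\lambda|(u,v)\,dv=r(u,0)-r(u,\infty)$ is finite with no contradiction hypothesis needed, since $\inf r>0$ by Theorem~\ref{r-stability}.
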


\begin{Lem} [Part 2, Lemma 8.2] \label{l-kappa} 
 Suppose that there exists a positive sequence $(u_n)$ converging to~0 such that $\zeta_0(u_n)\neq 0$.
Then
\begin{equation}\label{integral-k} 
\int_{0}^\infty\kappa(u,v)\,dv<\infty\ {\rm for\ all}\ u>0.
\end{equation}
\end{Lem}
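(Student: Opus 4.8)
The plan is to use the radius as a coordinate along the outgoing ray $\{u\}\times[0,\infty[$, turning the integral over $v$ into an integral over $r$, and then to read off convergence from the size of $1-\mu$ near the Cauchy horizon.

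Fix $u>0$. By Lemma~\ref{sign}, $\lambda=\partial_vr<0$ everywhere on $\{u\}\times[0,\infty[$, so $v\mapsto r(u,v)$ is a $C^1$ decreasing diffeomorphism of $[0,\infty[$ onto $(r(u,\infty),r_+-u]$. Since $\kappa=\lambda/(1-\mu)$ by~\eqref{kappa_0}, the substitution $r=r(u,v)$, $dr=\lambda\,dv$, gives
\begin{equation*}
\int_0^\infty\kappa(u,v)\,dv=\int_0^\infty\frac{\partial_vr}{1-\mu}(u,v)\,dv=\int_{r(u,\infty)}^{r_+-u}\frac{dr}{(\mu-1)(u,v(r))}.
\end{equation*}
By Lemma~\ref{sign} the integrand is continuous and strictly positive on $(r(u,\infty),r_+-u]$, so the only possible source of divergence is the lower endpoint $r\searrow r(u,\infty)$, that is, $v\to\infty$. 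Hence it suffices to show
\begin{equation*}
\liminf_{v\to\infty}(\mu-1)(u,v)>0,\qquad\text{equivalently}\qquad(1-\mu)(u,\infty)<0,
\end{equation*}
where $(1-\mu)(u,\infty):=\lim_{v\to\infty}(1-\mu)(u,v)$ exists (monotone convergence of $r$ and $\varpi$), with the convention that it is $-\infty$ when $\varpi(u,\infty)=\infty$; in that (mass‑inflation) case the integrand even tends to $0$ and there is nothing to prove.

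To obtain $(1-\mu)(u,\infty)<0$, write $\mu-1=\frac{2(\varpi-M)}{r}-g(r)$, where $g(r)=1-\frac{2M}{r}+\frac{e^2}{r^2}-\frac\Lambda3r^2$ is $1-\mu$ for the reference Reissner–Nordstr\"om solution, so $g(r_+)=g(r_-)=0$ and $g>0$ on $(0,r_-)$. By Theorem~\ref{rmenos}, $r(u,\infty)<r_-$, so for $v$ large the ray lies in the region $\{g>0\}$; combined with $\varpi$ nondecreasing in $v$ and $\varpi(u,0)\geq M$ (Lemma~\ref{sign}) this already yields $(1-\mu)(u,\infty)\leq0$. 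The content is the strict inequality, i.e. that $\varpi(u,\infty)>M+\tfrac12\,r(u,\infty)\,g(r(u,\infty))$. For this one fixes $v_1$, uses~\eqref{omega_v} to write $\varpi(u,\infty)=\varpi(u,v_1)+\tfrac12\int_{v_1}^\infty\frac{\theta^2}{\kappa}\,dv$, uses $\mu-1>0$ at $v_1$ to get $\varpi(u,v_1)>M+\tfrac12\,r(u,v_1)g(r(u,v_1))$, and then has to beat the (small) difference $r(u,v_1)g(r(u,v_1))-r(u,\infty)g(r(u,\infty))$ by a lower bound on the flux $\int_{v_1}^\infty\theta^2/\kappa\,dv$. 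This is precisely where the hypothesis that $\zeta_0$ does not vanish identically near $u=0$ must be used quantitatively, through the decay estimates for $\theta$, $\zeta$ and $\nu$ along the Cauchy horizon obtained in Part~2 (equivalently, through the precise rate at which $r(u,v)$ and $\varpi(u,v)$ converge).

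I expect this last step—upgrading $(1-\mu)(u,\infty)\leq0$ to $<0$—to be the main obstacle. The first‑order system~\eqref{r_u}–\eqref{kappa_at_u} by itself is consistent with the borderline scenario $(\mu-1)(u,v)\to0$, $\kappa(u,v)\sim c/v$ and $\int_0^\infty\kappa\,dv=\infty$, so purely algebraic manipulations of the equations cannot close the argument; the finer asymptotic control of the field from Part~2 is genuinely needed to exclude it. The change of variables and the reduction to the behaviour of $1-\mu$ at $v=\infty$, by contrast, are routine.
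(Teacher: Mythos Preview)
This lemma is proved in Part~2 of the trilogy and is only quoted in the present paper, so there is no self-contained proof here to compare against. What Part~3 does reveal (see the opening of the proof of Theorem~\ref{inflation} in Appendix~\ref{prova-massa}, and the sentence preceding Proposition~\ref{fechado-0}) is that the Part~2 argument proceeds by a trichotomy on the renormalized mass at the Cauchy horizon---Case~1: $\varpi(u,\infty)=\infty$; Case~2: $\lim_{u\searrow 0}\varpi(u,\infty)>\varpi_0$; Case~3: $\lim_{u\searrow 0}\varpi(u,\infty)=\varpi_0$---and that along the way one establishes $(1-\mu)(U,\infty)<0$ for suitable $U$.

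Your reduction via the substitution $v\mapsto r(u,v)$ to the sufficient condition $(1-\mu)(u,\infty)<0$ is correct, and is in the spirit of what the paper's hints suggest. But two things go wrong in your write-up.

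First, your argument for the weak inequality $(1-\mu)(u,\infty)\le 0$ is confused. That inequality is immediate from Lemma~\ref{sign}, which gives $1-\mu<0$ for every finite $v>0$; passing to the limit is all that is needed. Invoking Theorem~\ref{rmenos} and the sign of $g$ does not help: once $r<r_-$ you have $g(r)>0$, so your decomposition $\mu-1=\tfrac{2(\varpi-M)}{r}-g(r)$ expresses $\mu-1$ as a difference of two nonnegative quantities, and nothing in that line yields $\mu-1\ge 0$.

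Second---and this is the real gap---you correctly identify that the strict inequality is the entire content, and you do not prove it. Your proposed route (lower-bound the flux $\int_{v_1}^\infty\theta^2/\kappa\,dv$ to beat the defect $r\,g(r)\big|_{v=\infty}^{v=v_1}$) is only a sketch with no quantitative input, and you yourself concede that the first-order system alone is consistent with the borderline scenario $(1-\mu)(u,\infty)=0$, $\kappa\sim c/v$. The actual exclusion of this scenario in Part~2 is not a short manipulation; it uses the three-case analysis above together with the detailed estimates developed there (for instance, in the finite-mass cases one controls $\nu$ at the Cauchy horizon and combines this with the monotonicity of $\nu/(1-\mu)$ from~\eqref{ray_v_bis}, as in the last paragraph of the proof of Proposition~\ref{fechado-0}). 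So your proposal isolates the correct target but does not constitute a proof.
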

This lemma implies that the affine parameter of an outgoing null geodesic is finite at the Cauchy horizon.

\subsection*{Well posedness for the backwards problem}

In Section~\ref{CH}, we will extend the solutions of Einstein's equations beyond the Cauchy horizon.
For this we will need to solve a backwards problem, already discussed in Part 1.
The initial conditions will be prescribed as follows:
$$
({\rm I}_u)\qquad\left\{
\begin{array}{lcl}
 r(u,0)&=&r_0(u),\\
 \nu(u,0)&=&\nu_0(u),\\
 \zeta(u,0)&=&\zeta_0(u),
\end{array}
\right.\qquad{\rm for}\ u\in\left]0,U\right],
$$
$$
({\rm I}^v)\qquad\left\{
\begin{array}{lcl}
 \lambda(U,v)&=&\lambda_0(v),\\
 \omega(U,v)&=&\omega_0(v),\\
 \theta(U,v)&=&\theta_0(v),\\
 \kappa(U,v)&=&\kappa_0(v),
\end{array}
\right.\qquad{\rm for}\ v\in[0,V].
$$
We let $$\tilde{r}_0(v)=r_0(U)+\int_0^v\lambda_0(v')\,dv',$$ for $v\in[0,V]$.
We assume the regularity conditions:
\begin{align*}
\text{(h1)} \qquad & \text{the functions } \nu_0, \zeta_0, \lambda_0, \theta_0 \text{ and } \kappa_0 \text{ are continuous, and} \\
& \text{the functions } r_0 \text{ and } \omega_0 \text{ are continuously differentiable.}
\end{align*}
We assume the sign conditions:
$$
\hspace{-4.2cm}
\text{(h2)} \qquad 
\left\{
\begin{array}{ll}
r_0(u)>0&{\rm for}\ u\in\left]0,U\right],\\
\tilde{r}_0(v)>0&{\rm for}\ v\in[0,V],\\
\nu_0(u)<0&{\rm for}\ u\in\left]0,U\right],\\
\kappa_0(v)>0&{\rm for}\ v\in[0,V].
\end{array}\right.
$$
We assume the three compatibility conditions:
\begin{align} 
& r_0'=\nu_0,\label{initial_rp}\\
\text{(h3)} \qquad & \omega_0'=\frac 12\frac{\theta_0^2}{\kappa_0},\label{initial_v}\\
& \lambda_0=\kappa_0\left(\muuz\right).\label{kappa_at_zero} \hspace{3cm}
\end{align}

\begin{Thm} [Part 1, Theorem 4.5] \label{existence-alt}
The characteristic initial value problem
with initial conditions\/~{\rm (I$_u$)} and\/~{\rm (I$^v$)} satisfying\/ {\rm (h1)}$-${\rm (h3)}
has a unique solution defined on a maximal reflected past set\footnote{By reflected past set we mean a set $\mathcal{R}$ such that if $(u,v)\in \mathcal{R}$ then $[u,U]\times[0,v]\subset\mathcal{R}$.}
$\mathcal{R}$ containing a neighborhood of $]0,U]\times\{0\}\cup\{U\}\times[0,V]$.
\end{Thm}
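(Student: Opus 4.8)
The plan is to establish Theorem~\ref{existence-alt} by reducing the backwards characteristic problem to a forwards one via a change of variables, and then invoking the existence theory already developed in Part~1 (which underlies Theorem~\ref{r-stability}). First I would record the simple observation that the backwards problem in the $v$-direction is, after the reflection $v \mapsto V - v$ combined with a sign change in the relevant equations, of exactly the same structural type as the forward problem: the equations \eqref{r_u}--\eqref{kappa_u} together with the constraint \eqref{kappa_at_u} are preserved under this reflection up to signs that can be absorbed into the unknowns. Concretely, one sets $\hat{v} = V - v$, defines reflected unknowns $\hat{r}(u,\hat v) = r(u, V-\hat v)$, etc., and checks that $(\hat r, \hat\nu, -\hat\lambda, \hat\varpi, -\hat\theta, \hat\zeta, \hat\kappa)$ solves a system of the same form on the reflected domain. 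The point of Theorem~\ref{existence-alt} is that the data are now prescribed on two transversal characteristics $]0,U]\times\{0\}$ and $\{U\}\times[0,V]$ meeting at the corner $(U,0)$, rather than on $\{0\}\times[0,\infty[$ and $[0,U]\times\{0\}$, so the solution is produced on a \emph{reflected} past set $\mathcal{R}$ rather than an ordinary past set.

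The key steps, in order, are: (1) verify that the compatibility conditions (h3) are precisely the constraints needed so that the prescribed data on the two characteristics are consistent at the corner $(U,0)$ — this is where \eqref{initial_rp}, \eqref{initial_v} and \eqref{kappa_at_zero} enter, matching $r_0' = \nu_0$, the Raychaudhuri/mass relation \eqref{omega_v}, and the constraint \eqref{kappa_at_u} at $v=0$; (2) set up the iteration scheme (Picard iteration on the integral form of \eqref{r_u}--\eqref{kappa_u}) on a small rectangle $[U-\delta, U]\times[0,\delta]$, exactly as in Part~1, using (h1)--(h2) to guarantee the denominators $r$ and $\nu$ and the quantity $1-\mu$ stay bounded away from their bad values on a possibly smaller neighborhood; (3) prove local existence and uniqueness on this rectangle by the usual contraction-mapping argument, noting that the sign conditions (h2) are open and hence persist on a neighborhood of the initial characteristics; (4) extend to the maximal reflected past set $\mathcal{R}$ by a standard Zorn's-lemma / patching argument, using that the solution on any two overlapping reflected past sets agrees by local uniqueness; (5) confirm the constraint \eqref{kappa_at_u} propagates — i.e. if it holds on the initial characteristics (which is (h3), \eqref{kappa_at_zero}, plus \eqref{RN2}-type data at $u=U$), then \eqref{lambda_u}, \eqref{nu_v}, \eqref{omega_v} force it to hold on all of $\mathcal{R}$, so that solutions of the first-order system genuinely correspond to solutions of the Einstein equations.

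The main obstacle I expect is bookkeeping rather than a deep difficulty: one must be careful about which quantities are prescribed where, and that the reflected system's nonlinearity is still locally Lipschitz on the relevant region — in particular that $\kappa$ and $1/r$ and $1/\nu$ (equivalently the factor $(\zeta/\nu)^2$ appearing in \eqref{omega_u} and \eqref{kappa_u}) remain controlled. Since $\zeta_0$ is only assumed continuous (not differentiable), one cannot differentiate the data, so the iteration must be run purely at the level of the integral equations; this is exactly the setting of Part~1, so the estimates transfer. A secondary point requiring care is the geometry of a ``reflected past set'': one must define it so that the domain of dependence arguments work for data on transversal rays with opposite time-orientations in the two null directions, but again this was already handled in Part~1, so I would simply cite that construction and verify its hypotheses hold here. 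I would close by remarking that the neighborhood of $]0,U]\times\{0\}\cup\{U\}\times[0,V]$ in the statement is obtained by taking the union of the local rectangles along the two characteristics, whose existence is uniform on compact subsets because of the open sign conditions (h2).
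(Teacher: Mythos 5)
You should first note that this paper does not actually prove Theorem~\ref{existence-alt}: it is quoted from Part~1 (``a backwards problem, already discussed in Part 1''), so there is no proof here to compare against line by line. Judged on its own terms, the bulk of your outline --- corner compatibility at $(U,0)$ coming from (h3), i.e.\ \eqref{initial_rp}, \eqref{initial_v} and \eqref{kappa_at_zero}; Picard iteration on the integral form of \eqref{r_u}$-$\eqref{kappa_u} on small rectangles anchored to the two characteristics, using (h1)--(h2) to keep $r$, $\nu$ and the denominators in \eqref{omega_u}, \eqref{kappa_u} under control; local uniqueness; patching to a maximal reflected past set; and propagation of the constraint \eqref{kappa_at_u} off the ray $u=U$ --- is exactly the expected Part-1-style argument, and your closing remark that the neighborhood of $]0,U]\times\{0\}\cup\{U\}\times[0,V]$ need only be uniform on compact subsets of $]0,U]$ correctly handles the open endpoint at $u=0$.

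The one concrete error is in your reduction-by-reflection step: the problem is backwards in $u$, not in $v$. The data sit on $\{U\}\times[0,V]$ and $]0,U]\times\{0\}$, and the solution is to be built towards decreasing $u$ and increasing $v$. Reflecting $v\mapsto V-v$ sends the $v=0$ data to the top edge and leaves the $u=U$ data in place, so you are still facing a terminal-corner configuration, not the forward one; the correct change of variable is $u\mapsto U-u$, under which $(r,-\nu,\lambda,\varpi,\theta,-\zeta,\kappa)$ satisfies a system of the same form with data on the two initial edges (the system is form-invariant under this reflection, but not under $u\leftrightarrow v$, since $\kappa$ is evolved only in $u$ and the constraint \eqref{kappa_at_u} singles out the $v$-direction). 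Even after the correct reflection you cannot invoke the forward theorem verbatim, because the reflected data has $\partial_u r>0$ while the forward statement carries the sign hypothesis $\nu_0<0$ (tied to $\Omega^2=-4\nu\kappa>0$); one must either observe that the local existence proof only uses $\nu\neq 0$, $r>0$, $\kappa>0$, or simply run the direct iteration you also describe --- which is what Part~1 does. With the reflection corrected or dropped in favour of that direct argument, your proposal goes through.
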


\subsection*{Retrieving the Einstein equations from the first order system}

We consider the additional regularity condition
\[
\text{(h4)} \qquad \nu_0, \kappa_0 \text{ and } \lambda_0 \text{ are continuously differentiable.} \hspace{2cm}
\]

\begin{Lem} [Part 1, Proposition 6.2]  \label{regular} 
Suppose that $(r,\nu,\lambda,\omega,\theta,\zeta,\kappa)$ is the solution of the characteristic initial value problem, or the backwards problem,
with initial data satisfying\/ {\rm (h1)} to\/ {\rm (h4)}.
Then the function $r$ is $C^2$, and $\kappa$ is $C^1$.
\end{Lem}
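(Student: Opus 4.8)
The plan is to establish the regularity of $r$ and $\kappa$ by a bootstrap argument, starting from the regularity already known for the first-order quantities and successively differentiating the integrated form of the system. First I would recall from Part~1 (well posedness) that under (h1)--(h3) the solution $(r,\nu,\lambda,\omega,\theta,\zeta,\kappa)$ exists and all seven functions are continuous on $\mysigma$ (respectively $\mathcal{R}$), with $\kappa>0$, $\nu<0$, $1-\mu<0$, so that $\Omega^2=-4\nu\kappa$ is continuous and positive, hence locally bounded away from zero. This already gives that $r$ is $C^1$ (its partial derivatives are $\nu$ and $\lambda$, both continuous) and that the mixed derivatives $\partial_u\partial_v r$, $\partial_v\partial_u r$ exist and equal the right-hand side of~\eqref{wave_r}; but for the stronger claim $r\in C^2$ one must control $\partial_u\nu$, $\partial_v\lambda$ as well, which is exactly where the extra hypothesis (h4) enters.

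The key steps, in order, are: (i) Use~\eqref{nu_v}, written in integrated form along $v$, namely $\nu(u,v)=\nu_0(u)\exp\!\left(\int_0^v \kappa(u,v')\,\partial_r(1-\mu)(r(u,v'),\varpi(u,v'))\,dv'\right)$ — valid since the equation is linear in $\nu$ — together with the continuity of $\kappa$, $r$, $\varpi$ and the smoothness of $(r,\varpi)\mapsto\partial_r(1-\mu)=\frac{2\varpi}{r^2}-\frac{2e^2}{r^3}-\frac{2\Lambda}{3}r$ away from $r=0$, to conclude that $\nu$ is continuous and, since now $\partial_v\nu$ is a continuous function of $(u,v)$, that $\partial_v\nu$ is continuous. (ii) Differentiate the defining relation $r(u,v)=r_0(u)+\int_0^v\lambda(u,v')\,dv'$ in $u$: since $\partial_u\lambda=\nu\kappa\partial_r(1-\mu)$ is continuous, $\partial_u r=\nu$ has a continuous $v$-derivative $\partial_v\nu$ and a continuous $u$-derivative $\partial_u\nu$; the latter requires (h4) to handle the initial term $r_0'=\nu_0$ and $\partial_u\kappa$, $\partial_u\lambda$ along $v=0$. (iii) Symmetrically, use~\eqref{lambda_u} and the integrated form of $\kappa$ from~\eqref{kappa_u}, $\kappa(u,v)=\kappa_0(v)\exp\!\left(\int_0^u \frac{1}{r}\frac{\zeta^2}{\nu}(u',v)\,du'\right)$ (for the backwards problem, integrating from $U$ instead of $0$), noting that $\frac{\zeta^2}{\nu}$ is continuous since $\nu$ is bounded away from zero, to get $\kappa\in C^1$; here $\kappa_0\in C^1$ from (h4) is needed. (iv) Feed $\kappa\in C^1$, $\nu\in C^1$ back into~\eqref{lambda_u}, \eqref{nu_v} to upgrade $\lambda$ and $\nu$ to $C^1$, so that $\partial_u\partial_v r=\partial_v\nu$ and $\partial_v\partial_u r=\partial_u\lambda$ are both continuous; combined with $\partial_u r=\nu\in C^1$, $\partial_v r=\lambda\in C^1$ this gives $r\in C^2$.

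The subtle point — and the step I expect to be the main obstacle — is the consistency and continuity of the \emph{mixed} second derivatives of $r$: a priori one only knows $\partial_u\partial_v r$ and $\partial_v\partial_u r$ separately exist, and one must check they agree and are jointly continuous in $(u,v)$. This is where the first-order system is genuinely used: equations~\eqref{lambda_u} and~\eqref{nu_v} have the \emph{same} right-hand side $\nu\kappa\,\partial_r(1-\mu)$, so $\partial_u\lambda=\partial_v\nu$ identically, which is precisely the equality of mixed partials, and its continuity follows once $\nu$, $\kappa$, $r$, $\varpi$ are known to be continuous with $r$ locally bounded below. A secondary technical nuisance is that for the backwards problem on $\mathcal{R}$ the domain is a reflected past set rather than a product, so the integrations in steps (i)--(iii) must be performed along the characteristics that actually lie in $\mathcal{R}$ and one invokes Theorem~\ref{existence-alt} to know such characteristics reach the initial hypersurfaces; but this is a matter of bookkeeping, not of new ideas. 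Finally, one should remark that $\omega$ need only be $C^1$ here (it is, by~\eqref{omega_u}--\eqref{omega_v}, since the right-hand sides are continuous), which is all that is required for $\partial_r(1-\mu)$ to be $C^0$ and hence for the argument to close; no regularity of $\phi$ beyond what~\eqref{theta},~\eqref{zeta} give is needed for this lemma.
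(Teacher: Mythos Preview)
The paper (Part~3) does not contain a proof of this lemma; it is quoted from Part~1 of the trilogy as part of the framework section, so there is no proof here to compare against directly. Your approach is the natural one and is essentially correct: obtain the ``transverse'' derivatives $\partial_v\nu$, $\partial_u\lambda$, $\partial_u\kappa$ straight from the system, and recover the ``longitudinal'' derivatives $\partial_u\nu$, $\partial_v\kappa$ by differentiating the integrated exponential formulas, using (h4) for the $C^1$ regularity of $\nu_0$, $\kappa_0$, $\lambda_0$.

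One expository wrinkle: your step (ii) as written is slightly circular. Differentiating $r=r_0+\int_0^v\lambda\,dv'$ in $u$ gives $\nu=\nu_0+\int_0^v\nu\kappa\,\partial_r(1-\mu)\,dv'$, and differentiating this once more in $u$ would require $\partial_u\nu$ inside the integral. The fix is exactly the exponential form you already wrote in step (i), $\nu(u,v)=\nu_0(u)\exp\bigl(\int_0^v\kappa\,\partial_r(1-\mu)\,dv'\bigr)$: the $u$-derivative of the integrand involves only $\partial_u\kappa$, $\partial_u r=\nu$, $\partial_u\varpi$, all continuous from the system, so no bootstrap is needed. Likewise, to avoid any ordering issue, establish $\kappa\in C^1$ first (its $v$-derivative, from $\kappa(u,v)=\kappa_0(v)\exp\bigl(\int\frac{\zeta^2}{r\nu}\,du'\bigr)$, needs only $\partial_v\zeta$, $\partial_v r=\lambda$, $\partial_v\nu$, all supplied by the equations), then get $\partial_v\lambda$ cleanly from $\lambda=\kappa(1-\mu)$ rather than from~\eqref{lambda_u}. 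With these adjustments the argument closes without circularity and yields $r\in C^2$, $\kappa\in C^1$ as claimed.
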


\begin{Prop} [Part 1, Proposition 6.3] \label{einstein-2} 
Under the same hypotheses, the functions $r$, $\phi$ and $\Omega$ satisfy the spherically symmetric Einstein equations~\eqref{wave_r}, \eqref{wave_phi}, \eqref{r_uu} and~\eqref{r_vv}.
\end{Prop}

\begin{Prop} [Part 1, Proposition 6.4] \label{einstein-3} 
Under the same hypotheses, the first order system~\eqref{r_u}$-$\eqref{kappa_at_u} implies \eqref{wave_Omega}.
Since equations~\eqref{wave_r}$-$\eqref{r_vv} imply the first order system,
equations~\eqref{wave_r}$-$\eqref{r_vv} also imply~\eqref{wave_Omega}.
\end{Prop}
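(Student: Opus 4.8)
The plan is to prove \eqref{wave_Omega} by a direct computation starting from the relation $\Omega^2=-4\nu\kappa$ in \eqref{omega_sq}. First note that this makes $2\ln\Omega=\ln 4+\ln(-\nu)+\ln\kappa$ meaningful: equations \eqref{nu_v} and \eqref{kappa_u} are linear homogeneous transport equations for $\nu$ (in $v$) and for $\kappa$ (in $u$), so with $\nu_0<0$ and $\kappa_0>0$ — see also Lemma~\ref{sign} in the forward case — neither $\nu$ nor $\kappa$ vanishes and $-4\nu\kappa>0$. The key observation is that $\partial_v\ln(-\nu)=\frac{\partial_v\nu}{\nu}=\kappa\,\partial_r(1-\mu)$ by \eqref{nu_v}, and $\partial_u\ln\kappa=\frac{\partial_u\kappa}{\kappa}=\frac1r\bigl(\frac\zeta\nu\bigr)^2\nu$ by \eqref{kappa_u}, so both of these are expressed purely in terms of the first order unknowns. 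By Lemma~\ref{regular} (which gives $r\in C^2$, hence $\nu=\partial_ur\in C^1$, and $\kappa\in C^1$) one checks that the mixed second derivatives appearing below exist and are continuous, so their order is immaterial by Schwarz's theorem and
\[
2\,\partial_v\partial_u\ln\Omega=\partial_u\bigl[\kappa\,\partial_r(1-\mu)\bigr]+\partial_v\Bigl[\tfrac1r\bigl(\tfrac\zeta\nu\bigr)^2\nu\Bigr].
\]
Differentiating $\partial_v\ln(-\nu)$ and $\partial_u\ln\kappa$ here, rather than $\partial_u\ln(-\nu)$ — which would bring in $\partial_{uu}r$ and then a third derivative of $r$ that is not available — is what keeps the computation at first order.

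Next I would expand the two terms. For the first, $\partial_u[\kappa\,\partial_r(1-\mu)]=(\partial_u\kappa)\,\partial_r(1-\mu)+\kappa\,\partial_u[\partial_r(1-\mu)]$, where $\partial_u\kappa$ is given by \eqref{kappa_u} and $\partial_u[\partial_r(1-\mu)]$ is computed by the chain rule from $\partial_ur=\nu$, from $\partial_u\varpi$ in \eqref{omega_u}, and from the explicit expressions $\partial_r(1-\mu)=\frac{2\varpi}{r^2}-\frac{2e^2}{r^3}-\frac{2\Lambda}{3}r$ and its $r$-derivative $-\frac{4\varpi}{r^3}+\frac{6e^2}{r^4}-\frac{2\Lambda}{3}$. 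For the second, $\partial_v\bigl(\frac{\zeta^2}{r\nu}\bigr)$ follows from \eqref{zeta_v}, from $\partial_vr=\lambda$, and from \eqref{nu_v}. Adding the two and simplifying with the restriction \eqref{kappa_at_u} in the form $\lambda=\kappa(1-\mu)$, I expect the two contributions proportional to $\kappa\zeta^2\,\partial_r(1-\mu)/(r\nu)$ and the two proportional to $\lambda\zeta^2/(r^2\nu)$ to cancel, leaving the compact identity
\[
2\,\partial_v\partial_u\ln\Omega=\kappa\nu\,\partial_r\bigl[\partial_r(1-\mu)\bigr]-\frac{2\zeta\theta}{r^2}.
\]

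It then remains to match this with twice the right-hand side of \eqref{wave_Omega}. Using $\partial_u\phi\,\partial_v\phi=\zeta\theta/r^2$ (from \eqref{theta}, \eqref{zeta}) and $\partial_ur\,\partial_vr=\nu\lambda$, together with $\Omega^2=-4\nu\kappa$ and once more $\lambda=\kappa(1-\mu)$, the claim \eqref{wave_Omega} reduces to the elementary identity
\[
\partial_r\bigl[\partial_r(1-\mu)\bigr]=\frac{2(1-\mu)}{r^2}-\frac{2}{r^2}+\frac{4e^2}{r^4},
\]
which holds because both sides equal $\frac{6e^2}{r^4}-\frac{4\varpi}{r^3}-\frac{2\Lambda}{3}$. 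This establishes the first sentence. The second sentence then follows by composing this with the implication, proved in Part~1, that \eqref{wave_r}--\eqref{r_vv} imply the first order system. I expect the only real difficulty to be organizational: arranging the differentiation so that the cancellations are transparent and so that every second-order derivative written down is genuinely well defined — it is precisely this last requirement that dictates which logarithmic derivatives to differentiate.
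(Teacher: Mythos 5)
Your computation is correct: I checked the expansion of $\partial_u\bigl[\kappa\,\partial_r(1-\mu)\bigr]+\partial_v\bigl[\tfrac{\zeta^2}{r\nu}\bigr]$, the cancellation of the $\tfrac{\kappa\zeta^2}{r\nu}\partial_r(1-\mu)$ and $\tfrac{\lambda\zeta^2}{r^2\nu}$ terms, and the final identity $\partial_r\bigl[\partial_r(1-\mu)\bigr]=\tfrac{2(1-\mu)}{r^2}-\tfrac{2}{r^2}+\tfrac{4e^2}{r^4}$, and your handling of regularity (via Lemma~\ref{regular}) and of the nonvanishing of $\nu$ and $\kappa$ is sound. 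This paper does not reproduce the proof (Proposition~\ref{einstein-3} is recalled from Part~1), but your argument via $2\ln\Omega=\ln 4+\ln(-\nu)+\ln\kappa$ and the first order system is essentially the standard derivation used there, so it matches the intended approach.
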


\subsection*{The partition of spacetime into four regions}
In Part~2,
we divide $[0,U]\times[0,\infty[$ into four disjoint regions, separated by three curves, $\cg_{\ckrp}$, $\cg_{\ckrm}$ and $\gam$, where different estimates can be obtained (see Appendix~\ref{Costa}).
Next we explain how these curves are constructed.

\begin{center}
\begin{turn}{45}
\begin{psfrags}
\psfrag{u}{{\tiny $u$}}
\psfrag{v}{{\tiny $v$}}
\psfrag{x}{{\tiny $\!\!\!\cg_{\ckrp}$}}
\psfrag{y}{{\tiny $\!\!\cg_{\ckrm}$}}
\psfrag{z}{{\tiny $\gam$}}
\psfrag{e}{{\tiny $\!\!\!\!\!U$}}
\includegraphics[scale=.7]{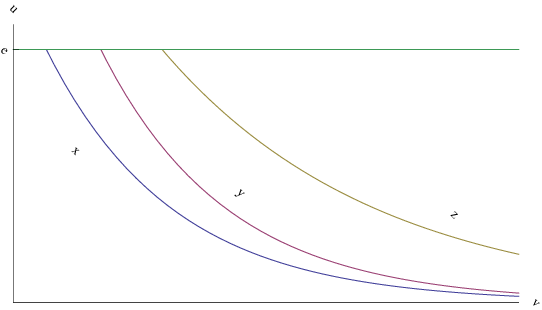}
\end{psfrags}
\end{turn}
\end{center}

\subsection*{The curves $\cg_{\ckr}$}  

We denote by $\cg_{\ckr}:=r^{-1}(\ckr)$ the level sets of the radius function.
These are spacelike curves and consequently may be parameterized by
$$v\mapsto (\uckr(v),v)\ \quad {\rm or}\ \quad u\mapsto (u,\vckr(u)).$$
We choose $\ckrp$ and $\ckrm$ sufficiently close to $r_+$ and $r_-$ with $r_-<\ckrm<\ckrp<r_+$.

\subsection*{The curve $\gam=\gamma_{\ckrm,\beta}$}
 Given $\ckrm$ as before and $\beta>0$, we define $\gamma$ to be the curve parametrized by 
$$ 
u\mapsto\big(u,(1+\beta)\vgr(u)\big)=:(u,v_{\gamma}(u)),\ {\rm for}\ u\in[0,U].
$$ 
This curve probes the region near the Cauchy horizon. The parameter $\beta$ measures the deviation of $\gamma$ with respect to $\cg_{\ckrm}$.

\subsection*{The choice of $\beta$ so that $r$ and $\varpi$ are controlled}
Choose any $\beta$ such that
\begin{equation}\label{beta} 
0<\beta<{\textstyle\frac 12\left(\sqrt{1-8\frac{\partial_r(1-\mu)(r_+,\varpi_0)}{\partial_r(1-\mu)(r_-,\varpi_0)}}-1\right)}.
\end{equation}
Let $0<\eps<\eps_0$. Then, by Lemma~6.1 and Corollary~6.2 of Part~2, there exists $U_\eps>0$ such that
\begin{equation}\label{espana}
r(u,v)\geq r_--\eps\qquad{\rm and}\qquad\varpi(u,v)\leq\varpi_0+\eps
\end{equation}
for $(u,v)\in J^-(\gam)\cap J^+(\cg_{\ckrm})$ and $0<u\leq U_\eps$, provided that the parameters $\ckrp$, $\eps_0$ and $\delta$ are chosen so that
\begin{equation}\label{success} 
\textstyle
\beta<\frac 12\left(
\sqrt{(1+\delta)^2-8
\frac{\mbox{\tiny $(\frac{\ckrp}{r_+})^{\hat\delta^2}$}\min_{r\in[\ckrp,r_+]}\partial_r(1-\mu)(r,\varpi_0)}{\partial_r(1-\mu)(r_--\eps_0,\varpi_0)}}-(1+\delta)
\right).
\end{equation}
Here $\hat\delta$ is a bound for $\bigl|\frac\zeta\nu\bigr|$ in $J^-(\ckrp)$.
Suppose that there exist positive constants $C$ and $s$ such that
$|\zeta_0(u)|\leq Cu^s$. Then, instead of choosing $\beta$ according to~\eqref{beta} we may choose
\begin{equation}\label{beta-s} 
0<\beta<{\textstyle\frac 12\left(\sqrt{1-8\frac{(1+s)\partial_r(1-\mu)(r_+,\varpi_0)}{\partial_r(1-\mu)(r_-,\varpi_0)}}-1\right)}.
\end{equation}
In this case, \eqref{success} should be replaced by
\begin{equation}\label{success-s} 
\textstyle
\beta<\frac 12\left(
\sqrt{(1+\delta)^2-8
\frac{\mbox{\tiny $[(\frac{\ckrp}{r_+})^{\hat\delta^2}$}+s]\min_{r\in[\ckrp,r_+]}\partial_r(1-\mu)(r,\varpi_0)}{\partial_r(1-\mu)(r_--\eps_0,\varpi_0)}}-(1+\delta)
\right).
\end{equation}

\subsection*{A note on the choice of parameters}
The estimates obtained in Part~2, some of which are listed in Appendix C, depend on the choice of a number of parameters, namely $\beta_-$, $\beta_+$, $\ckrm$, $\ckrp$, $\varepsilon_0$ and $U$. As a rule, these estimates hold if $\beta_- < \beta$ and $\beta_+ > \beta$ are chosen sufficiently close to $\beta$, $\ckrm > r_-$ is chosen sufficiently close to $r_-$, $\ckrp < r_+$ is chosen sufficiently close to $r_+$ and $\varepsilon_0>0$, $U > 0$ are chosen sufficiently small. The deviations of these parameters from $\beta$, $r_-$, $r_+$ and $0$, respectively, are controlled by a generic small parameter, typically denoted by $\delta$ or $\varepsilon$, where we will absorb all small quantities (so that $\delta$ or $\varepsilon$ may change from line to line).

\section{Mass inflation}\label{bup}

We denote the surface gravities of the Cauchy and black hole horizons in the reference subextremal Reissner-Nordstr\"{o}m black hole by
\begin{equation}\label{ganhamos}
k_- = -\,\frac12 \partial_r ( 1 - \mu ) (  r_-,\varpi_0  ) ,\qquad
k_+ = \frac12 \partial_r ( 1 - \mu ) ( r_+ ,\varpi_0 ),
\end{equation}
and define (see Appendix~\ref{appendix-2})
\begin{equation}\label{def_psi}
\Psi:=\frac{k_-}{k_+} > 1.
\end{equation}
This parameter measures how close the black hole is to being extremal, which corresponds to $\Psi = 1$.




We start by presenting a sufficient condition for the renormalized mass $\varpi$ to blow up identically on the Cauchy horizon.

\begin{Thm}[Mass inflation]\label{inflation}
Suppose that $\Psi>2$ and
\begin{equation}\label{zeta-0}
\zeta_0(u)\geq cu^s\ \mbox{for\ some}\/\ 
0<s<\frac\Psi 2-1,
\end{equation}
where $c>0$ and $u\in\left[0,U\right]$. Then 
\begin{equation}\label{infinito}
\varpi(u,\infty)=\infty\ {\it for\ each}\/\ u\in\left]0,U\right].
\end{equation}
\end{Thm}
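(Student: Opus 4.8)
The plan is to follow Dafermos's strategy for mass inflation, adapted to the presence of the cosmological constant. The key observation is that the renormalized Hawking mass is monotone (Lemma~\ref{sign}), so it suffices to produce a lower bound on $\varpi(u,v)$ along a suitable sequence of ingoing rays that diverges as $v\to\infty$. The mechanism is the coupling between the field equation and the mass equation~\eqref{omega_v}: a lower bound on $\zeta_0$ forces, via~\eqref{zeta_v} and~\eqref{theta_u}, a lower bound on $|\theta/\lambda|$ (equivalently on $\theta^2/\kappa$) in the region near the Cauchy horizon, and then integrating~\eqref{omega_v} in $v$ produces the blow-up. The quantitative input that makes this work is the exponential (in a Regge--Wheeler-type coordinate) decay/blow-up of $1-\mu$ near $r=r_-$, governed by the surface gravity $k_-$; this is where $\Psi=k_-/k_+$ and the condition $s<\frac{\Psi}{2}-1$ enter.

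\textbf{Key steps.} First I would fix $\beta$ according to~\eqref{beta-s} with this $s$ and choose $\ckrp$, $\eps_0$, $\delta$ so that~\eqref{success-s} holds; then on $J^-(\gam)\cap J^+(\cg_{\ckrm})$ we have the a priori control $r\geq r_--\eps$ and $\varpi\leq\varpi_0+\eps$ from~\eqref{espana}, which in particular gives two-sided bounds on $1-\mu$ and hence on $\partial_r(1-\mu)$ near $r_-$. Second, I would propagate the lower bound on $\zeta$: starting from $\zeta(u,0)=\zeta_0(u)\geq cu^s$, use~\eqref{zeta_v} to show $\zeta$ stays bounded below (in absolute value) by a comparable power of $u$ on the relevant region, controlling the error term $\theta\nu/r$ using smallness of $\theta$ which comes from $\theta_0=0$ plus~\eqref{theta_u}. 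Third, estimate $\kappa$: by~\eqref{kappa_u}, $\partial_u\log\kappa = \frac{\nu}{r}(\zeta/\nu)^2$, and integrating from $v$-slices near the event horizon down toward $\gam$ gives $\kappa(u,v)\sim (\text{something like } v^{-2k_-u(1+\text{corrections})})$ or, more precisely, a decay of the form $\kappa(u,v)\lesssim e^{-2k_- v}$ times polynomial factors along $\gam$; combined with a lower bound on $\theta$ obtained by integrating~\eqref{theta_u} (again using $\theta_0=0$ and the lower bound on $\zeta$), the ratio $\theta^2/\kappa$ in~\eqref{omega_v} is bounded below by a quantity whose $v$-integral along $u=$ const diverges precisely when $2s+2<\Psi$, i.e.\ $s<\frac{\Psi}{2}-1$. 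Fourth, feed this into $\partial_v\varpi = \frac12\theta^2/\kappa$ and integrate in $v$ up to $+\infty$ to conclude $\varpi(u,\infty)=\infty$; finally use monotonicity in $u$ (or redo the argument for each $u>0$) to get the statement for every $u>0$.

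\textbf{Main obstacle.} The crux is the third step: obtaining matched upper bounds on $\kappa$ and lower bounds on $|\theta|$ that are \emph{sharp enough} that the competition in the exponents works out to exactly $s<\frac{\Psi}{2}-1$. This requires carefully tracking the blue-shift factor: $\kappa$ decays like $e^{-2k_- v}$ (up to the corrections built into the choice of $\beta$ and $\delta$), while $\theta$, being driven by $\zeta\lambda/r$ with $\lambda = \kappa(1-\mu)$ and $1-\mu$ itself comparable to a constant near $r_-$, inherits a decay like $e^{-2k_- v}$ but with an extra factor from the $u^s$ lower bound on $\zeta$; so $\theta^2/\kappa \sim u^{2s} e^{-2k_- v}$, and $\int^\infty e^{-2k_- v}\,dv$ converges, which is \emph{not} what we want. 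The resolution — and the delicate point — is that one must instead localize near $\gam$ (where $v\sim (1+\beta)\vgr(u)$ and $\vgr(u)\sim -\frac{1}{2k_-}\log u$), so that the relevant integral of $\theta^2/\kappa$ is taken over a $v$-interval of length $\sim -\frac{1}{2k_+}\log u$ coming from the event-horizon side, and the accumulated growth of $1/\kappa$ over that interval behaves like $u^{-(1+s)\cdot 2k_+/k_+ \cdot(\dots)}$; balancing the $u$-powers then yields the stated threshold. Getting these logarithmic-versus-power bookkeeping exponents exactly right, with the $\beta$ and $\delta$ corrections absorbed, is the technical heart of the proof, which is why it is deferred to Appendix~\ref{prova-massa}.
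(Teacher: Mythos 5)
Your proposal has the right ingredients in outline (blue-shift rate $2k_-$ versus event-horizon rate $2k_+(s+1)$, threshold $s<\frac\Psi2-1$), but the overall plan — bound $\theta^2/\kappa$ below unconditionally and integrate \eqref{omega_v} — is circular, and this is a genuine gap. The sharp decay $-\lambda\lesssim e^{-2k_-v}$ (equivalently the decay of $\kappa=\lambda/(1-\mu)$) is \emph{not} available a priori: $\partial_r(1-\mu)$ and the relation between $\lambda$ and $\kappa$ depend on $\varpi$, so that estimate is only obtained under the assumption that the mass is controlled, i.e.\ precisely in the regime one is trying to rule out. Note also that the unconditional information from \eqref{kappa_u} only gives $\kappa\leq 1$, and since $\theta$ decays at the rate $2k_+(s+1)$ in $v$, the integral $\int\theta^2\,dv$ converges — so without conditional control of $\lambda$ there is no divergence to be had. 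The paper resolves this with a case analysis: either $\varpi(u,\infty)=\infty$ already, or $\lim_{u\searrow0}\varpi(u,\infty)>\varpi_0$ (handled as in Dafermos), or $\lim_{u\searrow0}\varpi(u,\infty)=\varpi_0$; in the last case it splits again according to whether $I(u)=\int_{\vgr(u)}^\infty\frac{\theta^2}{-\lambda}\,d\tilde v$ is infinite (then the Raychaudhuri equation \eqref{ray_v_bis} forces $\nu(u,\infty)=0$, hence $r(u,\infty)\equiv r_-$, contradicting Theorem~\ref{rmenos}) or finite (then $I(u)\to0$, which feeds the bootstrap giving $-\lambda\lesssim e^{(1-\tilde\delta)\partial_r(1-\mu)(\ckrm,\varpi_0)v}$, and combined with the lower bound on $\theta$ one finds $I(u)=\infty$, a contradiction). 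Your proposal contains none of this structure, and without it the third step of your plan cannot be justified.

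A second, quantitative error sits in your ``main obstacle'' paragraph: you assert $\theta\sim u^s e^{-2k_-v}$, conclude $\theta^2/\kappa\sim u^{2s}e^{-2k_-v}$ with convergent $v$-integral, and then try to repair this by localizing near $\gam$. That bookkeeping is wrong, and the proposed repair is not the mechanism that works. The correct lower bound for $\theta$ is generated near the event horizon: integrating \eqref{theta_u} along $\cg_{r_+-\delta}$ using $\zeta\geq cu^s$, the bound \eqref{piece-1} for $-\lambda$ there, and the $u$–$v$ relation \eqref{region_delta} gives $\theta\gtrsim e^{[-(s+1)\partial_r(1-\mu)(r_+,\varpi_0)-\tilde\eps]v}$ on $\cg_{\ckrp}$, i.e.\ decay at rate $2k_+(s+1)$, and this persists into $J^+(\cg_{\ckrp})$ because $\partial_u\theta=-\zeta\lambda/r\geq0$ by the positivity of $\zeta$ and $\theta$ (Lemma~\ref{sinal-theta}) — a monotonicity your sketch never invokes. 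Against the conditional bound $-\lambda\lesssim e^{-2(k_--\tilde\delta)v}$ one then gets $\frac{\theta^2}{-\lambda}\gtrsim e^{2(k_--2k_+(s+1)-\tilde\eps)v}$, whose $v$-integral at \emph{fixed} $u$ diverges exactly when $s<\frac\Psi2-1$; no shrinking window near $\gam$ is involved, and indeed an integral over a $v$-interval of length $\sim\log\frac1u$ would not by itself force $\varpi(u,\infty)=\infty$ for fixed $u>0$.
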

In Appendix~\ref{appendix-2} we see how the condition $\Psi>2$ translates into a relationship between $r_-$, $r_+$ and $\Lambda$.

The proof of Theorem~\ref{inflation} generally follows the argument on pages~493--497 of~\cite{Dafermos2}, where the $\Lambda=0$ case was studied. Nonetheless, the introduction of a cosmological constant requires a different technical approach, in particular the use of a foliation by the level sets of the radius function; moreover, since later on we will need some of the estimates derived in the proof, we present the relevant details in Appendix~\ref{prova-massa}.

The previous techniques only allow us to explore the subregion of parameter space determined by \eqref{zeta-0}. The rest of this paper will be dedicated to the analysis of the full parameter range. The first result in that direction is
\begin{Thm}[Mass inflation or $\frac\theta\lambda$ unbounded]\label{inflation2}
Suppose that 
$$
\zeta_0(u)\geq cu^s\ {\rm for\ some}\ 0<s<\Psi-1,
$$
where $c>0$ and $u\in\left[0,U\right]$. If $U$ is sufficiently small and $\varpi(U,\infty)<\infty$ then, for each $0<\delta<U$,
$\bigl|\frac\theta\lambda\bigr|(u,v)$ tends to $+\infty$, uniformly for $u\in[\delta,U]$, as $v\nearrow\infty$.
\end{Thm}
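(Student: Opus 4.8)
The plan is to argue by contradiction: assume $\varpi(U,\infty)<\infty$ for some small $U>0$ and $\bigl|\frac\theta\lambda\bigr|$ does \emph{not} blow up uniformly on $[\delta,U]$, and derive a contradiction with the lower bound $\zeta_0(u)\geq cu^s$, $s<\Psi-1$. The starting point is the observation that mass boundedness at $u=U$ propagates: by Lemma~\ref{sign} the map $u\mapsto\varpi(u,\infty)$ is nondecreasing, so $\varpi(u,\infty)\leq\varpi(U,\infty)<\infty$ for all $u\in]0,U]$, and then $1-\mu = -(\truemu)+\dots$ — more precisely $1-\mu$ stays bounded away from $-\infty$, and on any strip $[\delta,U]\times[0,\infty[$ we can pair this with the estimates from Part~2 (the $r,\varpi$ control of~\eqref{espana} in the region $J^-(\gam)\cap J^+(\cg_{\ckrm})$, here using the version~\eqref{beta-s}--\eqref{success-s} tailored to a power-law bound) to conclude $r(u,\infty)\in[r_--\eps, r_-]$ uniformly in $u\in[\delta,U]$, so in particular $r$ is bounded below by a positive constant.

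The heart of the argument is to extract a quantitative lower bound on $|\theta|$ from the lower bound on $\zeta_0$, and an upper bound on $|\lambda|$ (equivalently on $\kappa$, via~\eqref{kappa_at_u}), the ratio of which diverges. For $|\theta|$: I would integrate~\eqref{theta_u}, $\partial_u\theta = -\zeta\lambda/r$, starting from $\theta_0\equiv 0$ on $v=0$; this requires first transporting the lower bound $\zeta_0(u)\geq cu^s$ off the $v=0$ axis using~\eqref{zeta_v}, $\partial_v\zeta=-\theta\nu/r$, which is where one needs that $\theta$ (hence $\zeta$) does not vary too wildly — precisely the kind of estimate Dafermos sets up and which is recorded in Appendix~\ref{prova-massa} for the $s<\frac\Psi2-1$ case; here one must redo it under the weaker hypothesis $s<\Psi-1$. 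One then feeds this into~\eqref{omega_v}, $\partial_v\varpi=\frac12\theta^2/\kappa$, and uses finiteness of $\varpi(u,\infty)$ to control $\int_0^\infty \theta^2/\kappa\,dv$; combined with $\kappa$ being integrable in $v$ (Lemma~\ref{l-kappa}, applicable since $\zeta_0$ is not eventually zero) this forces $\kappa\to 0$ along the Cauchy horizon at a definite rate, i.e. $|\lambda| = \kappa|1-\mu| \to 0$, while the $\zeta_0\geq cu^s$ hypothesis with $s<\Psi-1$ is exactly the threshold that keeps $|\theta|$ from decaying compensatingly fast. Dividing, $\bigl|\frac\theta\lambda\bigr|\to\infty$; the exponent bookkeeping (tracking powers of $(r-r_-)$, or equivalently of $\kappa$, against the surface-gravity ratio $\Psi$) is what pins the threshold at $\Psi-1$ rather than $\frac\Psi2-1$.

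For the \emph{uniformity} in $u\in[\delta,U]$: all the constants in the above estimates depend on $u$ only through $\varpi(u,\infty)\leq\varpi(U,\infty)$, through $r(u,\infty)\geq r_--\eps$, and through $\inf\zeta_0$-type quantities on $[\delta,U]$ — each of which is controlled uniformly once $\delta>0$ is fixed. So the divergence rate of $\bigl|\frac\theta\lambda\bigr|(u,v)$ as $v\nearrow\infty$ can be taken independent of $u\in[\delta,U]$, which is the claimed uniform limit. The contradiction hypothesis "$\bigl|\frac\theta\lambda\bigr|$ does not blow up uniformly" is then contradicted directly, so the theorem follows; note the statement is phrased as a genuine dichotomy, so strictly one only needs: mass bounded $\Rightarrow$ $\frac\theta\lambda$ blows up, which is what the above delivers.

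The main obstacle, I expect, is the transport estimate for $\zeta$ (and $\theta$) under the weaker decay hypothesis $s<\Psi-1$: in Dafermos's original argument the stronger bound $s<\frac\Psi2-1$ gives enough room that $\varpi$ itself blows up, which makes several intermediate inequalities easy; here $\varpi$ is assumed bounded, so one has much less slack and must track the competition between $\kappa\to0$ and $\theta$ more carefully — essentially one is working at the boundary of the mass-inflation regime, and the estimates on $\zeta/\nu$, $\theta$, and $\kappa$ along the foliation $\cg_{\ckr}$ have to be sharp enough to see the exponent $\Psi-1$ exactly. A secondary subtlety is ensuring that "$U$ sufficiently small" is used consistently — it is needed so that the Part~2 region estimates~\eqref{espana} apply up to $u=U$ — and that this smallness is compatible with the fixed $\delta$.
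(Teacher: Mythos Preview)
Your overall plan---lower-bound $\theta$, upper-bound $-\lambda$, divide---is correct, and you rightly point to the Dafermos-style estimates in Appendix~\ref{prova-massa}. But the proof is considerably shorter and more direct than you anticipate, and your proposed mechanism for the $\lambda$ bound has a genuine gap.

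The key observation you are missing is that the estimates \eqref{lambda_cima} and \eqref{theta_below} from the proof of Theorem~\ref{inflation} are derived \emph{inside Case~3.2}, i.e.\ under the standing assumption that mass is bounded; they do not require $s<\frac\Psi2-1$ at all. The constraint on $s$ in Theorem~\ref{inflation} enters only at the very end, when one asks that $\int \theta^2/(-\lambda)\,dv$ diverge---a condition on $\theta^2/\lambda$. Here one wants $|\theta/\lambda|\to\infty$ pointwise, which is a condition on $\theta/\lambda$, and that is precisely why the threshold relaxes from $\frac\Psi2-1$ to $\Psi-1$. So nothing has to be ``redone under the weaker hypothesis'': once $\varpi(U,\infty)<\infty$, Case~3.2 applies, the two ready-made exponential bounds $-\lambda(u,v)\le C(u)e^{(-2k_-+\tilde\delta)v}$ and $\theta(u,v)\ge Ce^{(-2k_+(s+1)-\tilde\eps)v}$ are available, and dividing gives
\[
\Bigl|\frac{\theta}{\lambda}\Bigr|(u,v)\ \ge\ \frac{1}{C(u)}\,e^{(2k_+(\Psi-s-1)-\tilde\delta)v},
\]
which tends to infinity for $s<\Psi-1$ once the auxiliary parameters are chosen so that $\tilde\delta$ is small. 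Uniformity on $[\delta,U]$ follows because $C(u)\le C(\delta)$.

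Your alternative route to the $\lambda$-bound---use finiteness of $\int\theta^2/\kappa\,dv$ together with $\int\kappa\,dv<\infty$ to force ``$\kappa\to0$ at a definite rate''---does not work as stated: integrability of $\kappa$ does not give pointwise decay of $\kappa$, let alone an exponential rate, and the extra finite integral $\int\theta^2/\kappa$ does not repair this. The paper obtains the sharp exponential upper bound on $-\lambda$ by integrating \eqref{lambda_u} directly (this is \eqref{lambda_above}--\eqref{lambda_cima}), using the control on $\partial_r(1-\mu)$ and on $\int\frac{\nu}{1-\mu}\,du$ that the mass bound supplies in $J^+(\gam)$; no soft argument via integrals of $\kappa$ enters. (Minor slip: $\theta_0\equiv0$ is the data on $u=0$, not on $v=0$.)
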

\begin{proof}
Suppose that there exists $U>0$ such that $\varpi(U,\infty)<\infty$. Going through the proof of Theorem~\ref{inflation}, we see that Case~3.2 must occur.
So $-\lambda$ must be bounded above in $J^+(\gam)$ as in~\eqref{lambda_cima}. Furthermore, the lower estimate on $\zeta_0$ guarantees
the lower bound~\eqref{theta_below} for $\theta$ in $J^+(\gam)$. Combining~\eqref{lambda_cima} with~\eqref{theta_below}, we obtain
\begin{eqnarray}
\Bigl|\frac{\theta}{\lambda}\Bigr|(u,v)&\geq&\frac{C}{C(u)}\frac{e^{[-(s+1)\partial_r(1-\mu)(r_+,\varpi_0)-\tilde\varepsilon]v}}
{e^{[\partial_r(1-\mu)(r_-,\varpi_0)+\tilde\delta]v}}
\nonumber\\
&=& \frac 1{C(u)}e^{2k_+((\Psi-s-1)-\delta')v},\label{penelope}
\end{eqnarray}
for $(u,v)\in J^+(\gam)$. 
We choose $\ckrp$ sufficiently close to $r_+$, $\ckrm$ sufficiently close to $r_-$, $\beta^+$ and $\beta^-$ sufficiently close to $\beta$,
and $U$ sufficiently small so that $\delta'<\Psi-s-1$ (see the note in the end of Section~\ref{main-one-two}). 
Let $0<\delta<U$. The constant $C(u)$ is bounded above by $C(\delta)$ for $u\in[\delta,U]$.
Then, estimate~\eqref{penelope} shows that 
$\bigl|\frac\theta\lambda\bigr|(u,v)$ tends to $+\infty$, uniformly for $u\in[\delta,U]$, as $v\nearrow\infty$.
\end{proof}

In Remark~\ref{RmkSCC1} we will see that under the hypothesis of the previous theorem the Kretschmann scalar always blows up at the Cauchy horizon, as a consequence of either $\varpi$ or $\frac\theta\lambda$ blowing up.

\section{No mass inflation} 

In this section we will prove that mass inflation does not occur if $\zeta_0$ decays sufficiently fast as $u$ tends to zero. We also control the field $\zeta$
up to the Cauchy horizon.

\begin{Thm}[No mass inflation]
\label{thmNoMass}
Suppose that 
\begin{equation}\label{eq1}
|\zeta_0(u)|\leq cu^s\ {\rm for\ some\ nonnegative}\ s>\frac{7\Psi}9-1,
\end{equation}
where $c>0$ and $u\in\left[0,U\right]$. Then
$$ 
\varpi(u,\infty)<\infty\ {\rm for\ each}\ u\in\left]0,U\right],
$$ 
provided that $U$ is sufficiently small. Furthermore,\/ $\lim_{u\searrow 0}\varpi(u,\infty)=\varpi_0$.
\end{Thm}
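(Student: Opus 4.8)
The plan is to run the same dichotomy (``Case 3.1 vs.\ Case 3.2'') that organizes the proof of Theorem~\ref{inflation} in Appendix~\ref{prova-massa}, but now exploit the \emph{upper} bound $|\zeta_0(u)|\le cu^s$ with $s>\frac{7\Psi}{9}-1$ to push everything into the ``good'' case, where $\varpi$ stays bounded up to the Cauchy horizon. First I would fix $\eps\in(0,\eps_0)$ and choose $\beta$ according to~\eqref{beta-s} rather than~\eqref{beta}, so that the control region $J^-(\gam)\cap J^+(\cg_{\ckrm})$ inherits the estimates~\eqref{espana}: on that region $r\ge r_--\eps$ and $\varpi\le\varpi_0+\eps$, once $\ckrp$, $\eps_0$, $\delta$ are tuned to satisfy~\eqref{success-s}. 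The point of the hypothesis $s>\frac{7\Psi}{9}-1$ is precisely to keep the right-hand side of~\eqref{success-s} positive, i.e.\ to keep an admissible $\beta$ available; this is where the constant $7/9$ enters and should be checked by a direct computation in terms of $k_+$, $k_-$ and $\Psi$ (as in Appendix~\ref{appendix-2}).

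The core of the argument is then the region to the future of $\gam$, which is where mass inflation would be generated. There one uses~\eqref{omega_v}, $\partial_v\varpi=\frac12\theta^2/\kappa$, together with the restriction~\eqref{kappa_at_u} to write the growth of $\varpi$ along outgoing rays in terms of $\theta^2/\lambda\cdot(1-\mu)$. Lemma~\ref{l-kappa} already gives $\int_0^\infty\kappa(u,v)\,dv<\infty$ for each $u>0$; the task is to upgrade this to a bound uniform near the Cauchy horizon and to control $\theta$. I would estimate $|\theta|$ along $v=v_\gam(u)$ using~\eqref{theta_u}, $\partial_u\theta=-\zeta\lambda/r$, fed by the upper bound on $|\zeta_0|$ propagated through region I (where $|\zeta/\nu|$ is bounded by $\hat\delta$) and through the subsequent regions, obtaining something like $|\theta|(u,v_\gam(u))\lesssim u^{s'}$ with an exponent $s'$ tied to $s$ and $\Psi$. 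Combined with the exponential decay of $-\lambda$ (from integrating~\eqref{lambda_u} and the sign of $\partial_r(1-\mu)(r_-,\varpi_0)=-2k_-<0$, giving $-\lambda\lesssim e^{-2k_-(1+\beta)^{-1}\cdots}$-type bounds, as in~\eqref{lambda_cima}), the integral $\int_{v_\gam(u)}^\infty \frac{\theta^2}{\lambda}(1-\mu)\,dv$ converges, and converges to something $\to 0$ as $u\searrow 0$. That yields both $\varpi(u,\infty)<\infty$ and, adding the bound $\varpi(u,v_\gam(u))\le\varpi_0+\eps$ from~\eqref{espana}, $\limsup_{u\searrow0}\varpi(u,\infty)\le\varpi_0+\eps$; since $\eps$ is arbitrary and $\varpi$ is nondecreasing in $u$ by Lemma~\ref{sign} with $\varpi(0,\infty)=\varpi_0$, we get $\lim_{u\searrow0}\varpi(u,\infty)=\varpi_0$.

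Concretely the steps are: (i) choose $\beta$ via~\eqref{beta-s} and verify~\eqref{success-s} is satisfiable exactly when $s>\frac{7\Psi}{9}-1$; (ii) record the estimates~\eqref{espana} on $J^-(\gam)\cap J^+(\cg_{\ckrm})$; (iii) propagate $|\zeta_0(u)|\le cu^s$ to a bound on $|\zeta|$, then on $|\theta|$, along $\gam$, losing a controlled power of $u$ at each region crossing using the estimates catalogued in Appendix~\ref{Costa}; (iv) derive an exponential-decay bound for $-\lambda$ in $J^+(\gam)$ of the type~\eqref{lambda_cima}; (v) integrate~\eqref{omega_v} from $\gam$ to $v=\infty$ using (iii)--(iv) and Lemma~\ref{l-kappa} to conclude finiteness of $\varpi(u,\infty)$ with a bound $\to0$ as $u\searrow0$; (vi) combine with (ii) and monotonicity (Lemma~\ref{sign}) to identify the limit as $\varpi_0$.

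The main obstacle I anticipate is step~(iii)--(v): bookkeeping the powers of $u$ that are gained or lost as one propagates the bound on $\zeta$ and $\theta$ through the three regions separated by $\cg_{\ckrp}$, $\cg_{\ckrm}$ and $\gam$, and verifying that the resulting exponent, balanced against the exponential gain from $-\lambda$, is genuinely good precisely in the range $s>\frac{7\Psi}{9}-1$. In other words, the sharp threshold $7/9$ is not an artifact of crude estimates but comes out of this balance, and making that quantitative — while keeping track of how $\eps_0$, $\delta$, $\ckrp$ and $\beta^{\pm}$ must be co-adjusted — is the technically delicate part. Everything else is either already in place (Lemmas~\ref{sign},~\ref{l-kappa}, Theorem~\ref{r-stability}) or a routine Grönwall/integration argument along the characteristics.
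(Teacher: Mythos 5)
There is a genuine gap, on two fronts. First, your scheme is circular as stated: the exponential upper bound for $-\lambda$ in $J^+(\gam)$ ``of the type \eqref{lambda_cima}'', the bound \eqref{up-n} for $-\nu$, and any propagation of $\theta$ off the curve $\gam$ all rely on the Raychaudhuri-type factor \eqref{integrating_factor} being close to $1$, i.e.\ on a priori smallness of $\int\bigl|\frac{\theta^2}{\lambda}\bigr|\,dv$ --- which is exactly the quantity that drives mass inflation through \eqref{omega_v}. The paper breaks this circle with a bootstrap: it defines the region $\cD\subset J^+(\gam)$ where $\int_{\vgam(u)}^{v}\bigl|\frac{\theta^2}{\lambda}\bigr|\le\eps_1$, shows in Lemma~\ref{L1} that inside $\cD$ the same integral is in fact $\le\frac{\eps_1}{2}$, and concludes $\cD=J^+(\gam)$ by an open/closed argument; the mass bound then follows by integrating \eqref{omega_v} as a \emph{linear} ODE in $\varpi$ (formula \eqref{Letizia}, note that $1-\mu$ contains $\varpi$ itself), giving \eqref{Felipe}. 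Moreover, bounding $|\theta|$ only along $\gam$ is not enough: the integral involves $\theta$ at interior points of $J^+(\gam)$, and propagating the bound via \eqref{theta_u} brings in $\zeta$ there, which via \eqref{zeta_v} brings back $\theta$; closing this loop is the coupled Gronwall iteration for ${\cal T}_u(v)$ in \eqref{TGronwall}--\eqref{eq10}, and it only works after the upper bound for $-\lambda$ is \emph{improved} to the $q$-dependent estimate \eqref{eq6}. None of this machinery is identified in your outline beyond the phrase ``routine Gr\"onwall''.

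Second, you misattribute the threshold. You claim $s>\frac{7\Psi}{9}-1$ is what makes \eqref{beta-s}/\eqref{success-s} satisfiable; in fact the square root in \eqref{beta-s} always exceeds $1$ (the two derivatives $\partial_r(1-\mu)(r_\pm,\varpi_0)$ have opposite signs), so an admissible $\beta>0$ exists for every $s\ge 0$, and the only requirement the paper extracts from \eqref{beta-s} is $\beta>\frac13$, which costs merely $s>\frac{2\Psi}{9}-1$. The constant $\frac79$ actually comes from closing the Gronwall estimate: the exponent conditions \eqref{beta2} ($\beta>q$) and \eqref{beta8} ($\beta>1-2q$) force the optimal choice $q=\frac13$ and $\beta$ slightly above $\frac13$, and then condition \eqref{beta6}, $s>\Psi(\beta^2+\beta+q)-1$, evaluated there gives exactly $s>\frac{7\Psi}{9}-1$. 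So your step (i) would not produce the stated threshold, and the ``Case 3.1 vs.\ 3.2'' dichotomy of Theorem~\ref{inflation} (which needs a \emph{lower} bound on $\zeta_0$) is not the skeleton of this proof. Your final step --- monotonicity of $\varpi$ from Lemma~\ref{sign} together with \eqref{espana} and the vanishing of the post-$\gam$ mass increment to identify $\lim_{u\searrow 0}\varpi(u,\infty)=\varpi_0$ --- is fine once the above is in place.
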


Given $\eps_1>0$, define
\begin{eqnarray*}
{\cal D}=
{\cal D}_{\eps_1}=\left\{(u,v)\in J^+(\gamma):\ u\leq U\/\ {\rm and}\   
\int_{\vgam(u)}^v\Bigl| \frac{\theta^2}{\lambda} \Bigr|(u,\tilde v)\,d\tilde v\leq \eps_1\right\}.
\end{eqnarray*}
The set $\cD$ is connected and contains $\gamma$.
Our goal is to prove that, for $U$ small enough, $\cD=J^+(\gamma)$.
This is a consequence of
\begin{Lem}\label{L1} Assume that $\Psi$ and $\zeta_0$ are as in\/ {\rm Theorem~\ref{thmNoMass}}.
Then there exist $\eps_1>0$ and\/ $U>0$\! such that, for $(u,v)\in\cD$,
$$
\int_{\vgam(u)}^v\Bigl| \frac{\theta^2}{\lambda} \Bigr|(u,\tilde v)\,d\tilde v\leq \frac{\eps_1}2.
$$
\end{Lem}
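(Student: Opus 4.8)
The plan is a continuity (bootstrap) argument on the connected set $\cD$. On $\cD$ we have by definition the a priori bound $\int_{\vgam(u)}^v|\theta^2/\lambda|(u,\tilde v)\,d\tilde v\leq\eps_1$, and we want to improve the constant to $\eps_1/2$; combined with the fact that $\cD$ is closed in $J^+(\gamma)$ and contains $\gamma$, this forces $\cD=J^+(\gamma)$ for $U$ small. Everything therefore reduces to estimating $\theta$ and $1/\lambda$ (equivalently $|1-\mu|$ and $\kappa$) inside $\cD$ in terms of $\eps_1$ and $U$, in such a way that the resulting bound on the integral is $o(1)$ as $\eps_1,U\to 0$.

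First I would control $\lambda$ from below. Since $\lambda=\kappa(1-\mu)$ and $\kappa>0$ (Lemma~\ref{sign}), I need a lower bound on $-\lambda=\kappa|1-\mu|$, i.e. an upper bound on $\kappa$ and a lower bound on $|1-\mu|$. For $\kappa$: integrating \eqref{kappa_u} in $u$, $\kappa(u,v)=\kappa(0,v)\exp\!\big(\int_0^u\frac1r(\frac\zeta\nu)^2\nu\,du'\big)$, so $\kappa\leq\kappa_0=1$ to the past of $\gamma$ is not quite what we want here; rather, inside $J^+(\gamma)$ one uses the estimates from Part~2 recalled in Section~\ref{main-one-two}, namely that for the choice of $\beta$ in \eqref{beta-s} and the ensuing \eqref{espana}, $r\geq r_--\eps$ and $\varpi\leq\varpi_0+\eps$ on $J^-(\gam)\cap J^+(\cg_{\ckrm})$; the hypothesis \eqref{eq1} is exactly what makes \eqref{beta-s}–\eqref{success-s} usable. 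From $r$ close to $r_-$ and $\varpi$ close to $\varpi_0$ one gets $|1-\mu|=|\mu-1|$ bounded below by a positive constant (since $1-\mu(r_-,\varpi_0)=0$ but $\partial_r(1-\mu)(r_-,\varpi_0)=-2k_-<0$, so just inside $r<r_-$ we have $1-\mu<0$ with definite size). For $\kappa$ on $\cD$: from \eqref{kappa_u} and \eqref{omega_u}, $\partial_u\ln\kappa=\frac{2}{(1-\mu)r}\partial_u\omega$, which can be integrated using the bound on $\omega$; alternatively one bounds $\int\kappa\,dv$ via $\partial_v\omega=\frac12\theta^2/\kappa$ together with the definitional integral bound. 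This yields $-\lambda\geq c_0>0$ on $\cD$ for $U,\eps_1$ small.

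Next I would estimate $\theta$. Integrating \eqref{theta_u} along constant $v$ from $u=0$ (where $\theta_0=0$) gives $\theta(u,v)=-\int_0^u\frac{\zeta\lambda}{r}(u',v)\,du'$, and integrating \eqref{zeta_v} along constant $u$ controls $\zeta$ in terms of $\theta$; combined with the power-law bound $|\zeta_0(u)|\le cu^s$ and the exponential decay/growth rates governed by $k_\pm$ (this is where $\Psi=k_-/k_+$ and the threshold $s>\frac{7\Psi}{9}-1$ enter, exactly as in the proof of Theorem~\ref{inflation2} and the estimates \eqref{lambda_cima}, \eqref{theta_below} referenced there), one obtains a bound of the form $|\theta|(u,v)\le C\,e^{-\alpha v}$ or $|\theta|(u,v)\le C u^{s}e^{\cdots v}$ with a net decaying rate. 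Then
\[
\int_{\vgam(u)}^v\Bigl|\frac{\theta^2}{\lambda}\Bigr|(u,\tilde v)\,d\tilde v
\;\le\;\frac{1}{c_0}\int_{\vgam(u)}^\infty|\theta|^2(u,\tilde v)\,d\tilde v
\;\le\;\frac{C}{c_0}\,U^{\text{(something positive)}},
\]
which is $\le\eps_1/2$ once $U$ is small (and then $\eps_1$ is fixed accordingly, or vice versa). The bookkeeping of the exponents — verifying that the condition $s>\frac{7\Psi}{9}-1$ is precisely what makes the product of the decay rates of $\zeta$ and $\lambda^{-1}$ against the growth rate picked up in integrating \eqref{theta_u} come out with a strictly decaying (integrable) net rate — is the main obstacle, and is essentially the same delicate estimate that drives the whole ``no mass inflation'' threshold; I expect to borrow the exponential rate estimates from Appendix~\ref{prova-massa} and the region estimates from Appendix~\ref{Costa} rather than rederive them.
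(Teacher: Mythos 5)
There is a genuine gap, and it sits exactly at the hardest point of the lemma. You claim that on $\cD$ one has $-\lambda\geq c_0>0$ for $U,\eps_1$ small, and you then reduce the lemma to showing $\int|\theta|^2\,d\tilde v$ is small. This lower bound is false: on each outgoing ray $u=\mathrm{const}>0$ the radius converges, so $\int_{\vgam(u)}^\infty(-\lambda)(u,\tilde v)\,d\tilde v<\infty$ and $-\lambda$ cannot be bounded below by a positive constant; in fact $-\lambda=\kappa|1-\mu|$ decays exponentially in $v$ because $\kappa$ does (this is Lemma~\ref{l-kappa}, and it is what makes the affine parameter finite at the Cauchy horizon). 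Your own intermediate steps point the wrong way: an upper bound on $\kappa$ or on $\int\kappa\,dv$ gives an \emph{upper} bound on $-\lambda$, not a lower one, and $|1-\mu|$ being bounded away from zero does not help once $\kappa\to0$. The paper's proof instead establishes the two-sided exponential estimates $-\lambda(u,v)\geq Ce^{-2(k_-+\delta)v}$ (the bound~\eqref{l-baixo}) and the improved upper bound~\eqref{eq6}, and the whole difficulty is to show that $|\theta|$ decays fast enough in $v$ that $\theta^2/(-\lambda)$, i.e.\ $\theta^2 e^{2(k_-+\delta)v}$, still decays exponentially. That competition between the decay rate of $\theta$ (governed by $k_+(s+1)$ and $\beta$) and the decay rate of $-\lambda$ (governed by $k_-$) is precisely where the threshold $s>\frac{7\Psi}{9}-1$ comes from; with your constant lower bound the threshold would be far weaker, which is inconsistent with Theorem~\ref{inflation2} and Proposition~\ref{meio} showing that $\theta/\lambda$ blows up for $s<\Psi-1$.

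The second, related, shortfall is that you do not supply (or correctly locate) the upper bound on $|\theta|$ in $J^+(\gam)$. Integrating~\eqref{theta_u} from $u=0$ forces you through all four regions, and the estimates you propose to borrow do not do the job: \eqref{theta_below} and \eqref{lambda_cima} from the mass-inflation argument are a \emph{lower} bound on $\theta$ and an upper bound on $-\lambda$ used to make an integral diverge, not what is needed here. The paper instead integrates \eqref{theta_u} and \eqref{zeta_v} from $\gam$, using the Part~2 estimates on $\gam$ (\eqref{eq3}--\eqref{v_gamma}), sets up the coupled integral inequality \eqref{thetaD2} for ${\cal T}_u(v)=\max_{\tilde u\in[\ugam(v),u]}|\theta(\tilde u,v)|$, and closes it by a Gronwall argument; crucially it introduces the auxiliary parameter $q$ (taken $q=\tfrac13$, with $\beta$ slightly above $\tfrac13$) to trade decay in $v$ against growth in $u$ in the bound for $-\lambda$, and it is the resulting constraint $s>\Psi(\beta^2+\beta+q)-1$ that yields $\frac{7\Psi}{9}-1$. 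None of this bookkeeping is present in your sketch, and it cannot be "borrowed" from Appendices~\ref{prova-massa} or~\ref{Costa}, which only contain estimates up to $\gam$ or bounds in the opposite direction.
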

Indeed, for $\eps_1$ and $U$ small enough, Lemma~\ref{L1} implies $\cD$ is open in $J^+(\gamma)$. 
Since $\cD$ is also closed in $J^+(\gamma)$, we conclude that $\cD=J^+(\gamma)$.

\begin{proof}[Proof of\/ {\rm Lemma~\ref{L1}}]
Our goal is to improve the upper estimate~\eqref{up-l} for $-\lambda$ in $\cD$, to obtain a lower estimate for $-\lambda$ in $\cD$, and to obtain
an upper estimate for $|\theta|$ in $\cD$. These will allow us to prove that $\frac{\theta^2}{-\lambda}(u,v)$ decays exponentially in $v$,
from which the conclusion of the lemma will easily follow. Note that the estimates used in this proof will be sharper than needed here, for use in Section~\ref{CH}.

Integrating~\eqref{omega_v} as a linear first order ODE for $\varpi$, starting from $\gamma$, leads to 
\begin{eqnarray}
&& \omega(u,v)=\omega(u,{\vgam(u)})e^{\int_{\vgam(u)}^v\bigl(\frac{\theta^2}{-\lambda}\frac{1}{r}\bigr)(u,\tilde v)\,d\tilde v} \nonumber
\\ 
&& \qquad +
\int_{\vgam(u)}^ve^{\int_{\tilde v}^v\frac{\theta^2}{-\lambda}\frac{1}{r}(u,\bar v)\,d\bar v}\left(\frac{1}{2}\left(1+\frac{e^2}{r^2}
-\frac{\Lambda}{3}r^2\right)\frac{\theta^2}{\lambda}\right)(u,\tilde v)\,d\tilde v. \label{Letizia}
\end{eqnarray}
Let $\tilde\eps>0$.  
If $\eps_1$ and $U$ are sufficiently small, we have from \eqref{espana} and \eqref{Letizia}
\begin{equation}\label{Felipe}
\left|\varpi(u,v)-\varpi_0\right|<{\tilde\eps},
\end{equation}
for  $(u,v)\in\cD$. On the other hand, we have $r<\ckrm$ in $J^+(\gam)$ and,
using~\eqref{r-menos}, we know $\lim_{u\searrow 0}r(u,\infty)=r_-$. Therefore,
$$ 
 -C_1\dMenos \leq \rmu\leq -C_1^{-1} \dMenos\ \ {\rm in}\ \cD,
$$ 
with $C_1>1$. The value of $C_1$ can be chosen as close to one as desired by decreasing $\eps_1$, $\ckrm-r_-$ and $U$.
Henceforth, $C_1$ will denote a constant greater than one, which can be made arbitrarily close to one by a convenient choice of parameters.
$C$ will denote a positive constant.

We start by recalling some estimates over $\gamma$.
Collecting~\eqref{alp_new}, \eqref{thetaLambdaGamma}, \eqref{P} and~\eqref{eq1}, we get
\begin{eqnarray}
\Bigl|\frac\theta\lambda\Bigr|(\ugam(v),v)&\leq&
 C|\ugam(v)|^se^{-2\left(\frac{k_+}{1+\beta}-k_-\beta-\delta\right)v}\nonumber\\
&\leq&Ce^{-2\left(\frac{k_+(s+1)}{1+\beta}-k_-\beta-\delta\right)v}.\label{eq3}
\end{eqnarray}
According to~\eqref{lambda-below} and~\eqref{lambda-above},
\begin{eqnarray}
ce^{-2\left(\frac{k_-\beta}{1+\beta}+\delta\right)v}\leq
-\lambda(\ugam(v),v)&\leq&Ce^{-2\left(\frac{k_-\beta}{1+\beta}-\delta\right)v}.\label{eq4}
\end{eqnarray}
Combining~\eqref{eq3} with~\eqref{eq4},
\begin{eqnarray}
|\theta|(\ugam(v),v)&\leq& Ce^{-2\left(\frac{k_+(s+1)}{1+\beta}-\,\frac{k_-\beta^2}{1+\beta}-\delta\right)v}.\label{bbtheta}
\end{eqnarray}
Finally, according to~\eqref{O} and~\eqref{P},
\begin{equation}\label{v_gamma}
\frac{1+\beta^-}{2k_+}\ln\Bigr(\frac cu\Bigr)\leq \vgam(u)\leq \frac{1+\beta^+}{2k_+}\ln\Bigr(\frac Cu\Bigr).
\end{equation}
Recall that $\beta^-<\beta<\beta^+$ can be chosen arbitrarily close to $\beta$.

We now improve the upper estimate~\eqref{up-l} for $-\lambda$ in $\cD$.
Taking into account~\eqref{integrating_factor}, for $(u,v)\in\cD$, we have
$$ 
e^{\frac 1{r(U,\infty)}\int_{\vgr(\bar u)}^v\bigl[\big|\frac\theta\lambda\bigr||\theta|\bigr](\bar u,\tilde v)\,d\tilde v}\leq 1+\delta,
$$ 
for $\bar u\in[\ugam(v),u]$.
Arguing as in~\eqref{argue2}, 
\begin{eqnarray*}
\int_{\ugam(v)}^u \frac{\nu}{1-\mu}(\tilde u,v)\, d\tilde u
&\geq&
\mbox{\tiny$
\frac {1-\delta}{1+\delta}\frac{\max_{\cg_{\ckrm}}(1-\mu)}{\min_{\cg_{\ckrm}}(1-\mu)}$}(\vgr\bigl(\ugam(v))-\vgr(u)\bigr)
\\
&\geq& 
\frac{C_1^{-1}}{(1+\beta)}\bigl(v-\vgam(u)\bigr)
\\
&\geq& 
\frac{C_1^{-1}q}{(1+\beta)}\bigl(v-\vgam(u)\bigr).
\end{eqnarray*} 
In the last inequality we introduced a parameter $0<q\leq 1$ whose importance will become apparent below.
  Equation~\eqref{lambda_above} together with~\eqref{eq4} now show that,  for $(u,v)\in\cD$,
\begin{eqnarray}
 -\lambda(u,v)&\leq& Ce^{-2\left(\frac{k_-\beta}{1+\beta}-\delta\right)v}e^{-\,\frac{C_1^{-1}q}{(1+\beta)}\dMenos\left(v-\vgam(u)\right)}\nonumber\\
&=&Ce^{-2\left(\frac{k_-(\beta+q)}{1+\beta}-\delta\right)v}e^{\frac{C_1^{-1}q}{(1+\beta)}\dMenos \vgam(u)}.\label{eq6}
\end{eqnarray}
The parameter $q$ makes the second exponential grow slower as $u\searrow 0$ (at the cost of making the first exponential decay slower).
Note that $q=0$ corresponds to~\eqref{up-l}. This is our improved estimate for $-\lambda$ from above.

We now obtain a lower estimate for $-\lambda$ in $\cD$.
Arguing as above, we have 
\begin{eqnarray}
\int_{\ugam(v)}^u \frac{\nu}{1-\mu}(\tilde u,v)\, d\tilde u
&\leq&
C_1(\vgr(\ugam(v))-\vgr(u))\nonumber
\\
&\leq& 
\frac{C_1}{(1+\beta)}v.\label{int-nu}
\end{eqnarray} 
Using~\eqref{lambda_above} together with~\eqref{eq4} once more, for $(u,v)\in\cD$, we obtain
\begin{equation}\label{l-baixo}
-\lambda(u,v)\geq
Ce^{-2\left(\frac{k_-\beta}{1+\beta}+\frac{k_-}{1+\beta}+\delta\right)v}=Ce^{-2\left(k_-+\delta\right)v}.
\end{equation}
This is our estimate for $-\lambda$ from below.

We will now control $\theta$ in $\cD$. Integrating~\eqref{theta_u} and~\eqref{zeta_v} from $\gamma$ leads to
$$\theta(u,v)=\theta(\ugam(v),v)-\int_{\ugam(v)}^u\frac{\zeta\lambda}{r}(\tilde u,v)\,d\tilde u$$
and 
$$\zeta(u,v)=\zeta(u,\vgam(u))-\int_{\vgam(u)}^{v}\frac{\theta\nu}{r}(u,\tilde v)\,d\tilde v.$$
It follows that
\begin{eqnarray}
\label{thetaD}
\theta(u,v)
&=&
\theta(\ugam(v),v)
-\int_{\ugam(v)}^u\zeta(\tilde u,\vgam(\tilde u)) \frac{\lambda}{r}(\tilde u,v)\,d\tilde u \nonumber
\\
&+&
\int_{\ugam(v)}^u\frac{\lambda}{r}(\tilde u,v) \int_{\vgam(\tilde u)}^{v}\frac{\theta\nu}{r}(\tilde u,\tilde v)\,d\tilde v \, d\tilde u\;.
\end{eqnarray}
We fix  $u\leq U$. 
Given $\bar u\in[\ugam(v),u]$, since $r$ is bounded below, from~\eqref{thetaD} we obtain
\begin{eqnarray}
\label{thetaD2}
\theta(\bar u,v)
&\leq&
\left |\theta(\ugam(v),v) \right|
+
C\int_{\ugam(v)}^{\bar u}\left | \zeta(\tilde u,\vgam(\tilde u))\right | \left| \lambda (\tilde u,v)\right|d\tilde u \nonumber
\\
&&+
C\int_{\ugam(v)}^{\bar u} \left| \lambda(\tilde u,v)\right|  
\int_{\vgam(\tilde u)}^{v} \left| \theta (\tilde u,\tilde v)\right|  \left| \nu(\tilde u,\tilde v) \right|d\tilde v \; d\tilde u\nonumber\\
&=:&\left |\theta(\ugam(v),v) \right|+I(\bar u,v)+I\!I(\bar u,v).
\end{eqnarray}
In the next two paragraphs we bound $I$ and $I\!I$.

Collecting~\eqref{alp_new}, \eqref{zetaNuGamma}, \eqref{P} and~\eqref{eq1}, we obtain
\begin{eqnarray}
\Bigl|\frac\zeta\nu\Bigr|(\tilde u,\vgam(\tilde u))&\leq&
 C|\tilde u|^se^{-2\left(\frac{k_+}{1+\beta}-k_-\beta-\delta\right)\vgam(\tilde u)}\nonumber\\
&\leq&Ce^{-2\left(\frac{k_+(s+1)}{1+\beta}-k_-\beta-\delta\right)\vgam(\tilde u)}.\label{eq7}
\end{eqnarray}
Using~\eqref{up-n}, \eqref{eq6} and~\eqref{eq7}, we have
\begin{eqnarray*}
I(\bar u,v)
&\leq&
Ce^{-2\left(\frac{k_-(\beta+q)}{1+\beta}-\delta\right)v}\int_{\ugam(v)}^{\bar u}
e^{-2\left(\frac{k_+(s+1)}{1+\beta}-\,\frac{k_-(\beta^2+\beta+q)}{(1+\beta)}-\delta\right)\vgam(\tilde u)}
\tilde u^p\,d\tilde u.
 \end{eqnarray*}
Here $p=\Psi\beta-1-\delta$.
Using~\eqref{v_gamma}, the integral above can be estimated as
\begin{eqnarray}
&&
\int_{\ugam(v)}^{\bar u}
e^{-2\left(\frac{k_+(s+1)}{1+\beta}-\,\frac{k_-(\beta^2+\beta+q)}{(1+\beta)}-\delta\right)\vgam(\tilde u)}
\tilde u^p\,d\tilde u\label{int1}\\
&&\qquad\leq
\int_{\ugam(v)}^{\bar u}
e^{-2\left(\frac{k_+(s+1)}{1+\beta}-\,\frac{k_-(\beta^2+\beta+q)}{(1+\beta)}-\delta\right)\frac{1+\beta}{2k_+}\ln\left(\frac{C}{\tilde u}\right)}
\tilde u^p\,d\tilde u\nonumber\\
&&\qquad\leq
\int_0^{\bar u} \left( \frac{C}{\tilde u}\right)^{-s-1+\Psi(\beta^2+\beta+q)+\delta} \tilde u^p\, d\tilde u\nonumber\\
&&\qquad\leq
C\bar u^{s+1-\Psi(\beta^2+q)-\delta},\nonumber
\end{eqnarray}
if
\begin{equation}
\label{pBound}
 s>\Psi(\beta^2+q)-1
\end{equation}
and if the parameters are chosen so that $\delta$ is sufficiently small.
Therefore, it is possible to bound $I$ as follows:
\begin{equation}\label{boundI}
I(\bar u,v)\leq C\bar u^{s+1-\Psi(\beta^2+q)}e^{-2\left(\frac{k_-(\beta+q)}{1+\beta}-\delta\right)v}.
\end{equation}

For $v\geq \vgam(u)$, we define
\begin{equation}\label{eqt}
{\cal T}_u(v):= \max_{\tilde u\in[\ugam(v),u]}  \left| \theta(\tilde u,v) \right|.
\end{equation}
We emphasize that the constants $C$ will not depend on $u$.
Using~\eqref{up-n}, \eqref{eq6} and~\eqref{eqt}, we see that 
\begin{eqnarray*}
I\!I(\bar u,v)
&\leq&
Ce^{-2\left(\frac{k_-(\beta+q)}{1+\beta}-\delta\right)v}\int_{\ugam(v)}^{ \bar u}e^{\frac{C_1^{-1}q}{(1+\beta)}\dMenos \vgam(\tilde u)}
 \int_{\vgam(\tilde u)}^{v}  {\cal T}_u(\tilde v) \tilde u^p\, d\tilde v d\tilde u\\
&\leq&
Ce^{-2\left(\frac{k_-(\beta+q)}{1+\beta}-\delta\right)v}\int_{\ugam(v)}^{\bar u}e^{\frac{C_1^{-1}q}{(1+\beta)}\dMenos \vgam(\tilde u)}\tilde u^p\,d\tilde u
 \int_{\vgam(u)}^{v}  {\cal T}_u(\tilde v)\,  d\tilde v,
\end{eqnarray*}
where we used the fact that $\vgam(u) \leq \vgam(\tilde u)$.  
Again, $p=\Psi\beta-1-\delta$.
Using~\eqref{v_gamma}, the first integral above can be estimated as
\begin{eqnarray}
&&
\int_{\ugam(v)}^{\bar u}
 e^{\frac{C_1^{-1}q}{(1+\beta)}\dMenos \vgam(\tilde u)}\tilde u^p\, d\tilde u\label{int2}
\\
&&\qquad\leq
\int_{\ugam(v)}^{\bar u}
 e^{\frac{C_1^{-1}(1+\beta^+)q}{(1+\beta)}\frac{\dMenos}{\dMais} \ln\left(\frac{C}{\tilde u}\right)}\tilde u^p\, d\tilde u\nonumber
\\
&&\qquad\leq
\int_0^{\bar u} \left( \frac{C}{\tilde u}\right)^{C_1^{-1}\frac{1+\beta^+}{1+\beta}\Psi q} \; \tilde u^p d\tilde u\nonumber\\
&&\qquad\leq
C\bar u^{\Psi(\beta-q)-\delta},\nonumber
\end{eqnarray}
if
\begin{equation}
 \label{beta2}
 \beta>q
\end{equation}
and if $\delta$ is sufficiently small.
Therefore it is possible to bound $I\!I$ as follows:
\begin{equation}
\label{II}
I\!I(\bar u,v)\leq C\bar u^{\Psi(\beta-q)-\delta}e^{-2\left(\frac{k_-(\beta+q)}{1+\beta}-\delta\right)v}
 \int_{\vgam(u)}^{v}  {\cal T}_u(\tilde v)\,  d\tilde v.
\end{equation}

For all $v\geq \vgam(u)$,
we estimate ${\cal T}_u(v)$ using~\eqref{bbtheta}, \eqref{thetaD2}, \eqref{boundI} and~\eqref{II}: 
\begin{eqnarray}
{\cal T}_u(v) 
&\leq& Ce^{-2\left(\frac{k_+(s+1)}{1+\beta}-\,\frac{k_-\beta^2}{1+\beta}-\delta\right)v}
   + Cu^{s+1-\Psi(\beta^2+q)-\delta}e^{-2\left(\frac{k_-(\beta+q)}{1+\beta}-\delta\right)v}  \nonumber
\\
&&+
Cu^{\Psi(\beta-q)-\delta}e^{-2\left(\frac{k_-(\beta+q)}{1+\beta}-\delta\right)v}
 \int_{\vgam(u)}^{v}  {\cal T}_u(\tilde v)\,  d\tilde v.\label{TGronwall}
\end{eqnarray}
We claim that
\begin{eqnarray}
{\cal T}_u(v)&\leq& Ce^{-2\left(\frac{k_+(s+1)}{1+\beta}-\,\frac{k_-\beta^2}{1+\beta}-\delta\right)v}\nonumber\\
&&+Cu^{s+1-\Psi(\beta^2+q)-\delta}e^{-2\left(\frac{k_-(\beta+q)}{1+\beta}-\delta\right)v}\label{eq10}
\end{eqnarray}
for
\begin{equation}
\label{beta6}
s>\Psi(\beta^2+\beta+q)-1
\end{equation}
and small $\delta$.

We impose~\eqref{beta6}; it can be checked that considering also the opposite inequality will not lead to an improvement of
the statement of Theorem~\ref{thmNoMass}
(for the choice of parameters that we make below).
\begin{proof}[Proof of the claim]
Inequality~\eqref{TGronwall}, with $u^{\rho(\beta-q)-\delta}$ bounded by a constant, is of the form
\begin{eqnarray*}
 {\cal T}_u(v)&\leq&Ce^{-Av}+Cu^be^{-av}+Ce^{-av}\int_{\vgam(u)}^{v}  {\cal T}_u(\tilde v)\,  d\tilde v,
\end{eqnarray*}
with
\begin{eqnarray}
 A&=&\textstyle 2\left(\frac{k_+(s+1)}{1+\beta}-\,\frac{k_-\beta^2}{1+\beta}-\delta\right),\label{Aa}\\
 a&=&\textstyle 2\left(\frac{k_-(\beta+q)}{1+\beta}-\delta\right),\label{aa}\\
 b&=&\textstyle s+1-\Psi(\beta^2+q)-\delta.\nonumber
\end{eqnarray}
Since we impose~\eqref{beta6},
$A>a>0$, for small $\delta$. Let $\tilde{\cal T}_u(v)=e^{av}{\cal T}_u(v)$. Then
\begin{eqnarray*}
 \tilde{\cal T}_u(v)&\leq&Ce^{-(A-a)v}+Cu^b+C\int_{\vgam(u)}^{v}  e^{-a\tilde v}\tilde{\cal T}_u(\tilde v)\,  d\tilde v.
\end{eqnarray*}
Applying Gronwall's inequality, we get
\begin{eqnarray*}
  \tilde{\cal T}_u(v)&\leq&Ce^{-(A-a)v}+Cu^b+C\int_{\vgam(u)}^{v}\bigl(e^{-A\tilde v}+u^be^{-a\tilde v}\bigr)\,d\tilde v\\
  &\leq&Ce^{-(A-a)v}+Cu^b+Ce^{-A\vgam(u)}+Cu^be^{-a\vgam(u)}\\
  &\leq&Ce^{-(A-a)v}+Cu^b+Cu^{A\frac{1+\beta}{2k_+}-\delta}+Cu^bu^{a\frac{1+\beta}{2k_+}-\delta}\\
  &\leq&Ce^{-(A-a)v}+Cu^b.
\end{eqnarray*}
To estimate $e^{-\vgam(u)}$ we used~\eqref{O}. We also used $A\frac{1+\beta}{2k_+}-\delta=s+1-\Psi\beta^2-\delta>b=s+1-\Psi(\beta^2+q)-\delta$,
for small $\delta$.
\end{proof}

\bigskip

Obviously, for $(u,v)\in\cD$ we have
\begin{equation}\label{theta-f}
|\theta(u,v)|\leq{\cal T}_u(v).
\end{equation}
Using~\eqref{l-baixo} and~\eqref{eq10}, we obtain
\begin{eqnarray}
\Bigl|\frac{\theta^2}{\lambda}\Bigr|(u,v)
&\leq& Ce^{-2\left(\frac{2k_+(s+1)}{1+\beta}-\,\frac{2k_-\beta^2}{1+\beta}-k_--\delta\right)v}\label{eq11}\\
&&+ Cu^{2\bigl(s+1-\Psi(\beta^2+q)-\delta\bigr)}e^{-2\left(\frac{k_-(\beta+2q-1)}{1+\beta}-\delta\right)v}.\label{eq12}
\end{eqnarray}
The exponent in~\eqref{eq11} can be made negative if
\begin{equation}
\label{beta5}
s>\Psi\Bigl(\beta^2+\frac \beta 2+\frac 12\Bigr)-1
\end{equation}
and the second exponent in~\eqref{eq12} can be made negative if
\begin{equation}
\label{beta8}
\beta>1-2q.
\end{equation}
In what follows we will not exploit the smallness of $u^{2b}$.

Below we will characterize a choice of parameters for which we have 
\begin{equation}\label{pescadinha}
\Bigl|\frac{\theta^2}{\lambda}\Bigr|(u,v)\leq C e^{-\Delta v},\ \ 
\end{equation}
for $(u,v)\in\cD$,
with $\Delta>0$.
However, before we do that, we note that estimate~\eqref{pescadinha} wraps up the bootstrap argument. Indeed,
as $\lim_{u\searrow 0}\vgam(u)=+\infty$,
we can choose $U$ such that 
$$\int_{\vgam(u)}^v\Bigl| \frac{\theta^2}{\lambda} \Bigr|(u,\tilde v)\,d\tilde v<\frac{\eps_1}{2},$$
 for $(u,v)\in\cD$. 

We now bring together the conditions that we must satisfy in order for the above argument to work, and we choose our parameters.
The number
$\beta$ is bounded above by~\eqref{beta-s} and bounded below by~\eqref{beta2} and~\eqref{beta8};
in addition, $s$ is bounded below  
by~\eqref{pBound}, \eqref{beta6} and~\eqref{beta5}.
In fact, the restrictions on $s$ can be stated in a simpler form:
inequality~\eqref{beta6} is stricter 
than~\eqref{pBound};
inequality~\eqref{beta8} implies that~\eqref{beta6} is stricter than~\eqref{beta5}.
So, all the restrictions on $s$ amount to saying that $s$ is bounded below by~\eqref{beta6}.

We now select the parameters $q$ and $\beta$.
The minimum of the maximum of the lower bounds for $\beta$ in~\eqref{beta2} and~\eqref{beta8} is obtained for $q=\frac 13$.
This is our choice of $q$. Inequality~\eqref{beta-s} can be satisfied when $s>\frac{2\Psi}9-1$ because
$$ 
{\frac 13}=\frac 12\left({\textstyle\sqrt{1+8\frac 29}}-1\right)<\beta<\frac 12\left({\textstyle\sqrt{1+\frac {8(1+s)}\Psi}}-1\right).
$$ 
For~\eqref{beta6} to be satisfied we impose $\frac{7\Psi}9-1<s$ because
$$ 
\left.\left[\Psi(\beta^2+\beta+q)-1\right]\right|_{\stackrel{\beta=\frac 13}{\mbox{\tiny{$q=\frac 13$}}}}=\frac{7\Psi}9-1<\left.\left[\Psi(\beta^2+\beta+q)-1\right]\right|_{q=\frac 13}<s,
$$
Obviously, $\frac{2\Psi}9-1<\frac{7\Psi}9-1$.
Therefore, if  $s>\frac{7\Psi}9-1$ and we choose $\beta=\frac 13+\eps$, with $\eps>0$ sufficiently small, both~\eqref{beta-s} and~\eqref{beta6} are satisfied.

Therefore, our parameters will be chosen in the following way.
Suppose that 
we are given initial data $\zeta_0$ satisfying~\eqref{eq1}. 
We choose $\beta>\frac 13$ (so that~\eqref{beta2} and~\eqref{beta8} hold with $q=\frac 13$) and such that~\eqref{beta-s} and~\eqref{beta6} hold.
When $$(\ckrp,\ckrm,\beta^+,\beta^-,\eps_0,\eps_1,U)\to(r_+,r_-,\beta,\beta,0,0,0),$$
the parameters $\delta$ above all converge to 0 (at the cost of increasing the constants $C$). So, we may
choose $\ckrp$ sufficiently close to $r_+$, $\ckrm$ sufficiently close to $r_-$, $\beta^+$ and $\beta^-$ sufficiently close to $\beta$,
and $\eps_0$, $\eps_1$ and $U$ sufficiently small 
so that~\eqref{success-s} holds,
the exponent 
in~\eqref{eq11} and the second exponent in~\eqref{eq12} are negative, the integrals~\eqref{int1} and~\eqref{int2} converge, and, finally,
such that the numbers $A$ in~\eqref{Aa} and $a$ in~\eqref{aa} satisfy $A>a$. This will guarantee~\eqref{pescadinha}, for a certain
positive $\Delta$ and (a maybe very large but finite value of) $C$.

The proof of Lemma~\ref{L1} is complete.
\end{proof}

Since \eqref{Felipe} holds in $\cD$, the fact that $\cD=J^+(\gamma)$, established as a consequence of Lemma~\ref{L1}, implies Theorem~\ref{thmNoMass}.

We finish this section by controlling the field $\zeta$ in the following result.
\begin{Lem}\label{zeta-control}
Suppose that 
$$ 
|\zeta_0(u)|\leq cu^s\ {\rm for\ some\ nonnegative}\ s>\frac{7\Psi}9-1.
$$ 
Then there exists a constant $C>0$ such that
\begin{equation}\label{eq24}
|\zeta(u,v)|\leq Cu^{s-\Psi\beta^2-\delta},
\end{equation}
for $(u,v)\in J^+(\gam)$, where $\delta>0$ can be chosen arbitrarily close to zero, provided that $U$ is sufficiently small.
\end{Lem}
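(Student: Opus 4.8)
The goal is to upgrade the pointwise bound $|\zeta_0(u)| \leq cu^s$ on the initial axis to a uniform bound $|\zeta(u,v)| \leq Cu^{s-\Psi\beta^2-\delta}$ throughout $J^+(\gam)$. The natural tool is the transport equation~\eqref{zeta_v}, $\partial_v\zeta = -\theta\nu/r$, integrated in $v$ starting from the curve $\gam$. Since $|\zeta(u,\vgam(u))|$ is controlled by the estimate~\eqref{eq7} (which already gives a bound on $|\zeta/\nu|$ along $\gam$, and $\nu$ is bounded by~\eqref{up-n}), the main task is to estimate the integral $\int_{\vgam(u)}^v |\theta\nu/r|(u,\tilde v)\,d\tilde v$. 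Here $r$ is bounded below away from zero by Theorem~\ref{r-stability}, $|\nu|$ is bounded above by~\eqref{up-n}, and $|\theta|$ is controlled by the estimate~\eqref{eq10} (equivalently by ${\cal T}_u(v)$ via~\eqref{theta-f}) which was established in the proof of Lemma~\ref{L1}; crucially, since we now know $\cD = J^+(\gam)$ (Theorem~\ref{thmNoMass}), estimate~\eqref{eq10} holds on all of $J^+(\gam)$, not just on $\cD$.

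\textbf{Key steps, in order.} First I would write
\[
\zeta(u,v) = \zeta(u,\vgam(u)) - \int_{\vgam(u)}^v \frac{\theta\nu}{r}(u,\tilde v)\,d\tilde v,
\]
and bound the boundary term: combining~\eqref{eq7} with $|\nu| \leq C\tilde u^{\Psi\beta - 1 - \delta}$ from~\eqref{up-n} gives $|\zeta(u,\vgam(u))| \lesssim u^{s + \Psi\beta - 1 - \delta} e^{-2(\cdots)\vgam(u)}$, and substituting the lower bound for $\vgam(u)$ from~\eqref{v_gamma} converts the exponential into a power of $u$; one checks the total exponent of $u$ dominates $s - \Psi\beta^2 - \delta$ (this should follow from~\eqref{beta6}, exactly as in the claim's proof inside Lemma~\ref{L1}). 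Second, for the integral term I would use $|\theta(u,\tilde v)| \leq {\cal T}_u(\tilde v) \leq Ce^{-2(\frac{k_+(s+1)}{1+\beta} - \frac{k_-\beta^2}{1+\beta} - \delta)\tilde v} + Cu^{s+1-\Psi(\beta^2+q)-\delta}e^{-2(\frac{k_-(\beta+q)}{1+\beta} - \delta)\tilde v}$ from~\eqref{eq10}, multiply by $|\nu|/r \lesssim u^{\Psi\beta-1-\delta}$, and integrate in $\tilde v$ over $[\vgam(u),v]$. Since both exponents appearing in~\eqref{eq10} are positive (the first by~\eqref{beta5}, which is weaker than~\eqref{beta6} under~\eqref{beta8}; the second by~\eqref{beta8}), the $\tilde v$-integral is bounded by a constant times the value of the integrand at $\tilde v = \vgam(u)$. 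Then, as before, one replaces $e^{-(\cdots)\vgam(u)}$ by a power of $u$ using~\eqref{v_gamma} and collects the exponents. With the choice $q = \frac13$, $\beta = \frac13 + \eps$ made at the end of the proof of Lemma~\ref{L1}, each contribution has a $u$-exponent at least $s - \Psi\beta^2 - \delta$; the worst term is the one coming from the first summand in~\eqref{eq10}, whose $u$-exponent works out to $s - \Psi\beta^2 - \delta$ after using the lower bound on $\vgam(u)$, which is why that exponent appears in the statement.

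\textbf{Main obstacle.} The bookkeeping of exponents is the only real difficulty: one must verify that, after converting every exponential decay factor $e^{-c\,\vgam(u)}$ into a power $u^{c(1+\beta^\pm)/(2k_+)}$ via~\eqref{v_gamma} and adding the explicit power $u^{\Psi\beta - 1 - \delta}$ from the factor $\nu$, every resulting exponent is $\geq s - \Psi\beta^2 - \delta$ for the chosen parameters, and that the optimal (smallest) exponent is exactly $s - \Psi\beta^2$ up to the error $\delta$. This is the same type of computation carried out in the proof of the claim inside Lemma~\ref{L1}, and I would expect it to go through verbatim with $q = \frac13$; in particular the condition~\eqref{beta6}, $s > \Psi(\beta^2 + \beta + q) - 1$, is precisely what guarantees both that the boundary term~\eqref{eq7}-times-$\nu$ has a good exponent and that $A > a$ so the $\tilde v$-integrals converge. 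No new smallness condition on $U$ beyond those already imposed in Lemma~\ref{L1} should be needed, since~\eqref{eq10} and~\eqref{v_gamma} already hold on all of $J^+(\gam)$ once $\cD = J^+(\gam)$.
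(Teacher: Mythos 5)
Your proposal is correct and follows essentially the paper's own proof: integrate \eqref{zeta_v} from $\gam$, bound the boundary term $\zeta(u,\vgam(u))$ by combining \eqref{eq7}, \eqref{up-n} and \eqref{v_gamma}, and bound the integral term by combining \eqref{eq10}, \eqref{theta-f}, \eqref{up-n} and \eqref{v_gamma}, using that the exponentials decay so the $\tilde v$-integral is controlled by its value at $\vgam(u)$. One minor bookkeeping slip: the term actually responsible for the exponent $s-\Psi\beta^2-\delta$ is the boundary term (the paper's \eqref{eq21}), while the integral only contributes the better power $Cu^{s-\Psi\beta^2+\Psi\beta-\delta}$ (the paper's \eqref{eq22}); since both are $\leq Cu^{s-\Psi\beta^2-\delta}$, your conclusion \eqref{eq24} still follows.
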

\begin{proof}
Integrating~\eqref{zeta_v}, we have
\begin{equation}
\zeta(u,v)=\zeta(u,\vgam(u))-\int_{\vgam(u)}^v\frac{\theta\nu}{r}(u,\tilde v)\,d\tilde v.
\label{eq23}\end{equation}
Collecting~\eqref{up-n}, \eqref{v_gamma} and \eqref{eq7}, we get
\begin{eqnarray}
|\zeta(u,\vgam(u))|&\leq& Ce^{-2\left(\frac{k_+(s+1)}{1+\beta}-k_-\beta-\delta\right)\vgam(u)}u^{\Psi\beta-1-\delta}\nonumber\\
&\leq&C u^{s-\Psi\beta^2-\delta}.\label{eq21}
\end{eqnarray}
On the other hand, from~\eqref{up-n}, \eqref{v_gamma}, \eqref{eq10} and~\eqref{theta-f}, we obtain
\begin{eqnarray}
\int_{\vgam(u)}^v
\frac{|\theta\nu|}{r}(u,\tilde v)\,d\tilde v&\leq&
Ce^{-2\left(\frac{k_+(s+1)}{1+\beta}-\,\frac{k_-\beta^2}{1+\beta}-\delta\right)\vgam(u)}u^{\Psi\beta-1-\delta}\nonumber\\
&&+Cu^{s+1-\Psi(\beta^2+q)-\delta}
e^{-2\left(\frac{k_-(\beta+q)}{1+\beta}-\delta\right)\vgam(u)}u^{\Psi\beta-1-\delta}\nonumber\\
&\leq&C u^{s-\Psi\beta^2+\Psi\beta-\delta}+ Cu^{s-\Psi\beta^2+2\Psi\beta-\delta}\nonumber\\
&\leq&C u^{s-\Psi\beta^2+\Psi\beta-\delta}.\label{eq22}
\end{eqnarray}
Using~\eqref{eq21} and~\eqref{eq22} in~\eqref{eq23}, we obtain~\eqref{eq24}.
\end{proof}

\section{Extensions of the metric beyond the Cauchy horizon} 


In this section we assume that the field $\zeta_0$ satisfies
$$ 
|\zeta_0(u)|\leq cu^s\ {\rm for\ some\ nonnegative}\ s>\frac{7\Psi}9-1,
$$ 
so that there is no mass inflation, and we examine the possibility of extending the metric beyond the Cauchy horizon.
We regard the $(u,v)$ plane, the domain of our first order system, as a $C^2$ manifold.
Since the Cauchy horizon corresponds to $v=\infty$, we must change this coordinate to one with a finite range.
There are two natural choices to do so: either resorting to the radius function along the outgoing null ray $u=U$ for the new coordinate (i.e. choosing $\lambda=-1$ on $u=U$), or setting $\kappa=1$ on the null ray $u=U$ (as was done for the initial data along the event horizon).

In the first coordinate system, $v$ is then replaced by $\tilde v=r(U,0)-r(U,v)$.
This is the coordinate system that we will later use in Section~\ref{CH};
it transforms the domain $[0,\infty[$ of $v$ into a bounded interval for $\tilde{v}$,
even when the field $\zeta_0$ is identically zero.
In the second coordinate system, which has finite range only when $\zeta_0$ is not identically
zero, $v$ is replaced by $\hat{v}:=\int_0^v\kappa(U,\bar{v})\,d\bar{v}$. 

Both maps $v\mapsto\tilde{v}$ and $v\mapsto\hat{v}$ are $C^2([0,\infty[)$
and have a non-zero derivative, the first one with range $[0,V[$, say, and the
second one with range $[0,\hat{V}[$. So the map $\hat{v}\mapsto\tilde{v}$
is $C^2$. This map extends to a $C^1$ map from the interval $[0,\hat{V}]$ to
the interval $[0,V]$ (see~\eqref{tildebar}).
By Remark~\ref{non-equivalent}, 
the two coordinate systems $(u,\tilde{v})$ and $(u,\hat{v})$
are not equivalent (as $C^2$ coordinate systems) when $\bigl|\frac{\theta}{\lambda}\bigr|$ is unbounded along $u=U$.

In both coordinate systems we can extend the metric continuously to the
Cauchy horizon, and consequently beyond the Cauchy horizon, with the second mixed 
derivatives of $r$ continuous. In the coordinate system $(u,\hat{v})$ this can 
be done so that the Christoffel symbols are in $L^2_{{\rm loc}}$ 
and the field $\phi$ is in $H^1_{{\rm loc}}$. Therefore, the Christodoulou-Chru\'sciel
inextendibility criterion for strong cosmic censorship does not hold.

\subsection{Coordinates with $v$ replaced by $\tilde v=r(U,0)-r(U,v)$}

If there exists a positive sequence $(u_n)$ converging to $0$ such that $\zeta_0(u_n)\neq 0$
then we choose $U$ such that $(1-\mu)(U,\infty)<0$. In the proof of Lemma~\ref{l-kappa} we showed that such a $U$ exists;
in Proposition~\ref{fechado-0} we will see that under the present assumptions $(1-\mu)(U,\infty)<0$ for any $U>0$, so that actually any choice of $U$
will do. If $\zeta_0$ vanishes in a right neighborhood of the origin then the solution is simply Reissner-Nordstr\"{o}m and we can choose any $U$.
We define $f:[0,\infty[\to\R$, by
\begin{equation}\label{smooth_coordinate}
f(v)=
r(U,0)-r(U,v),
\end{equation}
so that $$f'(v)=-\lambda(U,v),$$
and set
$$
V=f(\infty)=r(U,0)-r(U,\infty).
$$
We will change the $v$ coordinate to $$\tilde v=f(v).$$

The functions $\nu_0$, $\kappa_0$ and $\lambda_0$ for the original characteristc initial value problem
(equal to $-1$, $1$ and $0$, respectively, see \eqref{RN1} and \eqref{RN2}) satisfy hypothesis (h4) (see Section~\ref{main-one-two}). 
By Lemma~\ref{regular}, the function $r$ is $C^2$.
Moreover, $\lambda(U,\,\cdot\,)<0$. Therefore, the change of coordinates of the previous paragraph is admissible (that is, $C^2$).

We denote by $\tilde r$ the function $r$ written in the new coordinates, i.e.
$$
\tilde r(u,\tilde v)=\tilde r(u,f(v))=r(u,v).
$$
We let $\tilde\lambda=\partial_{\tilde v}\tilde r$ and $\tilde\nu=\partial_{u}\tilde r$, whence
$$
\tilde\lambda(u,\tilde v)=\frac{\lambda(u,v)}{f'(v)}
$$
and
$$
\tilde\nu(u,\tilde v)=\nu(u,v).
$$
In particular, $$\tilde\lambda(U,\tilde v)\equiv -1.$$
Similarly, we define
$$
-\tilde\Omega^2(u,\tilde v)\,dud\tilde v=-\tilde\Omega^2(u,f(v))f'(v)\,dudv=-\Omega^2(u,v)\,dudv.
$$
From~\eqref{bar_rafaeli} we then have
$$
\tilde\varpi(u,\tilde v)=\varpi(u,v),
$$
and from~\eqref{kappa_0}
$$
\tilde\kappa(u,\tilde v)=\frac{\kappa(u,v)}{f'(v)}.
$$
Finally, we also denote by $\tilde\phi$ the function $\phi$ written in the new coordinates,
$$
\tilde\phi(u,\tilde v)=\tilde\phi(u,f(v))=\phi(u,v),
$$
and from~\eqref{theta} and~\eqref{zeta}
$$
\tilde\theta(u,\tilde v)=\frac{\theta(u,v)}{f'(v)},\qquad \tilde\zeta(u,\tilde v)=\zeta(u,v).
$$
\begin{Rmk}
It is obvious that the functions  $\tilde r$, $\tilde\nu$, $\tilde\lambda$, $\tilde\varpi$, $\tilde\theta$, $\tilde\zeta$ and $\tilde\kappa$ 
satisfy the first order system\/~\eqref{r_u}$-$\eqref{kappa_at_u}, with respect to the new coordinates $(u,\tilde v)$.
\end{Rmk}


\vspace{4mm}

\begin{Prop}\label{fechado-0}
Suppose that 
$$ 
|\zeta_0(u)|\leq cu^s\ {\rm for\ some\ nonnegative}\ s>\frac{7\Psi}9-1,
$$ 
where $c>0$ and $u\in\left[0,U\right]$.
If $U>0$ is sufficiently small then, for all\/ $0<\delta<U$, the functions $\tilde r$, $\tilde\nu$, $\tilde\lambda$, $\tilde\varpi$,
$\tilde\zeta$ and $\tilde\kappa$ (but not necessarily $\tilde{\theta}$)
admit continuous extensions to the closed rectangle\/ $[\delta,U]\times[0,V]$.
Equations~\eqref{r_u} to \eqref{omega_u}, \eqref{kappa_u}, and~\eqref{kappa_at_u} are satisfied on this set.
Finally, $\widetilde{(1-\mu)}(u,V)$ is negative for $u>0$, unless there exists a right neighborhood of the origin where $\zeta_0$ vanishes.
\end{Prop}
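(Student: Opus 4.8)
\textbf{Proof proposal for Proposition~\ref{fechado-0}.}
The plan is to transfer the estimates established in Section~\ref{bup} (no mass inflation) and in Lemma~\ref{zeta-control} from the $(u,v)$ coordinates to the $(u,\tilde v)$ coordinates, and to use these to show that each of the tilded functions extends continuously up to $\tilde v = V$. Fix $0<\delta<U$ with $U$ small enough for Theorem~\ref{thmNoMass}, Lemma~\ref{zeta-control}, and Lemma~\ref{L1} to apply on $[\delta,U]\times[0,\infty[$. The key point is that, because $f'(v) = -\lambda(U,v)$ and, by~\eqref{l-baixo} and~\eqref{eq6}, $-\lambda(U,v)$ is comparable to $e^{-2(k_-+\delta)v}$ from below and to $e^{-2(\frac{k_-\beta}{1+\beta}-\delta)v}$ from above (uniformly in $u\in[\delta,U]$), the map $v\mapsto\tilde v$ is a $C^2$ diffeomorphism of $[0,\infty[$ onto $[0,V[$ whose inverse extends continuously, and $V-\tilde v$ is comparable to $e^{-2(k_- + \delta)v}$. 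I would first check the "easy" functions $\tilde r$, $\tilde\nu$, $\tilde\zeta$, $\tilde\varpi$: these equal $r$, $\nu$, $\zeta$, $\varpi$ composed with $v\mapsto v(\tilde v)$, and each of $r(u,\cdot)$, $\nu(u,\cdot)$, $\zeta(u,\cdot)$, $\varpi(u,\cdot)$ has a limit as $v\nearrow\infty$ (for $r$ and $\varpi$ by Theorem~\ref{r-stability} and Theorem~\ref{thmNoMass}; for $\zeta$ by integrating~\eqref{zeta_v}, using the bound~\eqref{eq24} on $|\zeta|$ together with the integrability in $v$ of $\theta\nu/r$ coming from~\eqref{eq10} and~\eqref{up-n}; for $\nu$ by integrating~\eqref{nu_v} and using that $\kappa\nu(\zeta/\nu)^2/r$ is integrable in $v$, which follows from Lemma~\ref{l-kappa}-type bounds and~\eqref{eq24}). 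Continuity of the limiting values in $u$ on $[\delta,U]$ follows because all the estimates are uniform in $u\in[\delta,U]$, so the convergence is uniform and the uniform limit of continuous functions is continuous.

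Next I would treat $\tilde\lambda$ and $\tilde\kappa$, which carry the factor $1/f'(v)$ and therefore are \emph{not} merely compositions. Here I would use the restriction~\eqref{kappa_at_u}, which in the new coordinates reads $\tilde\lambda = \tilde\kappa\,\widetilde{(1-\mu)}$, so it suffices to control $\tilde\kappa$ and $\widetilde{(1-\mu)}$; since $\mu = \truemu$ is an explicit smooth function of $r$ and $\varpi$ (and $r$, $\varpi$ extend continuously and with $r$ bounded away from $0$), $\widetilde{(1-\mu)}$ extends continuously. For $\tilde\kappa = \kappa/f'(v) = \kappa/(-\lambda(U,v))$, I would integrate~\eqref{kappa_u} to get $\kappa(u,v) = \exp\big(\int_0^u \frac{\nu}{r}(\frac\zeta\nu)^2(\tilde u,v)\,d\tilde u\big)\cdot\kappa(0,v)$ — but more usefully, compare $\kappa(u,v)$ with $\kappa(U,v)$: from~\eqref{kappa_u}, $\partial_u\ln\kappa = \frac\nu r(\frac\zeta\nu)^2$, so $\kappa(u,v)/\kappa(U,v) = \exp\big(-\int_u^U \frac\nu r(\frac\zeta\nu)^2\,d\tilde u\big)$, and the integrand is, by~\eqref{eq24} and the sign of $\nu$, bounded uniformly in $v$ on $[\delta,U]$, with a limit as $v\nearrow\infty$; hence $\tilde\kappa(u,\tilde v) = \frac{\kappa(u,v)}{-\lambda(U,v)}$ has the same limiting behaviour as $\frac{\kappa(U,v)}{-\lambda(U,v)} = \frac{1}{\widetilde{(1-\mu)}(U,v)}$ times a continuous positive factor, which converges because $\widetilde{(1-\mu)}(U,\cdot)$ converges to a \emph{negative} limit (see below). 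Once $\tilde\kappa$ extends continuously, so does $\tilde\lambda = \tilde\kappa\widetilde{(1-\mu)}$, and in particular $\tilde\lambda(U,\tilde v)\equiv -1$ extends trivially. That equations~\eqref{r_u}--\eqref{omega_u}, \eqref{kappa_u}, \eqref{kappa_at_u} persist on $[\delta,U]\times[0,V]$ is then immediate: they hold on $[\delta,U]\times[0,V[$ by the Remark preceding the Proposition, both sides are continuous up to $\tilde v = V$, and a continuous identity on a dense set extends to the closure. Note that~\eqref{omega_v} and~\eqref{theta_u} are deliberately \emph{not} claimed, since $\tilde\theta$ need not extend.

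For the final assertion, that $\widetilde{(1-\mu)}(u,V)<0$ for $u>0$ unless $\zeta_0$ vanishes near $0$: since $\widetilde{(1-\mu)}(u,V) = \lim_{v\to\infty}(1-\mu)(u,v)$ and $\mu = \truemu$ with $r(u,\infty) = \lim_v r(u,v)$, this is $(1-\mu)$ evaluated at $r = r(u,\infty)$, $\varpi = \varpi(u,\infty)$. By Lemma~\ref{sign}, $(1-\mu)<0$ on $\mathcal P\setminus\{0\}\times[0,\infty[$ and $\varpi$ is nondecreasing, so $\varpi(u,\infty)\geq\varpi_0$; by Theorem~\ref{rmenos}, if $\zeta_0$ does not vanish in any right neighbourhood of $0$ then $r(u,\infty)<r_-$ for all $u>0$. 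Now $1-\mu = 1 - \frac{2\varpi}{r} + \frac{e^2}{r^2} - \frac{\Lambda}{3}r^2$; for the reference Reissner--Nordström value $\varpi = \varpi_0$ this vanishes exactly at $r=r_-$ and $r=r_+$ and is negative for $r\in\,]r_-,r_+[$, while $\partial_r(1-\mu)(r_-,\varpi_0) > 0$ (equivalently $k_->0$), so $(1-\mu)(r,\varpi_0)<0$ for $r$ slightly below $r_-$; decreasing $r$ further past $r_-$ keeps it negative until possibly a smaller root, but in any case for $r(u,\infty)<r_-$ close to $r_-$ (which we can arrange by taking $U$ small, since $r(u,\infty)\nearrow r_-$ as $u\searrow 0$) one has $(1-\mu)(r(u,\infty),\varpi_0)<0$, and since $-2\varpi/r$ is decreasing in $\varpi$, $(1-\mu)(r(u,\infty),\varpi(u,\infty))\leq(1-\mu)(r(u,\infty),\varpi_0)<0$. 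The main obstacle I anticipate is \emph{uniformity in $u$}: the estimates in the proof of Lemma~\ref{L1} carry constants $C$ and exponents $\delta$ that depend on the parameter choices but not on $u\in[\delta,U]$, so one must go back and verify that convergence as $v\nearrow\infty$ is uniform on $[\delta,U]$ — this is what upgrades pointwise limits to \emph{continuous} extensions — and one must track that the denominators $-\lambda(U,v)$ and $\widetilde{(1-\mu)}(U,v)$ stay bounded away from $0$ so that the quotients defining $\tilde\kappa$, $\tilde\lambda$ do not degenerate.
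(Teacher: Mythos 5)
Most of your proposal retraces the paper's own proof: you extend $\tilde r,\tilde\nu,\tilde\zeta,\tilde\varpi$ by uniform-in-$u$ limits as $v\nearrow\infty$, you treat $\tilde\kappa$ through the formula $\tilde\kappa(u,\tilde v)=\frac{-1}{\widetilde{(1-\mu)}(U,\tilde v)}\,e^{-\int_u^U(\tilde\zeta^2/(\tilde r\tilde\nu))(\bar u,\tilde v)\,d\bar u}$ obtained from \eqref{kappa_u}, and you recover $\tilde\lambda=\tilde\kappa\,\widetilde{(1-\mu)}$; passing the equations to $\tilde v=V$ is also done as in the paper. Two small slips there: the quantity whose integrability in $v$ you need when integrating \eqref{nu_v} is $\kappa\,\partial_r(1-\mu)$ (integrable because $\int^\infty\kappa\,dv<\infty$ by Lemma~\ref{l-kappa} and $\partial_r(1-\mu)$ is bounded), not $\kappa\nu(\zeta/\nu)^2/r$, which is the right-hand side of \eqref{kappa_u}; and "both sides are continuous on a dense set" does not by itself give the existence of the partial derivatives on the segment $\tilde v=V$ — one must use the integrated form plus the Fundamental Theorem of Calculus for the $v$-equations, and uniform convergence of the $u$-derivatives for the $u$-equations.

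The genuine gap is in the final assertion, $\widetilde{(1-\mu)}(u,V)<0$. Your static evaluation rests on a sign error: by \eqref{ganhamos}, $\partial_r(1-\mu)(r_-,\varpi_0)=-2k_-<0$ (not $>0$), so for $r$ slightly \emph{below} $r_-$ at mass $\varpi_0$ one has $(1-\mu)(r,\varpi_0)>0$ (indeed $r^2(1-\mu)(r,\varpi_0)\to e^2>0$ as $r\to 0$). Consequently, knowing $r(u,\infty)<r_-$, $\varpi(u,\infty)\geq\varpi_0$ and $\varpi(u,\infty)$ close to $\varpi_0$ only gives $(1-\mu)(u,\infty)\leq 0$ (as a limit of negative quantities); it does not exclude the degenerate value $0$, and your comparison inequality is against a quantity that is actually positive, so the chain collapses. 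Excluding the value $0$ is precisely the nontrivial point, and it is dynamical: the paper uses the Raychaudhuri equation \eqref{ray_v_bis}, which makes $\frac{\nu}{1-\mu}$ nonincreasing in $v$, so $\frac{\nu(u,\infty)}{(1-\mu)(u,\infty)}\leq\frac{\nu(u,0)}{(1-\mu)(u,0)}<\infty$, combined with $\tilde\nu(u,V)<0$ (obtained by integrating \eqref{nu_v} using $\int^\infty\kappa\,dv<\infty$); if $(1-\mu)(u,\infty)$ vanished while $\nu(u,\infty)<0$, the quotient would be infinite. Note also that this gap is not peripheral to the rest of your argument: your continuity proof for $\tilde\kappa$ (hence for $\tilde\lambda$) takes $\widetilde{(1-\mu)}(U,V)<0$ as an input and defers it to the flawed paragraph; in the paper this fact for the chosen $U$ comes from the proof of Lemma~\ref{l-kappa} (and is then extended to every $u>0$ by the monotonicity argument above), so you should either invoke that choice of $U$ explicitly or run the \eqref{ray_v_bis}-based argument.
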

\begin{proof}
If $\zeta_0$ vanishes in a right neighborhood of the origin, then the conclusion is immediate since the functions are obtained from the Reissner-Nordstr\"om solution.

Assume that there exists a positive sequence $(u_n)$ converging to 0 such that $\zeta_0(u_n)\neq 0$.
We fix $0<\delta<U$, and proceed in three steps.

{\em Step 1}. We prove that our functions $\tilde r$, $\tilde\nu$, $\tilde\lambda$, $\tilde\varpi$, $\tilde\zeta$ and $\tilde\kappa$ converge uniformly as functions of $u \in [\delta,U]$ as $\tilde v \to V$.
The convergence of $\tilde r(\,\cdot\,,\tilde v)$ to $\tilde r(\,\cdot\,,V)$ is uniform on $[\delta,U]$ because
$$\int_{\tilde v}^{V}|\tilde\lambda|(u,\bar v)\,d\bar v=\int_{f^{-1}(\tilde v)}^{\infty}|\lambda|(u,\bar v)\,d\bar v\to 0$$
as $\tilde v\nearrow V$ (by~\eqref{up-l}). 

In view of \eqref{omega_v} and \eqref{Felipe}, the convergence of $\tilde\varpi(\,\cdot\,,\tilde v)$ to $\tilde\varpi(\,\cdot\,,V)$ is also uniform on $[\delta,U]$, because
$$\int_{\tilde v}^{V}\Bigl|\frac{\tilde\theta^2}{\tilde\lambda}\Bigr|(u,\bar v)\,d\bar v=\int_{f^{-1}(\tilde v)}^{\infty}\Bigl|\frac{\theta^2}{\lambda}\Bigr|(u,\bar v)\,d\bar v\to 0$$
as $\tilde v\nearrow V$ (by~\eqref{eq11}-\eqref{eq12}).

For $u\in[\delta,U]$, using~\eqref{integral-k},
\begin{equation}\label{uni-nu}
 \int_{\tilde v}^V\tilde\kappa(u,\bar v)\,d\bar v\leq \int_{\tilde v}^V\tilde\kappa(\delta,\bar v)\,d\bar v\to 0,\ {\rm as}\ \tilde v\to\infty.
\end{equation}
Integrating~\eqref{nu_v}, for $\tilde v\leq\tilde V<V$,
$$
\tilde\nu(u,\tilde v)-\tilde\nu(u,\tilde V)=\tilde\nu(u,\tilde v)\left(1-e^{\int_{\tilde v}^{\tilde V}[\tilde\kappa\widetilde{\partial_r(1-\mu)}](u,\bar v)\,d\bar v}\right).
$$
Using~\eqref{up-n}, \eqref{uni-nu}, $r \geq r(U,\infty)>0$ and $\varpi \leq \varpi(U,\infty)<\infty$, implying that $\partial_r(1-\mu)$ is bounded, we conclude that we may define $\tilde\nu(\,\cdot\,,V)$. Also, letting $\tilde V\nearrow V$,
the restriction of $\tilde\nu(\,\cdot\,,\tilde v)$ to $[\delta,U]$ converges uniformly to $\tilde\nu(\,\cdot\,,V)$ as $\tilde v\nearrow V$. 
Integrating~\eqref{nu_v} between $\tilde v$ and $V$, we conclude that
\begin{equation}\label{nu_negativo}
\tilde\nu(u,V)<0
\end{equation} 
for each $u>0$.

Integrating~\eqref{zeta_v},
$$ 
\tilde\zeta(u,\tilde V)=\tilde\zeta(u,\tilde v)-\int_{\tilde v}^{\tilde V}\frac{\tilde\theta\tilde\nu}{\tilde r}(u,\bar v)\,d\bar v.
$$ 
We use~\eqref{up-n} and
$$\int_{\tilde v}^{V}|\tilde\theta|(u,\bar v)\,d\bar v=\int_{f^{-1}(\tilde v)}^{\infty}|\theta|(u,\bar v)\,d\bar v\to 0$$
as $\tilde v\nearrow V$ (by~\eqref{eq10} and~\eqref{theta-f}). Note that the last convergence is uniform for $u\in[\delta,U]$.
Arguing as in the previous paragraph, we may define $\tilde\zeta(\,\cdot\,,V)$ 
as the uniform limit of $\tilde\zeta(\,\cdot\,,\tilde v)$ when $\tilde v\nearrow V$.

From $\tilde\kappa(U,\tilde v)=\frac{-1}{\widetilde{(1-\mu)}(U,\tilde v)}$ and~\eqref{kappa_u}, we get
\begin{equation}\label{acs}
\tilde\kappa(u,\tilde v)=\frac{-1}{\widetilde{(1-\mu)}(U,\tilde v)}e^{-\int_u^U\bigl(\frac{\tilde\zeta^2}{\tilde r\tilde\nu}\bigr)(\bar u,\tilde v)\,d\bar u}.
\end{equation}
Using $\widetilde{(1-\mu)}(U,V)=(1-\mu)(U,\infty)<0$ (recall the beginning of the current subsection), the uniform convergence of $\tilde r$, $\tilde\nu$ and $\tilde\zeta$ as $\tilde v\nearrow V$,
and the fact that $\tilde r$ and $\tilde\nu$ are bounded away from zero,
we see that we may define $\tilde\kappa(\,\cdot\,,V)$.
Furthermore, since we already proved uniform convergence of $\tilde r$, $\tilde\varpi$, $\tilde\nu$ and $\tilde\zeta$, it is clear that $\tilde\kappa(\,\cdot\,,V)$ is the uniform limit of $\tilde\kappa(\,\cdot\,,\tilde v)$ when $\tilde v\nearrow V$.
We have
$$
\tilde\kappa(u,V)\geq \tilde\kappa(U,V)=\frac{-1}{\widetilde{(1-\mu)}(U,V)}>0 
$$
for $u\in[\delta,U]$.

The function $\tilde\lambda$ clearly extends to a continuous function on $[\delta,U]\times[0,V]$ since
$\tilde\lambda=\tilde\kappa\widetilde{(1-\mu)}$.

{\em Step 2}. 
The functions $\tilde r$, $\tilde\nu$, $\tilde\lambda$, $\tilde\varpi$, $\tilde\zeta$ and $\tilde\kappa$ 
are continuous in the closed rectangle $[\delta,U]\times[0,V]$. Indeed, let $\tilde h$ denote one of these functions. We know $\tilde h(\,\cdot\,,V)$
is continuous because it is the uniform limit of continuous functions. Let $u\in[\delta,U]$ and $\eps>0$.
There exists $\tilde\delta>0$ such that $|\bar u-u|<\tilde\delta$ implies $|\tilde h(\bar u,V)-\tilde h(u,V)|<\frac\eps 2$. Furthermore, again by uniform convergence, there exists $\hat\delta>0$ such that
$|\tilde v-V|<\hat\delta$ implies $|\tilde h(\bar u,\tilde v)-\tilde h(\bar u,V)|<\frac\eps 2$ for all $\bar u\in[\delta,U]$. So, if $|\bar u-u|<\tilde\delta$ and $|\tilde v-V|<\hat\delta$, then
$|\tilde h(\bar u,\tilde v)-\tilde h(u,V)|<\eps$. This proves continuity of $\tilde h$ at $(u,V)$.

{\em Step 3}. 
It is clear that the system~\eqref{r_u} to~\eqref{kappa_u}, except~\eqref{omega_v}, \eqref{theta_u} and~\eqref{zeta_v}, is satisfied also on the segment $[\delta,U]\times\{V\}$.
Indeed, to obtain the equations that involve the derivative with respect to $u$, we use the fact that if $\tilde h(\,\cdot\,,v_n)$ converges uniformly to
$\tilde h(\,\cdot\,,V)$ and $\partial_u\tilde h(\,\cdot\,,v_n)$ converges uniformly to $\hat h(\,\cdot\,,V)$ as $v_n\nearrow V$ then $\partial_u\tilde h(\,\cdot\,,V)$ exists and is equal to
$\hat h(\,\cdot\,,V)$.

On the other hand, to obtain the equations that involve the derivative with respect to $v$, we write these equations in integrated form, say from $0$ to $\tilde v_n$,
and let $\tilde v_n\nearrow V$. From the (trivial) continuity of the indefinite integral of a continuous function and the Fundamental Theorem of Calculus, we deduce that
the equations are valid at $V$.

Obviously, \eqref{kappa_at_u} is satisfied  on the segment $[\delta,U]\times\{V\}$.

Finally, taking into account
$$
\frac{\nu(u,\infty)}{(1-\mu)(u,\infty)}\leq\frac{\nu(u,0)}{(1-\mu)(u,0)}<\infty
$$ 
(from~\eqref{ray_v_bis})
and that $\tilde\nu$ is negative on $[\delta,U]\times\{V\}$ (see \eqref{nu_negativo}), we conclude that $\widetilde{(1-\mu)}(u,V)$ is uniformly
bounded above by a negative constant on $[\delta,U]$.
\end{proof}

\vspace{4mm}

\noindent {\em The metric and the field.}\/ Recall that the reason to study our first order system is that its solutions 
allow the construction of spherically symmetric Lorentzian manifolds $({\cal M}, g)$ and fields $\tilde\phi$ which solve the Einstein equations.
Here ${\cal M} = {\cal Q}\times{\mathbb S}^2$, where ${\cal Q}$ admits the global null coordinate system
$(u,\tilde v)$ defined on $[0,U]\times[0,V]\setminus\{(0,V)\}$, and the metric is
$$ 
g = -\tilde\Omega^2(u,\tilde v)\,dud\tilde v+\tilde r^2(u,\tilde v)\,\sigma_{{\mathbb S}^2},
$$ 
with $\tilde\Omega^2=-4\tilde\nu\tilde\kappa$. We give ${\cal M}$ the structure of a $C^2$ manifold, i.e.\ we only allow $C^2$ changes of
coordinates. Although Proposition~\ref{fechado-0} guarantees that $r$ is $C^1$ on $[0,U]\times[0,V]\setminus\{(0,V)\}$, the regularity of the metric is
no better than $C^0$, since, as will become apparent in the proof of Proposition~\ref{meio}, $\partial_{\tilde v} \tilde\kappa$ may blow up on the Cauchy horizon.
This allows for $C^0$ extensions of the metric beyond the Cauchy horizon, by a similar construction as the one that will be used below for the coordinate system $(u,\hat{v})$.

The field $\tilde\phi$ is determined, after prescribing $\tilde\phi(0,0)$, by integrating~\eqref{theta} and~\eqref{zeta}. According to~\cite[Proposition~13.2]{Dafermos2} (with the choice $u_1=v_1=0$), $\int_0^v|\theta|(u,\bar v)\,d\bar v+\int_0^u|\zeta|(\bar u,v)\,d\bar u\leq\underline{C}$ (note that this result depends only on equations \eqref{theta_u} and \eqref{zeta_v}, and so does not depend on the presence of $\Lambda$). So, $\tilde\phi$ is well defined, bounded and continuous, with continuous partial derivative with respect to $u$ in $[0,U]\times[0,V]\setminus\{(0,V)\}$. 

The nonvanishing Christoffel symbols of the metric on ${\cal M}$ are
\begin{eqnarray} 
\tilde \Gamma^C_{AB},&&\nonumber\\
\tilde \Gamma^u_{AB}
&=&
2\tilde \Omega^{-2}\tilde r\tilde\lambda\,\sigma_{AB}=-\tilde r\frac{\widetilde{1-\mu}}{2\tilde\nu}\,\sigma_{AB},\nonumber
\\
\tilde \Gamma^{\tilde v}_{AB}
&=&
2\tilde \Omega^{-2}\tilde r\tilde\nu\,\sigma_{AB}=-\tilde r\frac{\widetilde{1-\mu}}{2\tilde\lambda}\,\sigma_{AB},\nonumber
\\
\tilde \Gamma^A_{B\tilde u}&=&\tilde\nu\tilde r^{-1}\delta^A_B,\nonumber\\
\tilde \Gamma^A_{B\tilde v}&=&\tilde\lambda\tilde r^{-1}\delta^A_B,\nonumber\\
\tilde \Gamma^u_{uu}
&=&
\tilde \Omega^{-2}\partial_u(\tilde \Omega^2)\
=\frac{\partial_u \tilde\nu}{\tilde\nu}+\frac{\partial_u \tilde\kappa}{\tilde \kappa},\nonumber
\\
\tilde \Gamma^{\tilde v}_{\tilde v\tilde v}
&=&
\tilde \Omega^{-2}\partial_{\tilde v}(\tilde \Omega^2)\
=\frac{\partial_{\tilde v} \tilde\nu}{\tilde\nu}+\frac{\partial_{\tilde v} \tilde\kappa}{\tilde \kappa}\label{chris}
\end{eqnarray}
(see~\cite[Appendix~A]{DafermosExtension}).

\begin{Prop}\label{meio}
Suppose that  
$$ 
c_2u^{s_2}\leq\zeta_0(u)\leq c_1u^{s_1}\ {\rm for\ some}\ \frac{7\Psi}9-1<s_1\leq s_2<\Psi-1,
$$ 
where $c_1,c_2>0$ and $u\in\left[0,U\right]$. For any $\delta>0$ the field $\bigl|\frac{\tilde\theta}{\tilde\lambda}\bigr|(u,\tilde v)$ tends to $+\infty$ as $\tilde v\nearrow V$, uniformly for\/ $u\in[\delta,U]$, provided that $U$ is sufficiently small.
For all $u\in[\delta,U]$, with one possible exception, 
$\tilde \Gamma^{\tilde v}_{\tilde v\tilde v}(u,\tilde v)$ is unbounded  
as $\tilde v\nearrow V$.
Moreover,\/ $\tilde \Gamma^{\tilde v}_{\tilde v\tilde v}(U,\tilde v)$ tends to $-\infty$  
as $\tilde v\nearrow V$.
\end{Prop}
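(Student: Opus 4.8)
The plan is to establish the three claims in order, extracting the needed estimates from the no-mass-inflation analysis of Section~\ref{bup} (Lemma~\ref{L1}) and its translation to the $(u,\tilde v)$ coordinates in Proposition~\ref{fechado-0}.

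First I would prove that $\bigl|\frac{\tilde\theta}{\tilde\lambda}\bigr|(u,\tilde v) = \bigl|\frac{\theta}{\lambda}\bigr|(u,v)\to+\infty$ uniformly for $u\in[\delta,U]$ as $\tilde v\nearrow V$ (equivalently $v\nearrow\infty$). The upper bound $\zeta_0(u)\geq c_2 u^{s_2}$ with $s_2<\Psi-1$ means we are in exactly the regime of Theorem~\ref{inflation2}: the lower estimate on $\zeta_0$ produces the lower bound~\eqref{theta_below} on $\theta$ along $\gamma$, and since no mass inflation occurs we have the upper estimate~\eqref{l-baixo}, $-\lambda(u,v)\geq Ce^{-2(k_-+\delta)v}$, combined with an upper bound for $-\lambda$ from~\eqref{eq6}. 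As in the proof of Theorem~\ref{inflation2}, combining the exponential lower bound on $|\theta|$ (growing like $e^{2k_+\Psi v}$-type rates from $\gamma$, propagated forward) with the exponential control on $|\lambda|$ yields $\bigl|\frac{\theta}{\lambda}\bigr|(u,v)\geq \frac{1}{C(u)}e^{2k_+(\Psi-s_2-1)v - \tilde\delta v}$, and since $s_2<\Psi-1$ the exponent is positive once the parameters are chosen so that $\tilde\delta<\Psi-s_2-1$; uniformity on $[\delta,U]$ follows because $C(u)\leq C(\delta)$ there. Here one must be slightly careful to propagate the lower bound on $\theta$ off of $\gamma$ into $J^+(\gamma)$, but the integral equation~\eqref{thetaD} together with the smallness of the correction terms $I$, $I\!I$ (already controlled in the proof of Lemma~\ref{L1}) does this.

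Second, I would handle the Christoffel symbol $\tilde\Gamma^{\tilde v}_{\tilde v\tilde v} = \frac{\partial_{\tilde v}\tilde\nu}{\tilde\nu} + \frac{\partial_{\tilde v}\tilde\kappa}{\tilde\kappa}$. By Proposition~\ref{fechado-0}, $\tilde\nu$ extends continuously and is bounded away from zero up to $\tilde v = V$, and $\partial_{\tilde v}\tilde\nu = \tilde\nu\tilde\kappa\widetilde{\partial_r(1-\mu)}$ via~\eqref{nu_v} is also bounded (indeed $\tilde\kappa$ extends continuously), so the first term $\frac{\partial_{\tilde v}\tilde\nu}{\tilde\nu}$ is bounded. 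Thus the behavior of $\tilde\Gamma^{\tilde v}_{\tilde v\tilde v}$ is governed entirely by $\frac{\partial_{\tilde v}\tilde\kappa}{\tilde\kappa}$, and since $\tilde\kappa$ is bounded above and below by positive constants, by $\partial_{\tilde v}\tilde\kappa$. Now differentiate the identity~\eqref{kappa_at_u}, $\tilde\lambda = \tilde\kappa\widetilde{(1-\mu)}$, in the $\tilde v$ direction; using $\tilde\lambda(U,\tilde v)\equiv -1$ (so $\partial_{\tilde v}\tilde\lambda(U,\cdot)=0$) along $u=U$ gives $\partial_{\tilde v}\tilde\kappa(U,\tilde v) = -\tilde\kappa\,\partial_{\tilde v}\widetilde{(1-\mu)}/\widetilde{(1-\mu)}$, and $\partial_{\tilde v}\widetilde{(1-\mu)}$ contains a term $\frac{2\tilde\theta\,\partial_{\tilde v}\tilde\varpi}{\tilde r^2}$-type contribution — more directly, from~\eqref{omega_v} written in the new coordinates, $\partial_{\tilde v}\tilde\varpi = \frac12\frac{\tilde\theta^2}{\tilde\kappa}$, and since $\widetilde{(1-\mu)} = -\frac{2\tilde\varpi}{\tilde r} + \cdots$ we get a contribution to $\partial_{\tilde v}\tilde\kappa$ proportional to $\frac{\tilde\theta^2}{\tilde\lambda}$ up to bounded factors. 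More cleanly: use~\eqref{kappa_u} in integrated form, $\tilde\kappa(u,\tilde v) = \tilde\kappa(U,\tilde v)\exp(-\int_u^U(\frac{\tilde\zeta^2}{\tilde r\tilde\nu})(\bar u,\tilde v)d\bar u)$ from~\eqref{acs}, and differentiate; the dominant piece is again $\partial_{\tilde v}\tilde\kappa(U,\tilde v)$. Tracking signs: along $u=U$, since $\widetilde{(1-\mu)}(U,V)<0$ and $\partial_{\tilde v}\tilde\varpi = \frac12\tilde\theta^2/\tilde\kappa\geq 0$, one finds $\partial_{\tilde v}\tilde\kappa(U,\tilde v)\to -\infty$ at a rate controlled by $\bigl|\frac{\tilde\theta}{\tilde\lambda}\bigr|(U,\tilde v)|\tilde\theta|(U,\tilde v)$ or similar, which blows up by the first part (note $\tilde\theta$ itself need not vanish; the blow-up of $\tilde\theta/\tilde\lambda$ with $\tilde\lambda\equiv -1$ means $\tilde\theta\to\infty$). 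Hence $\tilde\Gamma^{\tilde v}_{\tilde v\tilde v}(U,\tilde v)\to -\infty$.

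Third, for general $u\in[\delta,U]$: from~\eqref{acs}, $\tilde\kappa(u,\tilde v)/\tilde\kappa(U,\tilde v) = \exp(-\int_u^U(\frac{\tilde\zeta^2}{\tilde r\tilde\nu})(\bar u,\tilde v)d\bar u)$, and Lemma~\ref{zeta-control} gives $|\tilde\zeta|\leq Cu^{s-\Psi\beta^2-\delta}$, so this ratio — and its $\tilde v$-derivative, obtained by differentiating under the integral — is controlled. Writing $\frac{\partial_{\tilde v}\tilde\kappa}{\tilde\kappa}(u,\tilde v) = \frac{\partial_{\tilde v}\tilde\kappa}{\tilde\kappa}(U,\tilde v) - \partial_{\tilde v}\int_u^U(\frac{\tilde\zeta^2}{\tilde r\tilde\nu})(\bar u,\tilde v)d\bar u$, the first term blows up to $-\infty$; the second (correction) term is bounded uniformly in $\tilde v$ except possibly if a cancellation occurs — here the phrase "with one possible exception" enters, accounting for the one value of $u$ where the correction might exactly cancel the divergence. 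Precisely, the correction integral, after differentiation, involves $\tilde\zeta\,\partial_{\tilde v}\tilde\zeta$-type terms which from~\eqref{zeta_v} and~\eqref{eq24}, \eqref{eq10} decay in a controlled way in $u$ but could have a sign; a monotonicity or counting argument (as in the analogous Remark in Part~2, cf. the discussion around Remark~\ref{non-equivalent}) shows that the set of $u$ for which $\tilde\Gamma^{\tilde v}_{\tilde v\tilde v}(u,\cdot)$ stays bounded has at most one point. The main obstacle is precisely this last point — separating the genuinely divergent part of $\partial_{\tilde v}\tilde\kappa$ from the $u$-dependent correction and showing the correction cannot conspire to cancel it for more than one value of $u$; the rate estimates from~\eqref{eq10}, \eqref{l-baixo} and Lemma~\ref{zeta-control} must be combined carefully, and one should isolate the $u$-dependence of the two competing exponential rates in $v$ to see that equality of rates (hence potential cancellation) happens on a single level set.
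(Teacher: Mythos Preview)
Your argument for the uniform blow-up of $\bigl|\frac{\tilde\theta}{\tilde\lambda}\bigr|$ and for $\tilde\Gamma^{\tilde v}_{\tilde v\tilde v}(U,\tilde v)\to-\infty$ is essentially correct and matches the paper (which simply invokes Theorem~\ref{inflation2} for the first, and obtains the clean formula $\tilde\Gamma^{\tilde v}_{\tilde v\tilde v}(U,\tilde v)=-\tilde\theta^2(U,\tilde v)/\tilde r(U,\tilde v)$ for the second by differentiating~\eqref{acs} and using~\eqref{nu_v}).

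The genuine gap is in your treatment of the ``at most one exception'' claim. Your assertion that the correction term $-\partial_{\tilde v}\int_u^U\frac{\tilde\zeta^2}{\tilde r\tilde\nu}\,d\bar u$ is ``bounded uniformly in $\tilde v$ except possibly if a cancellation occurs'' is false. If you differentiate under the integral and use $\partial_{\tilde v}\tilde\zeta=-\tilde\theta\tilde\nu/\tilde r$, the leading contribution is $+\int_u^U\frac{2\tilde\zeta\tilde\theta}{\tilde r^2}\,d\bar u$, which diverges to $+\infty$ because $\tilde\theta>0$, $\tilde\zeta>0$ (Lemma~\ref{sinal-theta}) and $\tilde\theta(\bar u,\tilde v)\to+\infty$ uniformly on $[\delta,U]$. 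So you have a competition $(-\infty)+(+\infty)$, not a bounded perturbation, and your proposed mechanism of ``equality of exponential rates on a single level set'' is not what decides the issue.

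The paper's argument is much more direct: observe that $\tilde\Gamma^{\tilde v}_{\tilde v\tilde v}=\partial_{\tilde v}\ln\tilde\Omega^2$, and use the wave equation
\[
\partial_u\partial_{\tilde v}\ln\tilde\Omega^2=-\frac{2\tilde\theta\tilde\zeta}{\tilde r^2}+\text{(bounded)}.
\]
Since $\tilde\theta,\tilde\zeta>0$, integrating in $u$ from $u_1$ to $u_2$ shows that $\tilde\Gamma^{\tilde v}_{\tilde v\tilde v}(u_1,\tilde v)-\tilde\Gamma^{\tilde v}_{\tilde v\tilde v}(u_2,\tilde v)\geq\int_{u_1}^{u_2}\frac{2\tilde\theta\tilde\zeta}{\tilde r^2}\,d\bar u-C$, and the right-hand side tends to $+\infty$ because $\tilde\lambda$ is bounded away from zero on $[\delta,U]\times[0,V]$, $\tilde\zeta\geq c_2\delta^{s_2}$, and $\tilde\theta/|\tilde\lambda|\to+\infty$ uniformly. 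Hence $\tilde\Gamma^{\tilde v}_{\tilde v\tilde v}(u_1,\cdot)$ and $\tilde\Gamma^{\tilde v}_{\tilde v\tilde v}(u_2,\cdot)$ cannot both be bounded. The key ingredient you missed is precisely this monotonicity in $u$ (up to bounded error) coming from the \emph{sign} of $\tilde\theta\tilde\zeta$; rate comparisons are not needed.
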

\begin{proof}
The upper bound on $\zeta_0$ and Theorem~\ref{thmNoMass}
imply that $\varpi(u,\infty)<\infty$ for each $0<u\leq U$, provided that $U$ is sufficiently small.
Fix $0<\delta<U$. 
Using the lower bound on $\zeta_0$ 
together with 
Theorem~\ref{inflation2}, we know that $\bigl|\frac{\tilde\theta}{\tilde\lambda}\bigr|(u,\tilde v)$ tends to $+\infty$, uniformly for\/ $u\in[\delta,U]$, as $\tilde v\nearrow V$.
In particular,
\begin{equation}\label{jkowski}
|\tilde\theta(U,\tilde v)|\to+\infty,\ {\rm as}\  \tilde v\nearrow V.
\end{equation}

Suppose, by contradiction, that there exist $u_1<u_2$ in
$[\delta,U]$ for which $\tilde \Gamma^{\tilde v}_{\tilde v\tilde v}(u_1,\,\cdot\,)$
and $\tilde \Gamma^{\tilde v}_{\tilde v\tilde v}(u_2,\,\cdot\,)$
are bounded.
The wave equation \eqref{wave_Omega} for $\tilde{\Omega}^2$ can be written as
\begin{eqnarray*}
\partial_u\partial_{\tilde v}\log \tilde \Omega^2&=&
-\,\frac{2\tilde \theta \tilde \zeta}{\tilde r^2}
+\frac{4\tilde\kappa\tilde \nu e^2}{\tilde r^4}
-\frac{2\tilde\kappa\tilde \nu}{\tilde r^2}
+\frac{2\tilde \lambda\tilde \nu}{\tilde r^2}.
\end{eqnarray*}
Using~\eqref{chris} we know that 
$\tilde \Gamma^{\tilde v}_{\tilde v\tilde v}
=
\partial_{\tilde v}\ln(\tilde \Omega^2)$. Thus, there exists a constant $C>0$ such that,
for $\delta\leq u_1<u_2\leq U$ and $\tilde v\in[0,V[$,
\begin{equation} \label{difere}
\int_{u_1}^{u_2}\frac{2\tilde \theta\tilde \zeta}{\tilde r^2}
(\bar u,\tilde v)\,d\bar u\leq
\tilde \Gamma^{\tilde v}_{\tilde v\tilde v}(u_1,\tilde v)-
\tilde \Gamma^{\tilde v}_{\tilde v\tilde v}(u_2,\tilde v)+C.
\end{equation}
According to Proposition~\ref{fechado-0} there exists a positive constant
$c_\delta$ such that 
\begin{equation} \label{lambdapositive}
\tilde\lambda(u,\tilde v)\leq -c_\delta<0\ \ \ {\rm for}\ (u,\tilde v)\in[\delta,U]\times[0,V]
\end{equation}
because $1-\tilde{\mu}(u,V)<0$ and
$\tilde{\kappa}(u,V)\geq\tilde{\kappa}(U,V)>0$ for $u>0$. Our hypotheses and 
Lemma~\ref{sinal-theta} imply that $\tilde\theta$ and $\tilde\zeta$ are positive on
$]0,U]\times[0,\tilde V[$, and so, by \eqref{zeta_v} and the lower bound on $\zeta_0$, we have $\zeta \geq c_2u^{s_2}$. Hence,
\begin{equation} \label{diverge}
\int_{u_1}^{u_2}\frac{2\tilde \theta\tilde \zeta}{\tilde r^2}
(\bar u,\tilde v)\,d\bar u\geq
\frac{2c_\delta c_2\delta^{s_2}}{\tilde r^2(U,V)}
\int_{u_1}^{u_2}\frac{\tilde \theta}{|\lambda|}(\bar u,\tilde v)\,d\bar u\to
+\infty\ \ \ {\rm as}\ \tilde v\nearrow V.
\end{equation}
This shows that
$
\tilde \Gamma^{\tilde v}_{\tilde v\tilde v}(u_1,\,\cdot\,)-
\tilde \Gamma^{\tilde v}_{\tilde v\tilde v}(u_2,\,\cdot\,)
$
tends to $+\infty$,
which is a contradiction. Therefore, there is at most one $u\in[\delta,U]$
for which $\tilde \Gamma^{\tilde v}_{\tilde v\tilde v}(u,\,\cdot\,)$ is bounded.

Using~\eqref{chris} and \eqref{nu_v}, and differentiating~\eqref{acs}, we obtain
\begin{eqnarray}
\tilde \Gamma^{\tilde v}_{\tilde v\tilde v}(U,\tilde v)&=& -\,\frac{1}{{\widetilde{(1-\mu)^2}}(U,\tilde v)}\frac{2\partial_{\tilde v}\tilde\varpi(U,\tilde v)}{\tilde r(U,\tilde v)}
\frac{1}{\tilde\kappa(U,\tilde v)}\nonumber\\
&=& -\left(\frac{2\tilde\kappa\partial_{\tilde v}\tilde\varpi}{\tilde r}\right)(U,\tilde v)\nonumber\\
&=&-\left(\frac{\tilde\theta^2}{\tilde r}\right) (U,\tilde v). 
\label{uniforme2} 
\end{eqnarray}
From~\eqref{uniforme2} we see that $\tilde \Gamma^{\tilde v}_{\tilde v\tilde v}(U,\tilde v)$ tends to $-\infty$ as $\tilde v\nearrow V$.
\end{proof}

\begin{Rmk}\label{piao}
From \eqref{difere} and \eqref{diverge} we can draw the following conclusions:
\begin{enumerate}[{\rm (i)}]
\item if\/
$\tilde \Gamma^{\tilde v}_{\tilde v\tilde v}(u,\,\cdot\,)$ is bounded, then
$\tilde \Gamma^{\tilde v}_{\tilde v\tilde v}(u_1,\tilde v)\to+\infty$
as $\tilde v\to V$ when $u_1<u$, and
$\tilde \Gamma^{\tilde v}_{\tilde v\tilde v}(u_2,\tilde v)\to-\infty$
as $\tilde v\to V$ when $u_2>u$;
\item
if\/ 
$\tilde \Gamma^{\tilde v}_{\tilde v\tilde v}(u,\,\tilde v)\to-\infty$, then
$\tilde \Gamma^{\tilde v}_{\tilde v\tilde v}(u_2,\tilde v)\to-\infty$
as $\tilde v\to V$ when $u_2>u$; 
\item 
if\/
$\tilde \Gamma^{\tilde v}_{\tilde v\tilde v}(u,\,\tilde v)\to+\infty$, then
$\tilde \Gamma^{\tilde v}_{\tilde v\tilde v}(u_1,\tilde v)\to+\infty$
as $\tilde v\to V$ when $u_1<u$.
\end{enumerate}
\end{Rmk}

\subsection{Coordinates with $v$ replaced by $\hat{v}:=\int_0^v\kappa(U,\bar{v})\,d\bar{v}$}  
Assume there exists a positive sequence $(u_n)$ converging to 0 such that $\zeta_0(u_n)\neq 0$.
We change the $v$ coordinate to
\begin{equation}
\label{smooth_coordinate_f}
\hat{v}:=\int_0^v\kappa(U,\bar{v})\,d\bar{v}.
\end{equation}
According to \eqref{integral-k}, $\hat{V}:=\int_0^\infty\kappa(U,\bar{v})\,d\bar{v}<\infty$.
From Lemma~\ref{regular},
$\kappa$ is $C^1$; since $\kappa$ is also positive, this change of coordinates is
admissible ($C^2$).

We denote by $\hat{r}$, $\hat{\nu}$, $\hat{\lambda}$, $\hat{\varpi}$, $\hat{\theta}$, $\hat{\zeta}$ and $\hat{\kappa}$ the functions written in the coordinates $(u,\hat{v})$.
In particular,
$$
\hat{\kappa}(u,\hat{v})=\frac{\kappa(u,v)}{\kappa(U,v)}\qquad {\rm and}\qquad
\hat{\kappa}(U,\hat{v})\equiv 1.
$$
From $\hat\kappa(U,\hat v)\equiv 1$ and~\eqref{kappa_u}, we get
\begin{equation}\label{acsf}
\hat\kappa(u,\hat v)=e^{-\int_u^U\bigl(\frac{\hat\zeta^2}{\hat r\hat\nu}\bigr)(\bar u,\hat v)\,d\bar u}.
\end{equation}

\noindent {\it Relationship between the $\tilde{v}$ and the $\hat{v}$ coordinates.}
We now show that when $\bigl|\frac{\theta}{\lambda}\bigr|$ is unbounded
the change of coordinates from $\tilde{v}$ to $\hat{v}$ is not $C^2$ at the Cauchy horizon. From~\eqref{smooth_coordinate} and~\eqref{smooth_coordinate_f}, we write
$$
\frac{d\tilde v}{dv}(v)=-\lambda(U,v)
$$
and
$$
\frac{d\hat v}{dv}(v)=\kappa(U,v).
$$
So
\begin{equation}\label{tildebar}
\frac{d\tilde v}{d\hat v}(\hat v)=-(1-\mu)(U,v)=-(\widehat{1-\mu})(U,\hat v).
\end{equation}
Using the chain rule, \eqref{mu_0} and \eqref{omega_v}, we obtain
\begin{eqnarray}
\nonumber\frac{d^2\tilde v}{d\hat{v}^2}(\hat{v})&=&-\,\frac{d}{dv}(1-\mu)(U,v)\frac{dv}{d\hat{v}}(\hat{v})\\
\nonumber&=&\left(\frac{2}{r}\partial_v\varpi\right)(U,v)\frac{1}{\kappa(U,v)}+{\rm bounded\ terms}\\
\nonumber&=&\left(\frac{1}{r}\frac{\theta^2}{\kappa^2}\right)(U,v)+{\rm bounded\ terms}\\
\label{ddoisddois}&=&\left[\frac{(1-\mu)^2}{r}\left(\frac{\theta}{\lambda}\right)^2\right](U,v)+{\rm bounded\ terms.}
\end{eqnarray}
According to Proposition~\ref{fechado-0}, $(1-\mu)(U,\, \cdot \,)$ is bounded away from zero (note that this quantity does not depend on the choice of coordinate system).
So indeed, we have 
\begin{Rmk}\label{non-equivalent} 
The change of coordinates from $\hat{v}$ to $\tilde{v}$ is not $C^2$ 
at the Cauchy horizon when
$\bigl|\frac{\theta}{\lambda}\bigr|$ is unbounded along $u=U$.
\end{Rmk}

The next result is a direct translation of Proposition~\ref{fechado-0} to the new coordinates $(u,\hat v)$.

\begin{Prop}\label{fechado-0f}
Suppose that 
$$ 
|\zeta_0(u)|\leq cu^s\ {\rm for\ some\ nonnegative}\ s>\frac{7\Psi}9-1,
$$ 
where $c>0$ and $u\in\left[0,U\right]$,
and suppose there exists a positive sequence $(u_n)$ converging to 0 such that $\zeta_0(u_n)\neq 0$.
If $U>0$ is sufficiently small then, for all\/ $0<\delta<U$, the functions $\hat r$, $\hat\nu$, $\hat\lambda$, $\hat\varpi$,
$\hat\zeta$ and $\hat\kappa$ (but not necessarily $\hat{\theta}$)
admit continuous extensions to the closed rectangle\/ $[\delta,U]\times[0,\hat{V}]$.
Equations~\eqref{r_u} to \eqref{omega_u}, \eqref{kappa_u}, and~\eqref{kappa_at_u} are satisfied on this set.
Finally, $\widehat{(1-\mu)}(u,\hat{V})$ is negative for $u>0$.
\end{Prop}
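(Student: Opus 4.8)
The plan is to deduce this from Proposition~\ref{fechado-0} by transporting its continuous extensions through the change of coordinates relating $\tilde v$ and $\hat v$; alternatively, one may simply re-run, essentially verbatim, the three-step argument of Proposition~\ref{fechado-0} in the $(u,\hat v)$ coordinates, with~\eqref{acsf} playing the role of~\eqref{acs} and $\hat\kappa(U,\,\cdot\,)\equiv 1$ that of $\tilde\lambda(U,\,\cdot\,)\equiv -1$. I describe the shorter route. First I would record the properties of the new coordinate: writing $g(v):=\int_0^v\kappa(U,\bar v)\,d\bar v$ as in~\eqref{smooth_coordinate_f}, Lemma~\ref{sign} gives $g'=\kappa(U,\,\cdot\,)>0$, Lemma~\ref{regular} gives $g\in C^2$ (since $\kappa$ is $C^1$), and~\eqref{integral-k} gives $g(\infty)=\hat V<\infty$. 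Hence $v\mapsto\hat v$ is an admissible ($C^2$) reparametrization of $v$ on $[0,U]\times[0,\infty[$, and therefore the hatted functions satisfy the first order system~\eqref{r_u}$-$\eqref{kappa_at_u} with respect to $(u,\hat v)$ throughout $[0,U]\times[0,\hat V[$, the system being invariant under such reparametrizations.

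Next I would push the continuous extensions from $[\delta,U]\times[0,V]$ onto $[\delta,U]\times[0,\hat V]$. Put $\phi:=f\circ g^{-1}$, with $f$ as in~\eqref{smooth_coordinate}; then~\eqref{tildebar} gives $\phi'(\hat v)=-\widehat{(1-\mu)}(U,\hat v)$, which by Proposition~\ref{fechado-0} is continuous and bounded above and below by positive constants, so $\phi$ is an increasing $C^1$ map of $[0,\hat V]$ onto $[0,V]$ with $\phi(\hat V)=V$. Since $\hat h(u,\hat v)=\tilde h(u,\phi(\hat v))$ for $h\in\{r,\nu,\varpi,\zeta\}$, since $\hat\lambda=(\tilde\lambda\circ\phi)\,\phi'$, and since $\hat\kappa(u,\hat v)=\tilde\kappa(u,\phi(\hat v))/\tilde\kappa(U,\phi(\hat v))$ with $\tilde\kappa(U,\,\cdot\,)$ bounded away from zero on $[0,V]$, Proposition~\ref{fechado-0} immediately gives that each of $\hat r,\hat\nu,\hat\lambda,\hat\varpi,\hat\zeta,\hat\kappa$ extends continuously to $[\delta,U]\times[0,\hat V]$. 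Because $1-\mu$ is the scalar~\eqref{mu_0} formed from $r$ and $\varpi$, we have $\widehat{(1-\mu)}(u,\hat V)=\widetilde{(1-\mu)}(u,V)$, and the last sentence of Proposition~\ref{fechado-0} shows this is negative for every $u>0$ under the present hypothesis that $\zeta_0$ does not vanish identically on any right neighbourhood of $0$.

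It then remains to verify the first order equations on the boundary segment $[\delta,U]\times\{\hat V\}$, and here I would copy Step~3 of the proof of Proposition~\ref{fechado-0}: the equations carrying a $\partial_u$, namely~\eqref{r_u}, \eqref{lambda_u}, \eqref{omega_u} and~\eqref{kappa_u}, follow from the uniform (in $u\in[\delta,U]$) convergence as $\hat v\nearrow\hat V$ of the hatted functions and of their $u$-derivatives (the latter being polynomial expressions in the former, with $\hat r$ and $\hat\nu$ bounded away from zero); the equations carrying a $\partial_{\hat v}$, namely~\eqref{r_v} and~\eqref{nu_v}, follow by passing to the limit $\hat v_n\nearrow\hat V$ in their integrated forms, using continuity of the indefinite integral; and~\eqref{kappa_at_u} holds by continuity, since $\hat\lambda=\hat\kappa\,\widehat{(1-\mu)}$ on $[\delta,U]\times[0,\hat V[$. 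No extension is claimed for $\hat\theta$, and indeed none need exist: along $u=U$ one has $\hat\theta(U,\hat v)=\bigl[(1-\mu)\frac\theta\lambda\bigr](U,v)$, a bounded nonvanishing multiple of $\frac\theta\lambda$, which may be unbounded (for instance under the hypotheses of Proposition~\ref{meio}, by Theorem~\ref{inflation2}).

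I do not expect a genuinely new obstacle here: the one point that is more than bookkeeping is the identity $\phi(\hat V)=V$ --- that the ranges of the coordinates $\tilde v$ and $\hat v$ fit together --- which rests on $(1-\mu)(U,\,\cdot\,)$ being bounded away from zero; but this is precisely the nontrivial content of Proposition~\ref{fechado-0}, itself a consequence of the no-mass-inflation estimates~\eqref{eq11}$-$\eqref{eq12}, \eqref{eq10} and~\eqref{integral-k}. Should one instead prefer the self-contained route of re-running the three-step argument directly in the $(u,\hat v)$ coordinates, these same estimates, now combined with~\eqref{acsf} in place of~\eqref{acs}, are once again the only substantive input.
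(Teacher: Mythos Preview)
Your proposal is correct and follows essentially the same approach as the paper: the paper's proof is a terse two sentences stating that the result ``is a consequence of Proposition~\ref{fechado-0} and the fact that the map $\hat{v}\mapsto\tilde{v}$ extends to a $C^1$ map from $[0,\hat{V}]$ to $[0,V]$,'' followed by one sample chain-rule computation verifying~\eqref{r_v} at the Cauchy horizon. Your write-up simply fills in the details the paper leaves implicit --- the transformation rules for each hatted function, the reason $\phi$ is $C^1$ up to $\hat V$, and the Step~3 verification of the equations on the boundary --- so there is nothing to correct.
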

\begin{proof}
This is a consequence of Propostion~\ref{fechado-0} and the fact that
the map $\hat{v}\mapsto\tilde{v}$ extends to a $C^1$ map from $[0,\hat{V}]$
to $[0,V]$. For example, to check~\eqref{r_v} at the Cauchy horizon, note that
from $\partial_{\tilde{v}}\tilde{r}(u,\tilde{v})=\tilde\lambda(u,\tilde{v})$
we conclude that
\begin{eqnarray*}
\partial_{\hat{v}}\hat{r}(u,\hat{v})&=&
\partial_{\tilde{v}}\tilde{r}(u,\tilde{v})\frac{d\tilde{v}}{d\hat{v}}(\hat{v})
=\tilde\lambda(u,\tilde{v})[-(1-\mu)(U,v)]\\
&=&\frac{\lambda(u,v)}{\lambda(U,v)}(1-\mu)(U,v)=
\frac{\lambda(u,v)}{\kappa(U,v)}=\hat\lambda(u,\hat{v}).
\end{eqnarray*}
\end{proof}

The spherically symmetric Lorentzian manifold
${\cal M}$ is now ${\hat{{\cal Q}}}\times{\mathbb S}^2$, where
$\hat{{\cal Q}}$ admits the global null coordinate system
$(u,\hat v)$ defined on $[0,U]\times[0,\hat{V}]\setminus\{(0,\hat{V})\}$, and the metric is
$$ 
g = -\hat\Omega^2(u,\hat v)\,dud\hat v+\hat r^2(u,\hat v)\,\sigma_{{\mathbb S}^2},
$$ 
with $\hat\Omega^2=-4\hat\nu\hat\kappa$. 
The field $\hat\phi(u,\hat{v})$ equals ${\phi}(u,v)$ and so $\tilde{\phi}(u,\tilde{v})$.
The nonvanishing Christoffel symbols of the metric on ${\cal M}$ are written
as the ones above, with tildes replaced by hats. For example, instead 
of~\eqref{chris}, we have
\begin{eqnarray} 
\hat \Gamma^{\hat v}_{\hat v\hat v}
&=&
\hat \Omega^{-2}\partial_{\hat v}(\hat \Omega^2)\
=\frac{\partial_{\hat v} \hat\nu}{\hat\nu}+\frac{\partial_{\hat v} \hat\kappa}{\hat \kappa}.\label{chrisf}
\end{eqnarray}

\begin{Prop}\label{meiof}
Suppose that  
$$ 
c_2u^{s_2}\leq\zeta_0(u)\leq c_1u^{s_1}\ {\rm for\ some}\ \frac{7\Psi}9-1<s_1\leq s_2<\Psi-1,
$$ 
where $c_1,c_2>0$ and $u\in\left[0,U\right]$.
For any $\delta>0$ the field $\bigl|\frac{\hat\theta}{\hat\lambda}\bigr|(u,\hat v)$ tends to $+\infty$ as $\hat v\nearrow \hat{V}$, uniformly for\/ $u\in[\delta,U]$, provided that $U$ is sufficiently small.
Moreover, $\hat \Gamma^{\hat v}_{\hat v\hat v}(u,\hat{v})\to +\infty$ as $ \hat{v}\nearrow\hat{V}$ for each $u\in[\delta,U[$.
\end{Prop}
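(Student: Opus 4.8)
The plan is to reduce the first assertion to Theorems~\ref{thmNoMass} and~\ref{inflation2}, and then to run the argument of Proposition~\ref{meio} in the $(u,\hat v)$ coordinates. Since $v\mapsto\hat v$ has positive derivative $\kappa(U,v)$, one has $\hat\theta(u,\hat v)=\theta(u,v)/\kappa(U,v)$ and $\hat\lambda(u,\hat v)=\lambda(u,v)/\kappa(U,v)$, so $\bigl|\frac{\hat\theta}{\hat\lambda}\bigr|(u,\hat v)=\bigl|\frac{\theta}{\lambda}\bigr|(u,v)$ is coordinate independent. The upper bound $\zeta_0(u)\le c_1u^{s_1}$ with $s_1>\frac{7\Psi}9-1$ gives $\varpi(u,\infty)<\infty$ on $]0,U]$ for $U$ small, by Theorem~\ref{thmNoMass}; the lower bound $\zeta_0(u)\ge c_2u^{s_2}$ with $s_2<\Psi-1$ then allows us to apply Theorem~\ref{inflation2}, so $\bigl|\frac{\theta}{\lambda}\bigr|(u,v)\to+\infty$ uniformly for $u\in[\delta,U]$ as $v\nearrow\infty$, i.e.\ $\bigl|\frac{\hat\theta}{\hat\lambda}\bigr|(u,\hat v)\to+\infty$ uniformly as $\hat v\nearrow\hat V$. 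That is the first claim.

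The next step is to upgrade this to divergence of $\hat\theta$ itself and to produce a positive lower bound for $\hat\zeta$. By Proposition~\ref{fechado-0f}, the functions $\hat r,\hat\nu,\hat\lambda,\hat\varpi,\hat\zeta,\hat\kappa$ extend continuously to the compact rectangle $[\delta,U]\times[0,\hat V]$ (with $\hat r$ and $\hat\nu$ bounded away from $0$), and $\widehat{(1-\mu)}(u,\hat V)<0$ for $u>0$; since also $\widehat{(1-\mu)}<0$ in the interior (Lemma~\ref{sign}) and $\hat\kappa(u,\hat v)\ge\hat\kappa(U,\hat v)\equiv1$ by~\eqref{acsf} (as $\hat\nu<0$), we get $\hat\lambda=\hat\kappa\,\widehat{(1-\mu)}\le-c_\delta<0$ on $[\delta,U]\times[0,\hat V[$, as in Proposition~\ref{meio}. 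Hence $|\hat\theta|=|\hat\lambda|\,\bigl|\frac{\hat\theta}{\hat\lambda}\bigr|\ge c_\delta\bigl|\frac{\hat\theta}{\hat\lambda}\bigr|\to+\infty$ uniformly for $u\in[\delta,U]$, and since $\hat\theta>0$ on $]0,U]\times[0,\hat V[$ by Lemma~\ref{sinal-theta}, in fact $\hat\theta\to+\infty$ uniformly. The same sign lemma makes $-\hat\theta\hat\nu/\hat r>0$, so $\hat\zeta$ is nondecreasing in $\hat v$ by~\eqref{zeta_v}, whence $\hat\zeta(u,\hat v)\ge\hat\zeta(u,0)=\zeta_0(u)\ge c_2\delta^{s_2}>0$ for $u\in[\delta,U]$.

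For the Christoffel symbol I would integrate, as in Proposition~\ref{meio}, the wave equation $\partial_u\partial_{\hat v}\log\hat\Omega^2=-\frac{2\hat\theta\hat\zeta}{\hat r^2}+\text{bounded terms}$ (the terms $\frac{4\hat\kappa\hat\nu e^2}{\hat r^4}$, $-\frac{2\hat\kappa\hat\nu}{\hat r^2}$, $\frac{2\hat\lambda\hat\nu}{\hat r^2}$ being bounded on the rectangle by Proposition~\ref{fechado-0f}). Using $\hat\Gamma^{\hat v}_{\hat v\hat v}=\partial_{\hat v}\log\hat\Omega^2$ from~\eqref{chrisf} and integrating in $\bar u$ from $u$ to $U$,
$$
\hat\Gamma^{\hat v}_{\hat v\hat v}(u,\hat v)=\hat\Gamma^{\hat v}_{\hat v\hat v}(U,\hat v)+\int_u^U\frac{2\hat\theta\hat\zeta}{\hat r^2}(\bar u,\hat v)\,d\bar u+O(1);
$$
here $\hat\Gamma^{\hat v}_{\hat v\hat v}(U,\hat v)=\frac{\partial_{\hat v}\hat\nu}{\hat\nu}(U,\hat v)=\hat\kappa\,\widehat{\partial_r(1-\mu)}(U,\hat v)$ is bounded, because $\hat\kappa(U,\cdot)\equiv1$ kills the $\frac{\partial_{\hat v}\hat\kappa}{\hat\kappa}$ contribution. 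With $\hat\zeta\ge c_2\delta^{s_2}$, $\hat r\le r_+$ and the uniform divergence of $\hat\theta$, for each $u\in[\delta,U[$ the integral is at least $\frac{2c_2\delta^{s_2}}{r_+^2}(U-u)\min_{\bar u\in[u,U]}\hat\theta(\bar u,\hat v)\to+\infty$ as $\hat v\nearrow\hat V$, which yields the second claim. (The same identity can be obtained by differentiating~\eqref{acsf} in $\hat v$ and using~\eqref{zeta_v}, \eqref{r_v} and~\eqref{nu_v}.)

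The one point that requires genuine care is the passage from $\bigl|\frac{\hat\theta}{\hat\lambda}\bigr|\to+\infty$ to $\hat\theta\to+\infty$ with a fixed sign: this relies on both the uniform lower bound $|\hat\lambda|\ge c_\delta$ coming from Proposition~\ref{fechado-0f} and the positivity of $\hat\theta$ and $\hat\zeta$ from Lemma~\ref{sinal-theta}, ensuring that the divergent integrand $\frac{2\hat\theta\hat\zeta}{\hat r^2}$ genuinely controls the right-hand side and is not cancelled by the bounded remainder. The rest is the bookkeeping already done for the $(u,\tilde v)$ coordinates in Proposition~\ref{meio}.
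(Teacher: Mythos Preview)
Your proof is correct and follows essentially the same strategy as the paper. The first assertion is handled identically (coordinate invariance of $\theta/\lambda$, then Theorems~\ref{thmNoMass} and~\ref{inflation2}); for the second, the paper invokes the ``at most one exceptional $u$'' conclusion of Proposition~\ref{meio} together with Remark~\ref{piao}, whereas you go straight to the endpoint $u_2=U$, use $\hat\kappa(U,\cdot)\equiv1$ to see $\hat\Gamma^{\hat v}_{\hat v\hat v}(U,\hat v)$ is bounded, and integrate the wave equation for $\log\hat\Omega^2$ back to $u$ --- this is exactly the content of Remark~\ref{piao} specialized to the known bounded endpoint, so the arguments coincide in substance.
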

\begin{proof}
The fact that $\bigl|\frac{\hat\theta}{\hat\lambda}\bigr|(u,\hat v)$ tends to $+\infty$, uniformly for\/ $u\in[\delta,U]$, as $\hat v\nearrow \hat V$ follows from
Proposition~\ref{meio} because this quantity is invariant under changes of coordinates.

By construction, the coordinate transformation from $\tilde v$ to $\hat v$ has regularity
$C^2([0,V[)$. Since by Proposition~\ref{fechado-0}  $(1-\mu)$ is bounded and bounded away from zero, we have
from~\eqref{tildebar} that both $\frac{d\hat v}{d\tilde v}$ and $\frac{d\tilde v}{d\hat v}$
are always different from zero. Therefore, this coordinate transformation has in fact regularity $C^1([0,V])\cap C^2([0,V[)$.
Hence, the fact that $\tilde\lambda$ is bounded and bounded away from zero in
$[\delta,U]\times[0,V]$, see Proposition~\ref{fechado-0} and \eqref{lambdapositive}, implies that
$\hat\lambda$ is bounded and bounded away from zero in $[\delta,U]\times[0,\hat V]$. In particular, we recover the fact that
$\hat\kappa$ is bounded in $[\delta,U]\times[0,\hat V]$.
Therefore, the proof of Proposition~\ref{meio} applies to the present case and for all $u\in[\delta,U]$,
with one possible exception, $\hat \Gamma^{\hat v}_{\hat v\hat v}(u,\hat{v})$ is unbounded as $ \hat{v}\nearrow\hat{V}$.

In this case we have that
$\hat \Gamma^{\hat v}_{\hat v\hat v}(U,\hat{v})$ 
is bounded because $\hat{\kappa}(U,\hat{v})\equiv 1$ (see \eqref{chrisf} and \eqref{nu_v}). It follows from Remark~\ref{piao} that
$$
\hat \Gamma^{\hat v}_{\hat v\hat v}(u,\hat{v})\to +\infty\ \ \ {\rm as}\ \ \ \hat{v}\to\hat{V}
$$
for each $u\in[\delta,U[$.
\end{proof}

\begin{Rmk}
Examining the proof of\/ {\rm Proposition~\ref{meiof}}, we see that we only used the
specific form of the coordinates $\hat v$ in the last paragraph.
So, if $\tilde v\mapsto\mathring v$ is any coordinate transformation with
regularity $C^1([0,V])\cap C^2([0,V[)$, we conclude that there is at most one value
$u>0$ for which $\mathring \Gamma^{\hat v}_{\mathring v\mathring v}(u,\mathring{v})$
is bounded. If\/ $\mathring \Gamma^{\mathring v}_{\mathring v\mathring v}(\bar u,\mathring{v})$
is bounded, then $\mathring \Gamma^{\mathring v}_{\mathring v\mathring v}(u,\mathring{v})\to-\infty$ as $\mathring v\to\mathring V$ for $u>\bar u$, and
$\mathring \Gamma^{\mathring v}_{\mathring v\mathring v}(u,\mathring{v})\to+\infty$ as $\mathring v\to\mathring V$ for $u<\bar u$.
This excludes the existence of $C^{0,1}$ extensions of the metric using
these coordinates.

\end{Rmk}


\begin{Rmk} 
Suppose that the hypotheses of\/ {\rm Proposition~\ref{meio}} hold. Then $\tilde \Gamma^{\tilde v}_{\tilde v\tilde v}(U,\tilde{v})$ tends
to $-\infty$ as $\tilde v\to V$, and $\hat \Gamma^{\hat v}_{\hat v\hat v}(U,\hat{v})$ is bounded. From elementary Riemannian geometry we have
\[
\hat\Gamma^{\hat v}_{\hat v\hat v} = \frac{d\tilde v}{d\hat v} \tilde\Gamma^{\tilde v}_{\tilde v\tilde v} + \frac{d\hat v}{d\tilde v} \frac{d^2\tilde v}{d\hat v^2},
\]
and so $\frac{d^2\tilde v}{d\hat v^2}$ must blow up at the Cauchy horizon (as was already shown in \eqref{ddoisddois} by direct computation). This again shows that the two coordinate systems $(u,\tilde v)$ and $(u,\hat v)$ are not $C^2$ compatible. More generally, the same reasoning can be applied to show the $C^2$ incompatibility of any two coordinate systems whose Christoffel symbols $\Gamma^{v}_{vv}$ have different asymptotic behavior at the Cauchy horizon. In particular, different choices of $U$ yield incompatible $(u,\hat v)$ coordinates (when $\frac{\theta}{\lambda}$ is unbounded).
\end{Rmk}

It turns out that, although unbounded, the Christoffel symbols of the $(u,\hat v)$ coordinates are in $L^2$.

\begin{Prop}\label{CC}
Suppose that  
$$ 
|\zeta_0(u)|\leq cu^s\ {\rm for\ some\ nonnegative}\ s>\frac{7\Psi}9-1,
$$ 
where $c>0$ and $u\in\left[0,U\right]$.
For any $0<\delta<U$, the Christoffel symbols $\hat \Gamma^C_{AB}$, $\hat \Gamma^u_{AB}$, $\hat \Gamma^{\hat v}_{AB}$, 
$\hat \Gamma^A_{B\hat u}$, $\hat \Gamma^A_{B\hat v}$ and~$\hat \Gamma^u_{uu}$  
are bounded in $[\delta,U]\times[0,\hat{V}]$, provided that $U$ is sufficiently small.
Furthermore,
$\int_0^{\hat{V}}|\hat \Gamma^{\hat v}_{\hat v\hat v}|^2(u,\hat v)\,d\hat v$ and
$\int_0^{\hat{V}}|\hat\theta|^2(u,\hat v)\,d\hat v$
are bounded for $u\in[\delta,U]$.
Consequently, the Christoffel symbols and $\hat\theta$ (and also $\hat\zeta$) belong to 
$L^2({\cal M}_\delta)$, with ${\cal M}_\delta$ the preimage of $[\delta,U]\times[0,\hat{V}]$ by the double null coordinate system $(u,\hat v)$.
\end{Prop}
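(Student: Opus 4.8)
The proof will be built on Proposition~\ref{fechado-0f}, which supplies continuous extensions of $\hat r$, $\hat\nu$, $\hat\lambda$, $\hat\varpi$, $\hat\zeta$ and $\hat\kappa$ to the compact rectangle $[\delta,U]\times[0,\hat V]$, together with the first order system~\eqref{r_u}--\eqref{kappa_u} written in the $(u,\hat v)$ coordinates. The first step is to collect the elementary bounds these yield on $[\delta,U]\times[0,\hat V]$: all six functions are continuous, hence bounded; $\hat r$ is bounded below by a positive constant (Theorem~\ref{r-stability}); $\hat\nu<0$ is bounded away from zero (Lemma~\ref{sign} for $\hat v<\hat V$, and the analogue of~\eqref{nu_negativo} at $\hat v=\hat V$); $\widehat{1-\mu}<0$ is bounded away from zero (Lemma~\ref{sign} and Proposition~\ref{fechado-0f}); and $\hat\kappa\geq 1$, because the exponent in~\eqref{acsf} is nonnegative, so that $\hat\lambda=\hat\kappa\,\widehat{1-\mu}$ is bounded away from zero as well. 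Given these, the angular Christoffel symbols are immediate: $\hat\Gamma^C_{AB}$ are the fixed smooth Christoffel symbols of the round metric, while $\hat\Gamma^u_{AB}$, $\hat\Gamma^{\hat v}_{AB}$, $\hat\Gamma^A_{B\hat u}$ and $\hat\Gamma^A_{B\hat v}$ are, by the formulas~\eqref{chris} with tildes replaced by hats, products and quotients of the functions just listed, hence bounded.

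Next I would treat $\hat\Gamma^u_{uu}=\frac{\partial_u\hat\nu}{\hat\nu}+\frac{\partial_u\hat\kappa}{\hat\kappa}$. By~\eqref{kappa_u} the second summand equals $\frac{\hat\zeta^2}{\hat r\hat\nu}$ and is bounded. For the first, integrate~\eqref{nu_v} in the form $\partial_{\hat v}\ln|\hat\nu|=\hat\kappa\,\widehat{\partial_r(1-\mu)}$ from $\hat v=0$, where $\hat\nu\equiv-1$, to get $\ln|\hat\nu|(u,\hat v)=\int_0^{\hat v}\bigl(\hat\kappa\,\widehat{\partial_r(1-\mu)}\bigr)(u,\bar v)\,d\bar v$; differentiating in $u$ under the integral sign (the integrand being $C^1$ on the open rectangle, with $u$-derivative extending continuously to $\hat v=\hat V$) then gives $\frac{\partial_u\hat\nu}{\hat\nu}=\int_0^{\hat v}\partial_u\bigl(\hat\kappa\,\widehat{\partial_r(1-\mu)}\bigr)(u,\bar v)\,d\bar v$. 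The integrand here is bounded on the rectangle, since $\partial_u\hat\kappa=\frac{\hat\kappa\hat\zeta^2}{\hat r\hat\nu}$ is bounded and $\partial_u\widehat{\partial_r(1-\mu)}$ involves only $\hat\nu=\partial_u\hat r$ and $\partial_u\hat\varpi=\frac{\widehat{1-\mu}}{2\hat\nu}\hat\zeta^2$ (from~\eqref{omega_u}), all bounded. Hence $\bigl|\frac{\partial_u\hat\nu}{\hat\nu}\bigr|\leq\hat V\,\|\partial_u(\hat\kappa\,\widehat{\partial_r(1-\mu)})\|_\infty$ and $\hat\Gamma^u_{uu}$ is bounded.

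The two $L^2$ bounds are the core of the argument. For $\hat\theta$, integrating~\eqref{omega_v} in the $\hat v$ direction gives $\int_0^{\hat V}\frac{\hat\theta^2}{\hat\kappa}(u,\hat v)\,d\hat v=2\bigl(\hat\varpi(u,\hat V)-\hat\varpi(u,0)\bigr)=2\bigl(\varpi(u,\infty)-\varpi(u,0)\bigr)$, which is finite and uniformly bounded for $u\in[\delta,U]$ precisely because Theorem~\ref{thmNoMass} rules out mass inflation (indeed $\varpi$ is nondecreasing in $u$ by Lemma~\ref{sign}, so this is at most $2(\varpi(U,\infty)-\varpi_0)$); since $\hat\kappa$ is bounded above, $\int_0^{\hat V}|\hat\theta|^2(u,\hat v)\,d\hat v\leq\|\hat\kappa\|_\infty\cdot 2\bigl(\varpi(U,\infty)-\varpi_0\bigr)$. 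For $\hat\Gamma^{\hat v}_{\hat v\hat v}=\frac{\partial_{\hat v}\hat\nu}{\hat\nu}+\frac{\partial_{\hat v}\hat\kappa}{\hat\kappa}$, the first term equals $\hat\kappa\,\widehat{\partial_r(1-\mu)}$ and is bounded. For the second, differentiate~\eqref{acsf} in $\hat v$: using $\partial_{\hat v}\hat\zeta=-\frac{\hat\theta\hat\nu}{\hat r}$ (which is~\eqref{zeta_v} in the $(u,\hat v)$ coordinates) one finds $\partial_{\hat v}\bigl(\frac{\hat\zeta^2}{\hat r\hat\nu}\bigr)=-\frac{2\hat\zeta\hat\theta}{\hat r^2}+(\text{bounded})$, so that $\bigl|\frac{\partial_{\hat v}\hat\kappa}{\hat\kappa}\bigr|(u,\hat v)\leq C+C\int_\delta^U|\hat\theta|(\bar u,\hat v)\,d\bar u$. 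Squaring, integrating over $\hat v\in[0,\hat V]$, applying the Cauchy--Schwarz inequality in $\bar u$ and then Tonelli's theorem reduces $\int_0^{\hat V}|\hat\Gamma^{\hat v}_{\hat v\hat v}|^2(u,\hat v)\,d\hat v$ to a constant (independent of $u$) times $\sup_{\bar u\in[\delta,U]}\int_0^{\hat V}|\hat\theta|^2(\bar u,\hat v)\,d\hat v$, which was just bounded. Finally, the metric volume element on $\mathcal{M}_\delta$ is $\frac12\hat\Omega^2\hat r^2$ times the area element of $\mathbb{S}^2$, with $\hat\Omega^2=-4\hat\nu\hat\kappa$ and $\hat r^2$ bounded on the rectangle; integrating the pointwise bounds (for $\hat\Gamma^C_{AB}$, $\hat\Gamma^u_{AB}$, $\hat\Gamma^{\hat v}_{AB}$, $\hat\Gamma^A_{B\hat u}$, $\hat\Gamma^A_{B\hat v}$, $\hat\Gamma^u_{uu}$ and $\hat\zeta$) and the $L^2_{\hat v}$ bounds (for $\hat\Gamma^{\hat v}_{\hat v\hat v}$ and $\hat\theta$) over the compact set $[\delta,U]\times[0,\hat V]\times\mathbb{S}^2$ gives membership in $L^2(\mathcal{M}_\delta)$.

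The main obstacle is the estimate for $\frac{\partial_{\hat v}\hat\kappa}{\hat\kappa}$: one has to differentiate the integral representation~\eqref{acsf} of $\hat\kappa$ and recognize that the only possibly unbounded contribution comes, via~\eqref{zeta_v}, from the factor $\hat\theta$, and that this is exactly the quantity that the no-mass-inflation identity $\int_0^{\hat V}\frac{\hat\theta^2}{\hat\kappa}\,d\hat v=2\,\Delta\hat\varpi<\infty$ controls, though only in $L^2$ and not in $L^\infty$. The boundedness of $\hat\Gamma^u_{uu}$ (via the representation $\frac{\partial_u\hat\nu}{\hat\nu}=\int_0^{\hat v}\partial_u(\hat\kappa\,\widehat{\partial_r(1-\mu)})\,d\bar v$) is the other slightly delicate point; everything else is routine bookkeeping built on Propositions~\ref{fechado-0} and~\ref{fechado-0f}.
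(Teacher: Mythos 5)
Your proposal is correct and follows essentially the same route as the paper: boundedness of the "good" Christoffel symbols from the continuous extensions of Proposition~\ref{fechado-0f} (with $\hat\Gamma^u_{uu}$ handled via the integral representation of $\hat\nu$ obtained from \eqref{nu_v}), the $L^2_{\hat v}$ bound on $\hat\theta$ from integrating \eqref{omega_v} and no mass inflation, and the $L^2_{\hat v}$ bound on $\hat\Gamma^{\hat v}_{\hat v\hat v}$ by differentiating \eqref{acsf} in $\hat v$, using \eqref{zeta_v} to isolate the $\hat\zeta\hat\theta$ term, and then Cauchy--Schwarz plus Fubini--Tonelli. The only differences are expository (you spell out the differentiation under the integral sign for $\partial_u\hat\nu$ and note $\hat\kappa\geq 1$, which the paper leaves implicit).
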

\begin{proof}
In the proof of Proposition~\ref{fechado-0f} we showed that all the functions in the first order system except $\hat\theta$,
i.e.\ the functions $\hat r$, $\hat\nu$, $\hat\lambda$, $\hat\varpi$, $\hat\zeta$ and $\hat\kappa$, extend to 
continuous functions in $[\delta,U]\times[0,\hat{V}]$, with $\hat r>0$, $\hat\nu<0$ and $\hat\kappa>0$. In addition, we proved that all
the equations of the first order system~\eqref{r_u}$-$\eqref{kappa_at_u}, except~\eqref{omega_v}, \eqref{theta_u} and~\eqref{zeta_v},
are satisfied in $[\delta,U]\times[0,\hat{V}]$; in particular~\eqref{kappa_u} (the equation for $\partial_u\hat\kappa$) is satisfied in this rectangle; moreover, 
the expression for $\partial_u\hat\nu$ is obtained from
$$
\hat\nu(u,\hat v)=-e^{-\int_0^{\hat{v}}\left(2\hat\kappa\frac{1}{\hat{r}^2}\left(
\frac{e^2}{\hat r}+\frac{\Lambda}{3}\hat{r}^3-\hat\omega
\right)\right)(u,\bar v)\,d\bar v}.
$$
Therefore, $\hat \Gamma^C_{AB}$, $\hat \Gamma^u_{AB}$, $\hat \Gamma^{\hat v}_{AB}$,
$\hat \Gamma^A_{B\hat u}$, $\hat \Gamma^A_{B\hat v}$,
and $\hat \Gamma^u_{uu}$ are bounded in $[\delta,U]\times[0,\hat{V}]$.

Proposition~\ref{fechado-0} and \eqref{lambdapositive} give a positive lower bound for $\hat{\kappa}$ in $[\delta,U]\times[0,\hat{V}]$.
By~\eqref{omega_v}, we then know that
$\int_0^{\hat{V}}|\hat\theta|^2(u,\hat v)\,d\hat v$ is bounded for $u\in[\delta,U]$.
Differentiating both sides of~\eqref{acsf} with respect to $\hat{v}$, and using~\eqref{nu_v} and~\eqref{zeta_v}, we get
\begin{eqnarray*}
\frac{\partial_{\hat v}\hat\kappa}{\hat\kappa}(u,\hat{v})&=&-
2\int_u^U\frac{\hat{\zeta}{\partial_{\hat v}\hat\zeta}}{\hat r\hat{\nu}}(\bar u,\hat v)\,d\bar u
+\int_u^U\frac{\hat{\zeta}^2\hat{\lambda}}{\hat{r}^2\hat{\nu}}(\bar u,\hat v)\,d\bar u\\
&&+\int_u^U\frac{\hat{\zeta}^2\partial_{\hat v}\hat{\nu}}{\hat{r}\hat{\nu}^2}(\bar u,\hat v)\,d\bar u\\
&=&
2\int_u^U\frac{\hat{\zeta}\hat{\theta}}{\hat r^2}(\bar u,\hat v)\,d\bar u
+\int_u^U\frac{\hat{\zeta}^2\hat{\lambda}}{\hat{r}^2\hat{\nu}}(\bar u,\hat v)\,d\bar u\\
&&+\int_u^U\frac{\hat{\zeta}^2\hat{\kappa}\partial_{\hat{r}}(\widehat{1-\mu})}{\hat{r}\hat{\nu}}(\bar u,\hat v)\,d\bar u.
\end{eqnarray*}
Let $0<\delta<U$. From \eqref{nu_negativo} we have $\hat \nu(u,\hat V) \leq -c < 0$
for $u \in [\delta, U]$ (note that $\hat{\nu}(u,\hat{v})=\tilde{\nu}(u,\tilde{v}(\hat{v}))$). The previous equality, \eqref{nu_v} and~\eqref{chrisf} then imply that there exists a $C>0$
such that
\[
|\hat \Gamma^{\hat v}_{\hat v\hat v}|(u,\hat v) \leq C\left(1+
\int_u^U|\hat\theta|(\bar u,\hat v)\,d\bar u
\right)
\]
for $(u,\hat v)\in[\delta,U]\times[0,\hat{V}[$, and so, using H\"older's inequality, 
\begin{eqnarray*}
|\hat \Gamma^{\hat v}_{\hat v\hat v}|^2(u,\hat v)
&\leq& C\left(1+
\int_u^U|\hat\theta|^2(\bar u,\hat v)\,d\bar u
\right).
\end{eqnarray*}
Therefore,
\begin{eqnarray*}
\int_0^{\hat{V}}|\hat \Gamma^{\hat v}_{\hat v\hat v}|^2(u,\hat v)\,d\hat{v}
&\leq& C\left(1+
\int_0^{\hat{V}}\int_u^U|\hat\theta|^2(\bar u,\hat v)\,d\bar ud\hat{v}
\right)\\
&=&C\left(1+
\int_u^U\int_0^{\hat{V}}|\hat\theta|^2(\bar u,\hat v)\,d\hat{v}d\bar u
\right)\\
&\leq&C,
\end{eqnarray*}
for $u\in[\delta,U]$.

Finally, note that the square of the $L^2$ norm of a function $\hat h$ on ${\cal M}_\delta$ is given by
$$
\int_{{\cal M}_\delta}{\hat h}^2\,dV_4=4\pi\int_{[\delta,U]\times[0,\hat{V}]}\left[\hat r^2\frac{\hat\Omega^2}2\hat h^2\right](u,\hat v)\,dud\hat v.
$$
Since the functions $\hat r$ and $\hat\Omega^2=-4\hat\nu\hat\kappa$ are bounded in $[\delta,U]\times[0,\hat{V}]$, we conclude that the Christoffel symbols and $\hat\theta$ are in $L^2({\cal M}_\delta)$.
\end{proof}

So, in our framework the Christodoulou-Chru\'sciel formulation of strong cosmic censorship (see~\cite{Christodoulou:2008}) does not hold:
\begin{Cor}\label{cc-fails} 
Suppose that
$$ 
|\zeta_0(u)|\leq cu^s\ {\rm for\ some\ nonnegative}\ s>\frac{7\Psi}9-1,
$$ 
where $c>0$ and $u\in\left[0,U\right]$.
Then, provided that $U$ is sufficiently small,
$({\cal M},g)$ and $\hat\phi$ extend across the Cauchy horizon (in a non-unique way) to
spherically symmetric 
$(\check{\cal M},\check g)$ and $\check \phi$, with
$\check{\cal M}=\check{\cal Q}\times{\mathbb S}^2$ a $C^2$ manifold and
$$
\check g=-\check\Omega^2(u,\hat v)\,dud\hat v+\check r^2(u,\hat v)\,\sigma_{{\mathbb S}^2}
$$
a $C^0$~metric on $\check{\cal M}$.
 Here $\check{\cal Q}$ has a global null coordinate system
$(u,\hat v)$ defined on $[0,U]\times[0,\hat{V}]\setminus\{(0,\hat{V})\}\cup{\cal V}$, with ${\cal V}$ a neighborhood of $]0,U]\times\{\hat{V}\}$.
Furthermore,
\begin{equation}\label{Gerrard}
\check \Gamma \in L^2_{{\rm loc}}\quad{\rm and}\quad\check \phi \in H^1_{{\rm loc}}.
\end{equation}
\end{Cor}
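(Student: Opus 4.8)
The plan is to obtain this corollary essentially for free by repackaging Propositions~\ref{fechado-0f} and~\ref{CC}, together with an elementary continuation of the coordinate rectangle past $\hat v=\hat V$. First I would dispose of the trivial alternative allowed by the hypothesis: if $\zeta_0$ vanishes on some right neighbourhood of $0$, the solution near the axis coincides with a Reissner--Nordstr\"om solution, for which a $C^\infty$ extension across the Cauchy horizon (in particular $C^0$, with $L^2_{\rm loc}$ Christoffel symbols and $H^1_{\rm loc}$ field) is classical; in that case one works with the always-finite coordinate $\tilde v$ of Section~5.1 rather than $\hat v$. So assume henceforth that there is a sequence $u_n\searrow 0$ with $\zeta_0(u_n)\neq 0$, the standing hypothesis of Proposition~\ref{fechado-0f}, so that in particular $\hat V<\infty$ by \eqref{integral-k}.

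Next I would assemble the extension up to the Cauchy horizon. The original solution furnishes a $C^0$ Lorentzian metric $g=-\hat\Omega^2\,dud\hat v+\hat r^2\,\sigma_{{\mathbb S}^2}$, $\hat\Omega^2=-4\hat\nu\hat\kappa$, on $[0,U]\times[0,\hat V[\,\times{\mathbb S}^2$, and $\hat\phi$ is bounded and continuous there (using $\int_0^v|\theta|+\int_0^u|\zeta|\le\underline C$ from \cite[Proposition~13.2]{Dafermos2}). By Proposition~\ref{fechado-0f}, for each $\delta\in(0,U)$ the functions $\hat r,\hat\nu,\hat\lambda,\hat\varpi,\hat\zeta,\hat\kappa$ extend continuously to $[\delta,U]\times[0,\hat V]$, with $\hat r>0$, $\hat\nu<0$, $\hat\kappa>0$ and $\widehat{(1-\mu)}(\,\cdot\,,\hat V)<0$; letting $\delta\searrow 0$ and recalling that $\hat v=\int_0^v\kappa(U,\bar v)\,d\bar v$ is a $C^2$ reparametrisation of the $v$-axis with positive derivative, the functions $\hat r$, $\hat\Omega^2$ and $\hat\phi$ extend continuously to $[0,U]\times[0,\hat V]\setminus\{(0,\hat V)\}$, with $\hat r>0$ and $\hat\Omega^2>0$ throughout. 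I would then enlarge the coordinate domain to $\check{\cal Q}=\bigl([0,U]\times[0,\hat V]\setminus\{(0,\hat V)\}\bigr)\cup{\cal V}$, where ${\cal V}$ is a neighbourhood of $]0,U]\times\{\hat V\}$ in which $\hat v$ ranges over $[0,\hat V+\eta)$, declare $(u,\hat v)$ to be a global chart (so that $\check{\cal Q}$, and hence $\check{\cal M}=\check{\cal Q}\times{\mathbb S}^2$, is a $C^\infty$, a fortiori $C^2$, manifold with boundary), and on ${\cal V}\cap\{\hat v>\hat V\}$ define $\check r,\check\Omega^2,\check\phi$ to be any smooth functions matching their continuous limits at $\hat v=\hat V$, with $\check r>0$ and $\check\Omega^2>0$. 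This yields the $C^0$ Lorentzian metric $\check g=-\check\Omega^2\,dud\hat v+\check r^2\,\sigma_{{\mathbb S}^2}$ on $\check{\cal M}$ extending $g$, and the continuous $\check\phi$ extending $\hat\phi$; the freedom in the choice on $\hat v>\hat V$ (for instance $\check\varpi$ may be prescribed arbitrarily there) makes the extension manifestly non-unique. If one additionally wants the extension itself to satisfy Einstein's equations, one instead prescribes free characteristic data $(\hat\lambda,\hat\varpi,\hat\theta,\hat\kappa)(U,\,\cdot\,)$ on $[\hat V,\hat V+\eta]$ compatible with the data already obtained on $\hat v=\hat V$ and applies the backwards well-posedness Theorem~\ref{existence-alt}; this is the content of Theorem~\ref{fine}, but it is not needed here.

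Finally I would verify \eqref{Gerrard}. Fix a compact $K\subset\check{\cal Q}$; since $(0,\hat V)\notin\check{\cal Q}$, there is $\delta>0$ with $K\subset[\delta,U]\times[0,\hat V+\eta)$. On $[\delta,U]\times[0,\hat V]$, Proposition~\ref{CC} bounds $\hat\Gamma^C_{AB}$, $\hat\Gamma^u_{AB}$, $\hat\Gamma^{\hat v}_{AB}$, $\hat\Gamma^A_{B\hat u}$, $\hat\Gamma^A_{B\hat v}$ and $\hat\Gamma^u_{uu}$, and bounds $\int_0^{\hat V}|\hat\Gamma^{\hat v}_{\hat v\hat v}|^2(u,\hat v)\,d\hat v$ and $\int_0^{\hat V}|\hat\theta|^2(u,\hat v)\,d\hat v$ uniformly in $u\in[\delta,U]$; on $[\delta,U]\times[\hat V,\hat V+\eta)$ every Christoffel symbol of $\check g$ is smooth, hence bounded on $K$. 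Since $\{\hat v=\hat V\}$ has Lebesgue measure zero, every Christoffel symbol of $\check g$ is square integrable over $K$ with respect to $du\,d\hat v$, and since the four-volume element equals $4\pi\hat r^2\tfrac{\hat\Omega^2}{2}\,du\,d\hat v$ with $\hat r^2\hat\Omega^2$ bounded above and away from $0$ on $K$, we obtain $\check\Gamma\in L^2_{\rm loc}(\check{\cal M})$. For the field, on $[\delta,U]\times[0,\hat V]$ one has $\partial_u\hat\phi=\hat\zeta/\hat r$ and $\partial_{\hat v}\hat\phi=\hat\theta/\hat r$ with $\hat r$ bounded away from $0$, so the $L^2$ bounds on $\hat\theta$ and $\hat\zeta$ from Proposition~\ref{CC} give $\nabla\hat\phi\in L^2$ there while $\hat\phi$ is bounded; on $\{\hat v>\hat V\}$, $\check\phi$ is smooth; hence $\check\phi\in H^1_{\rm loc}(\check{\cal M})$. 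I expect the only genuinely non-routine ingredient to be the one already supplied by Proposition~\ref{CC}, namely the uniform $L^2$ bound on $\hat\Gamma^{\hat v}_{\hat v\hat v}$ up to the Cauchy horizon: by Proposition~\ref{meiof} this Christoffel symbol is in general \emph{unbounded} as $\hat v\nearrow\hat V$, so the $L^2_{\rm loc}$ conclusion cannot follow from boundedness and rests squarely on that estimate; everything else --- the exhaustion as $\delta\searrow 0$, the choice of $C^2$ atlas on $\check{\cal Q}$, the arbitrary smooth continuation past $\hat v=\hat V$, and the passage from the coordinate rectangle to $\check{\cal M}$ via the non-degenerate bounded volume element --- is bookkeeping.
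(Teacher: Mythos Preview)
Your proposal is correct and follows essentially the same route as the paper: invoke Proposition~\ref{fechado-0f} for the continuous limits at $\hat v=\hat V$, invoke Proposition~\ref{CC} for the $L^2$ control of $\hat\Gamma^{\hat v}_{\hat v\hat v}$ and $\hat\theta$, and then continue $\check r,\check\Omega^2,\check\phi$ past $\hat V$ in an arbitrary way compatible with continuity. The only difference is cosmetic: the paper writes down one explicit extension (constant $\check\Omega^2$ and $\check\phi$, affine $\check r$ in $\hat v$), which makes the verification that the Christoffel symbols beyond $\hat V$ are bounded immediate, whereas you appeal to a generic ``smooth extension''---just make sure your extension is chosen smooth up to and including $\hat v=\hat V$ from the right so that ``smooth, hence bounded on $K$'' is literally true.
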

\begin{proof}
For $(u,\hat v)$ with $u>0$ and $\hat v>\hat{V}$, define
$$
\check\Omega^2(u,\hat v)=\hat\Omega^2(u,\hat{V}),\qquad\check\phi(u,\hat v)=\hat\phi(u,\hat{V}),
$$
and
$$
\check r(u,\hat v)=\hat r(u,\hat{V})+\hat\lambda(u,\hat{V})(\hat v-\hat{V}).
$$
Choose a neighborhood ${\cal V}$ of $]0,U]\times\{\hat{V}\}$ such that $\check r>0$ on $[0,U]\times[0,\hat{V}]\setminus\{(0,\hat{V})\}\cup{\cal V}$.
The extensions $\check\Omega^2$, $\check\phi$ and $\check r$ of $\hat\Omega^2$, $\hat\phi$ and $\hat r$ are continuous.
For $u>0$ and $\hat v>\hat{V}$, we get
$$
\partial_u\check\Omega^2(u,\hat v)=\partial_u\hat\Omega^2(u,\hat{V}),\qquad\partial_{\hat v}\check\Omega^2(u,\hat v)=0,
$$
$$
\partial_u\check\phi(u,\hat v)=\partial_u\hat\phi(u,\hat{V}),\qquad\partial_{\hat v}\check\phi(u,\hat v)=0,
$$
$$
\check\nu(u,\hat v)=\hat\nu(u,\hat{V})+\partial_u\hat\lambda(u,\hat{V})(\hat v-\hat{V})
$$
and
$$
\check\lambda(u,\hat v)=\hat\lambda(u,\hat{V}).
$$
Clearly, $\partial_u\check\Omega^2$, $\check\lambda$ and~$\check\nu$ are also continuous.
Therefore, $\check \Gamma^C_{AB}$, $\check \Gamma^u_{AB}$, $\check \Gamma^{\hat v}_{AB}$,
$\check \Gamma^A_{B\hat u}$, $\check \Gamma^A_{B\hat v}$,
and~$\check\Gamma^u_{uu}$
are continuous, and so is the field $\check\zeta$. Finally, $\check \Gamma^{\hat v}_{\hat v\hat v}$ and $\check\theta$ are zero for $\hat v>\hat{V}$.
It would be easy to construct other extensions of $({\cal M},g)$ and $\hat\phi$ satisfying \eqref{Gerrard}.
\end{proof}
Note that there is no guarantee that the extensions above satisfy the Einstein equations.
Moreover, the function $\hat\theta$ may not admit a continuous extension to the Cauchy horizon.

\begin{Rmk}\label{dias_l} Since in the previous extension
$$
\partial_{\hat v}\check\nu(u,\hat v)=\partial_u\check\lambda(u,\hat v)=\partial_u\hat\lambda(u,\hat{V}),
$$
for $\hat v>\hat{V}$, we constructed a $C^0$ extension of the metric such that (\/$
\check \Gamma \in L^2_{{\rm loc}}$, $\check \phi \in H^1_{{\rm loc}}$ and)
 the second mixed derivatives of $\check r$ are continuous. This would not be possible if $\hat\varpi(\,\cdot\,,\hat{V})$ were $+\infty$ (see~\eqref{lambda_u} and~\eqref{nu_v}).
In\/~{\rm \cite[Theorem~11.1]{Dafermos2}} M.~Dafermos constructs $C^0$ extensions of the metric without assuming any restriction on the continuous function $\zeta_0$, so
without any control on $\hat\varpi(\,\cdot\,,\hat{V})$. 
\end{Rmk}

\section{Extensions of solutions beyond the Cauchy horizon}\label{CH}

It is clear that in order to improve on the results of the previous section we need to control the field $\frac\theta\lambda$. 
In view of Proposition~\ref{meio}, this requires a stronger restriction on the exponent $s$.
Once the field is controlled, it turns out to be possible to construct smooth extensions of our spacetime which in fact are solutions of the Einstein equations.

More precisely, in
the main part of this section we assume that  
$$ 
|\zeta_0(u)|\leq cu^s\ {\rm for\ some}\ s>\frac{13\Psi}9-1,
$$ 
where $c>0$ and $u\in\left[0,U\right]$.
In Lemma~~\ref{theta-control}, we obtain the desired bound for $\frac\theta\lambda$ in $J^+(\gam)$.
We then start by proving that our solution of the first order system~\eqref{r_u}$-$\eqref{kappa_at_u} can be extended to the closed rectangle $[\delta,U]\times[0,V]$, for any $0<\delta<U$, while still satisfying~\eqref{r_u}$-$\eqref{kappa_at_u}. By taking the values of the functions at the Cauchy horizon as initial data on $[\delta,U]\times\{V\}$, and choosing new initial data on $\{U\} \times [V,V+\eps]$, we can build (non-unique) extensions of the solution beyond the Cauchy horizon. The new initial data can be chosen with the required regularity so that we obtain classical solutions of the Einstein equations.
We finish the section by analyzing the behavior of the Kretschmann scalar at the Cauchy horizon, under the hypotheses used in this and in the previous sections.

\begin{Lem}[Bounding $\frac\theta\lambda$]\label{theta-control}
Suppose that  
$$ 
|\zeta_0(u)|\leq cu^s\ {\rm for\ some}\ s>\frac{13\Psi}9-1,
$$ 
where $c>0$ and $u\in\left[0,U\right]$.
Then there exists a constant $C>0$ such that
\begin{equation}\label{eq34}
\Bigl|\frac\theta\lambda\Bigr|(u,v)\leq C,
\end{equation}
for $(u,v)\in J^+(\gam)$, provided that $U$ is sufficiently small.
Furthermore, 
\begin{equation}\label{canto}
\lim_{(u,v)\to(0,\infty)}\,\Bigl|\frac\theta\lambda\Bigr|(u,v)=0.
\end{equation}
\end{Lem}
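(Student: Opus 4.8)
The plan is to combine the estimates already produced in the proof of Theorem~\ref{thmNoMass} with the decay of $\zeta$ from Lemma~\ref{zeta-control}, and to transport $|\theta|$ from $\gam$ to a general point of $J^+(\gam)$ along the slices $v=\mathrm{const}$. Since $s>\frac{13\Psi}{9}-1>\frac{7\Psi}{9}-1$, Theorem~\ref{thmNoMass} applies: for $U$ small there is no mass inflation, $\cD=J^+(\gam)$, and every bound obtained in the proof of Lemma~\ref{L1} is available with the same parameter choice $\beta,q$ close to $\frac13$ — in particular the upper and lower bounds \eqref{eq6} and \eqref{l-baixo} for $-\lambda$, the $u$-slice bound \eqref{eq10} for $\theta$, the geometric estimate \eqref{v_gamma} for $\vgam$, and $|\zeta(u,v)|\leq Cu^{s-\Psi\beta^2-\delta}$ from \eqref{eq24}.

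First I would integrate \eqref{theta_u} along $v=\mathrm{const}$ starting from $\gam$, obtaining
\[
\frac{\theta}{\lambda}(u,v)=\frac{\theta(\ugam(v),v)}{\lambda(u,v)}-\int_{\ugam(v)}^{u}\frac{\zeta}{r}(\tilde u,v)\,\frac{\lambda(\tilde u,v)}{\lambda(u,v)}\,d\tilde u ,
\]
where $\bigl|\tfrac{\lambda(\tilde u,v)}{\lambda(u,v)}\bigr|=\exp\!\bigl(\int_{\tilde u}^{u}\bigl(-\tfrac{\nu\,\rmu}{1-\mu}\bigr)(\bar u,v)\,d\bar u\bigr)$ and is therefore controlled by the decay of $-\lambda$ in $J^+(\gam)$. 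For the boundary term I would insert the bound \eqref{bbtheta} (equivalently \eqref{eq3}) for $|\theta|$, resp.\ $\bigl|\tfrac\theta\lambda\bigr|$, along $\gam$ into the numerator and the lower bound \eqref{l-baixo} (sharpened near $u=0$ by retaining the $\vgam$-dependence already present in \eqref{int-nu}) in the denominator; the hypothesis $s>\frac{13\Psi}{9}-1$, i.e.\ $s+1>\Psi(\beta^2+\beta+1)$ for $\beta$ close to $\frac13$, is precisely the inequality $\tfrac{k_+(s+1)}{1+\beta}-\tfrac{k_-\beta^2}{1+\beta}>k_-$ that makes this term decay exponentially in $v$ while carrying a positive power of $u$. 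For the integral term I would bound $|\zeta|$ by \eqref{eq24}, bound $\tfrac{\lambda(\tilde u,v)}{\lambda(u,v)}$ by \eqref{eq6} together with the lower bound on $-\lambda(u,v)$, turn the factor $e^{\frac{2k_-q}{1+\beta}\vgam(\tilde u)}$ into a power $(C/\tilde u)^{\Psi q+\delta}$ via \eqref{v_gamma}, and integrate in $\tilde u$; the power of $u$ that survives, combined with $v\geq\vgam(u)$ and \eqref{v_gamma}, absorbs the residual exponential and yields a bound that is uniform on $J^+(\gam)$ and tends to $0$ as $u\searrow0$. This gives \eqref{eq34}, and \eqref{canto} follows by keeping track of the positive powers of $u$ and the negative exponentials in $v$ in the final estimates; if the $u$-slice bound \eqref{eq10} turns out to be insufficient one reruns the Grönwall argument of Lemma~\ref{L1} in $v$ with the improved input \eqref{eq24}, which is now permissible because $s$ is larger.

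The main obstacle is the exponential bookkeeping. The boundary datum $\theta|_{\gam}$ decays at a rate set by the event-horizon surface gravity $k_+$ (and the geometry of $\gam$), whereas $-\lambda$ decays at the rate set by the Cauchy-horizon surface gravity $k_-$, and in $J^+(\gam)$ one only has the uniform lower bound $-\lambda(u,v)\gtrsim e^{-2(k_-+\delta)v}$ while the crude bound \eqref{eq10} on $\theta$ decays strictly slower than $e^{-2k_-v}$ near $u=0$; hence the refinement over Lemma~\ref{L1} is unavoidable, and one must play the $\vgam$-improved estimate \eqref{eq6} for $-\lambda(\tilde u,v)$ at the intermediate point $\tilde u$ against the $u$-decay of $\zeta$ from \eqref{eq24}. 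It is exactly the optimization $\beta=q=\frac13$, already forced by the no-mass-inflation threshold, that converts the requirement into $s>\Psi\bigl(1+\tfrac13+\tfrac19\bigr)-1=\frac{13\Psi}{9}-1$. Finally one checks, exactly as in the proof of Lemma~\ref{L1}, that all the error parameters $\delta$ can be made simultaneously small by letting $(\ckrp,\ckrm,\beta^{+},\beta^{-},\eps_0,\eps_1,U)\to(r_+,r_-,\beta,\beta,0,0,0)$, at the cost of enlarging $C$, which closes the argument.
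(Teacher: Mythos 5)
Your decomposition is in substance the paper's starting point: dividing the integrated form of \eqref{theta_u} by $\lambda(u,v)$ is algebraically the same as integrating \eqref{theta_lambda} as a linear ODE in $u$ from $\gam$ (formula \eqref{field_26}), and your treatment of the boundary term is correct and yields exactly the paper's condition $s+1>\Psi(\beta^2+\beta+1)$, i.e.\ $s>\frac{13\Psi}9-1$ at $\beta,q$ near $\frac13$. The gap is in the integral term. You propose to bound the weight $\bigl|\frac{\lambda(\tilde u,v)}{\lambda(u,v)}\bigr|$ by dividing the upper bound \eqref{eq6} at $\tilde u$ by the lower bound \eqref{l-baixo} at $u$. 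These two estimates are not simultaneously sharp: their quotient carries the factor
$e^{2k_-\bigl(1-\frac{\beta+q}{1+\beta}\bigr)v}=e^{\frac{2k_-(1-q)}{1+\beta}v}$ (about $e^{k_-v}$ for $\beta=q=\frac13$), on top of the harmless power $(C/\tilde u)^{\Psi q+\delta}$. This growing exponential cannot be absorbed: in $J^+(\gam)$ the constraint $v\geq\vgam(u)$ only says $u\geq\ugam(v)\gtrsim e^{-\frac{2k_+}{1+\beta^-}v}$ (see \eqref{O}), so it converts \emph{decaying} exponentials in $v$ into positive powers of $u$, never growing ones; at fixed $u>0$ your bound on the integral term diverges as $v\nearrow\infty$, so it cannot give \eqref{eq34} uniformly on $J^+(\gam)$, and rerunning the Gronwall argument of Lemma~\ref{L1} does not help, since the loss is in the $\lambda$-ratio, not in $\theta$.

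The missing ingredient is a $v$-independent bound on the integrating factor
$e^{-\int_{\tilde u}^{u}\bigl[\frac{\nu}{1-\mu}\rmu\bigr](\bar u,v)\,d\bar u}$.
The paper gets it from the Raychaudhuri equation \eqref{ray_v_bis}, which makes $v\mapsto\frac{\nu}{1-\mu}(\bar u,v)$ nonincreasing, so the estimate \eqref{nu_mu_bis} propagates to all $v$ and gives the pointwise bound $\frac{\nu}{1-\mu}(\bar u,v)\leq\frac{1+\delta}{2k_+\bar u}$; together with $|\rmu+2k_-|<\delta$ (Theorem~\ref{thmNoMass}) this yields \eqref{exph}, namely a bound by $\bigl(\frac{u}{\tilde u}\bigr)^{\Psi+\delta}$ with no $v$-dependence, and then \eqref{eq24} gives the integral term $\leq Cu^{s+1-\Psi(\beta^2+1)-\delta}$ under the weaker condition $s>\Psi(\beta^2+1)-1$. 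With that replacement your argument closes. A secondary omission: \eqref{canto} is a limit as $(u,v)\to(0,\infty)$ through the whole domain, not just through $J^+(\gam)$; the approach through $J^-(\gam)\cap J^+(\cg_{\ckrm})$ and through $J^-(\cg_{\ckrm})$ must be handled separately, as the paper does via the estimate \eqref{thetaLambdaGamma} (valid there since $u\leq\ugam(v)$) and via \eqref{canto-part}.
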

\begin{proof}
Integrating~\eqref{theta_lambda}, we obtain
\begin{eqnarray}
\frac\theta\lambda(u,v)&=&\frac\theta\lambda(\ugam(v),v)e^{-\int_{\ugam(v)}^{u}\bigl[\frac{\nu}{1-\mu}\partial_r(1-\mu)\bigr](\tilde u,v)\,d\tilde u}\nonumber\\
&&-\int_{\ugam(v)}^{u}\frac\zeta r(\tilde u,v)e^{-\int_{\tilde u}^{u}\bigl[\frac{\nu}{1-\mu}\partial_r(1-\mu)\bigr](\bar{u},v)\,d\bar{u}}\,d\tilde u.\label{field_26}
\end{eqnarray}
By Theorem~\ref{thmNoMass}, we know that we have $|\partial_r(1-\mu)+2k_-|<\delta$ in $J^+(\gam)$ for sufficiently small $U$.
Using~\eqref{eq3} and~\eqref{int-nu},
\begin{eqnarray}
&&\Bigl|\frac\theta\lambda\Bigr|(\ugam(v),v)e^{-\int_{\ugam(v)}^{u}\bigl[\frac{\nu}{1-\mu}\partial_r(1-\mu)\bigr](\tilde u,v)\,d\tilde u}\nonumber\\
&&\qquad\qquad\leq
 Ce^{-2\left(\frac{k_+(s+1)}{1+\beta}-k_-\beta-\delta\right)v}e^{2\left(\frac{k_-}{1+\beta}+\delta\right)v}\nonumber\\
&&\qquad\qquad\leq
Ce^{-\,\frac{2k_+}{1+\beta}(s+1-\Psi(\beta^2+\beta+1)-\delta)v}.\label{eq40}
\end{eqnarray}
This exponent can be made negative for
\begin{equation}\label{sb}
s>\Psi(\beta^2+\beta+1)-1.
\end{equation}
Now, according to \eqref{ray_v_bis} and \eqref{nu_mu_bis}
$$
\frac{\nu}{1-\mu}(\bar u,v)\leq\frac{\nu}{1-\mu}(\bar u,v_{\ckrm}(\bar u))\leq\frac{1+\delta}{2k_+\bar u},
$$
due to the monotonicity of $\frac{\nu}{1-\mu}$. Thus,
\begin{equation}\label{exph}
e^{-\int_{\tilde u}^{u}\bigl[\frac{\nu}{1-\mu}\partial_r(1-\mu)\bigr](\bar{u},v)\,d\bar{u}}\leq e^{(\Psi+\delta)\ln\left(\frac u{\tilde u}\right)}=\left(\frac u{\tilde u}\right)^{\Psi+\delta}
\leq \left(\frac U{\tilde u}\right)^{\Psi+\delta}.
\end{equation}
Combining this with~\eqref{eq24}, if $s>\Psi(\beta^2+1)-1$ and if the parameters are chosen appropriately, we get
\begin{eqnarray}
&&\int_{\ugam(v)}^{u}\frac{|\zeta|} r(\tilde u,v)e^{-\int_{\tilde u}^{u}\bigl[\frac{\nu}{1-\mu}\partial_r(1-\mu)\bigr](\bar{u},v)\,d\bar{u}}\,d\tilde u\nonumber\\
&&\qquad\qquad\leq C\int_{\ugam(v)}^{u}\tilde u^{s-\Psi\beta^2-\delta}\tilde u^{-\Psi-\delta}\,d\tilde u\nonumber\\
&&\qquad\qquad\leq Cu^{s+1-\Psi(\beta^2+1)-\delta}.\label{eq41}
\end{eqnarray}

Using~\eqref{eq40} and~\eqref{eq41} in~\eqref{field_26}, taking into account that the right-hand side of~\eqref{sb} would be $\frac{13\Psi}{9}-1$ if $\beta$ were $\frac 13$,
and recalling that we can choose $\beta=\frac 13+\eps$, we obtain~\eqref{eq34}.

To prove the last assertion, notice that for $(u,v)\in J^-(\gam)\cap J^+(\cg_{\ckrm})$ the estimate on the right-hand side of~\eqref{eq3} applies since $u\leq\ugam(v)$.
Also, recall~\eqref{canto-part}. 
All this information, together with~\eqref{field_26} and the bounds~\eqref{eq40} and~\eqref{eq41}, implies~\eqref{canto}.
\end{proof}

\begin{Thm}[Extending the solution of the first order system up to the Cauchy horizon]\label{fechado}
Suppose that  
$$ 
|\zeta_0(u)|\leq cu^s\ {\rm for\ some}\ s>\frac{13\Psi}9-1,
$$ 
where $c>0$ and $u\in\left[0,U\right]$.
If $U>0$ is sufficiently small then, for all\/ $0<\delta<U$, the functions $\tilde r$, $\tilde\nu$, $\tilde\lambda$, $\tilde\varpi$, $\tilde\theta$, $\tilde\zeta$ and $\tilde\kappa$ satisfy
the first order system~\eqref{r_u}$-$\eqref{kappa_at_u} on the closed rectangle\/ $[\delta,U]\times[0,V]$.
\end{Thm}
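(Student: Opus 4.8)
The plan is to build on Proposition~\ref{fechado-0}, which under the weaker hypothesis $s>\frac{7\Psi}9-1$ (implied by $s>\frac{13\Psi}9-1$ since $\Psi>0$) already provides continuous extensions to $[\delta,U]\times[0,V]$ of $\tilde r$, $\tilde\nu$, $\tilde\lambda$, $\tilde\varpi$, $\tilde\zeta$ and $\tilde\kappa$, together with the validity on that rectangle of all equations of the first order system except~\eqref{omega_v}, \eqref{theta_u} and~\eqref{zeta_v}; recall also that it gives $\widetilde{(1-\mu)}(u,V)<0$ and $\tilde\kappa(u,V)\geq\tilde\kappa(U,V)>0$, so that $\tilde\lambda=\tilde\kappa\,\widetilde{(1-\mu)}$ is bounded away from zero on $[\delta,U]\times[0,V]$. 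What remains is therefore (i) to extend $\tilde\theta$ continuously to $[\delta,U]\times[0,V]$, and (ii) to verify~\eqref{omega_v}, \eqref{theta_u} and~\eqref{zeta_v} on the segment $[\delta,U]\times\{V\}$. Only step~(i) genuinely uses the strengthened hypothesis, through Lemma~\ref{theta-control} and the estimates in its proof. As usual, if $\zeta_0$ vanishes in a right neighbourhood of the origin the solution is Reissner--Nordstr\"om and the conclusion is immediate, so we may assume there is a sequence $u_n\searrow 0$ with $\zeta_0(u_n)\neq 0$.

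For (i), since $\frac\theta\lambda$ is invariant under the change of coordinates and $\tilde\lambda$ extends continuously and is bounded away from zero, it suffices to prove that $\frac\theta\lambda(u,v)$ converges as $v\nearrow\infty$, uniformly for $u\in[\delta,U]$, to a limit continuous in $u$; one then sets $\tilde\theta(\,\cdot\,,V):=\bigl(\lim_{v\to\infty}\frac\theta\lambda\bigr)\,\tilde\lambda(\,\cdot\,,V)$. Here I would use the representation~\eqref{field_26} of $\frac\theta\lambda(u,v)$, valid in $J^+(\gam)$, as the sum of a boundary term $\frac\theta\lambda(\ugam(v),v)\,e^{-\int_{\ugam(v)}^u[\cdots]}$ and the integral $-\int_{\ugam(v)}^u\frac\zeta r(\tilde u,v)\,e^{-\int_{\tilde u}^u[\cdots]}\,d\tilde u$. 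The boundary term tends to $0$ as $v\nearrow\infty$, uniformly in $u\in[\delta,U]$, by~\eqref{eq40}, whose exponent is negative precisely because $s>\Psi(\beta^2+\beta+1)-1$ (this is~\eqref{sb}), an inequality that holds for $\beta=\frac13+\eps$ with $\eps$ small when $s>\frac{13\Psi}9-1$. For the integral term, \eqref{eq24} bounds $|\zeta|(\tilde u,\cdot)$ on $J^+(\gam)$ by $C\tilde u^{\,s-\Psi\beta^2-\delta}$ and~\eqref{exph} bounds $e^{-\int_{\tilde u}^u[\cdots]}$ by $(U/\tilde u)^{\Psi+\delta}$, so, exactly as in~\eqref{eq41}, the integrand is dominated uniformly in $v$ by a function integrable on $]0,U]$ (here one uses $s>\Psi(\beta^2+1)-1$, again implied by the hypothesis). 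Since $\ugam(v)\searrow 0$ and the integrand converges pointwise as $v\nearrow\infty$ --- because $\zeta$, $r$, $\nu$, $1-\mu$ and $\partial_r(1-\mu)$ all converge by Proposition~\ref{fechado-0} --- dominated convergence gives convergence of the integral; splitting $\int_{\ugam(v)}^u=\int_{\ugam(v)}^a+\int_a^u$ for a small fixed $a>0$ (the first piece uniformly small by the majorant, the second converging uniformly on the compact set $u\in[a,U]$) upgrades this to uniform convergence in $u$, and continuity of the limit follows from the continuity of parameter-dependent integrals.

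Step~(ii) is then routine, following Step~3 of the proof of Proposition~\ref{fechado-0} now that $\tilde\theta$ is continuous (hence bounded) on $[\delta,U]\times[0,V]$ and $\tilde r$ is bounded below. For~\eqref{theta_u}: integrate it in $u$ over $[u_1,u_2]$ at fixed $\tilde v<V$, let $\tilde v\nearrow V$ using the uniform convergence of $\tilde\theta$, $\tilde\zeta$, $\tilde\lambda$, $\tilde r$, obtain the integrated identity at $\tilde v=V$, and differentiate in $u$. For~\eqref{omega_v} and~\eqref{zeta_v}: integrate in $v$ from a fixed $\tilde v_0$ to $\tilde v'<V$, let $\tilde v'\nearrow V$ (the integrands $\frac12\frac{\tilde\theta^2}{\tilde\kappa}$ and $-\frac{\tilde\theta\tilde\nu}{\tilde r}$ are continuous and bounded on $[\tilde v_0,V]$), and conclude from the Fundamental Theorem of Calculus that the (left) $\tilde v$-derivative at $V$ equals the right-hand side evaluated at $V$; together with the validity of these equations for $\tilde v<V$ this gives the statement on the closed rectangle. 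The main obstacle is step~(i) --- extracting the existence of $\lim_{\tilde v\to V}\tilde\theta$ from~\eqref{field_26} --- and the whole point of assuming $s>\frac{13\Psi}9-1$ rather than merely $s>\frac{7\Psi}9-1$ is to make the exponent in~\eqref{eq40} negative and the majorant in~\eqref{eq41} integrable; once this is secured, the remainder is bookkeeping.
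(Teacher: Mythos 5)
Your proposal is correct and follows essentially the same route as the paper: it reduces the problem to extending $\tilde\theta$ via the representation~\eqref{field_26}, kills the boundary term by~\eqref{eq40}, controls the integral term by the~\eqref{eq24}/\eqref{exph}/\eqref{eq41} majorant with dominated convergence, upgrades to uniform convergence by splitting the integral at a small cutoff (the paper's $I+I\!I+I\!I\!I+I\!V+V$ decomposition at $\hat\delta$ is exactly your split at $a$), and then handles continuity and the remaining equations~\eqref{omega_v}, \eqref{theta_u}, \eqref{zeta_v} exactly as in Steps 2 and 3 of Proposition~\ref{fechado-0}.
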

\begin{Rmk}\label{candice}
{\rm Theorem~\ref{inflation2}} and\/~{\rm Proposition~\ref{meio}} imply that
there is no hope of lowering the constant $\frac{13}9$ below\/ $1$.
\end{Rmk}

\begin{proof}[Proof of\/ {\rm Theorem~\ref{fechado}}]
We fix $0<\delta<U$.
We already did most of the work in~Proposition~\ref{fechado-0}. So, we just need to prove the assertion for $\tilde\theta$ and that~\eqref{omega_v}, \eqref{theta_u} and~\eqref{zeta_v} are 
satisfied on $[\delta,U]\times[0,V]$.
 As before, we proceed in three steps.

{\em Step 1}. We prove that $\tilde\theta(\cdot,\tilde{v})$ converges uniformly to $\tilde\theta(\cdot,V)$ in $[\delta,U]$ as $\tilde{v}\to V$, that is,
\begin{equation}\label{continuity-vt}
\forall_{\eps>0}\, \exists_{\tilde\delta>0}\, \forall_{u\in[\delta,U]}\ \ |\tilde v-V|<\tilde\delta\ \Rightarrow\ |\tilde\theta(u,\tilde v)-\tilde\theta(u,V)|<\eps.
\end{equation}
We let $v\nearrow\infty$ in~\eqref{field_26}. Taking into account the estimate~\eqref{eq40} for the first term on the right-hand side,
and using Lebesgue's Dominated Convergence Theorem and~\eqref{eq41} for the second term on the right-hand side, we conclude that
\begin{eqnarray*}
\frac\theta\lambda(u,\infty)&=&
-\int_{0}^{u}\frac\zeta r(\tilde u,\infty)e^{-\int_{\tilde u}^{u}\bigl[\frac{\nu}{1-\mu}\partial_r(1-\mu)\bigr](\bar{u},\infty)\,d\bar{u}}\,d\tilde u.
\end{eqnarray*}
Hence
$$
\widetilde{\left(\frac{\theta}{\lambda}\right)}(u,V)=\frac{\theta}{\lambda}(u,\infty)
$$
is well defined and 
$\tilde\theta(u,V)=\widetilde{\Bigl(\frac{\theta}{\lambda}\Bigr)}(u,V)\tilde\lambda(u,V)$
is also well defined.
We now wish to prove uniform convergence of $\frac\theta\lambda(\,\cdot\,,v)$ to  $\frac\theta\lambda(\,\cdot\,,\infty)$, as $v\nearrow\infty$.
We write
\begin{eqnarray*}
\frac\theta\lambda(u,v)-\frac\theta\lambda(u,\infty)&=&\frac\theta\lambda(\ugam(v),v)e^{-\int_{\ugam(v)}^{u}\bigl[\frac{\nu}{1-\mu}\partial_r(1-\mu)\bigr](\tilde u,v)\,d\tilde u}\nonumber\\
&&-\int_{{\hat\delta}}^{u}\frac\zeta r(\tilde u,v)e^{-\int_{\tilde u}^{u}\bigl[\frac{\nu}{1-\mu}\partial_r(1-\mu)\bigr](\bar{u},v)\,d\bar{u}}\,d\tilde u\\
&&+\int_{{\hat\delta}}^{u}\frac\zeta r(\tilde u,\infty)e^{-\int_{\tilde u}^{u}\bigl[\frac{\nu}{1-\mu}\partial_r(1-\mu)\bigr](\bar{u},\infty)\,d\bar{u}}\,d\tilde u\\
&&- 
\int_{\ugam(v)}^{{\hat\delta}}\frac\zeta r(\tilde u,v)e^{-\int_{\tilde u}^{u}\bigl[\frac{\nu}{1-\mu}\partial_r(1-\mu)\bigr](\bar{u},v)\,d\bar{u}}\,d\tilde u\\
&&+ 
\int_{0}^{{\hat\delta}}\frac\zeta r(\tilde u,\infty)e^{-\int_{\tilde u}^{u}\bigl[\frac{\nu}{1-\mu}\partial_r(1-\mu)\bigr](\bar{u},\infty)\,d\bar{u}}\,d\tilde u\\
&=:&I+I\!I+I\!I\!I+I\!V+V.
\end{eqnarray*}
Suppose that we are given $\eps>0$. 
Notice the upper limits of the integrals in $I\!V$ and $V$: the outer integrals have upper limit ${\hat\delta}$, while the inner integrals
have upper limit $u$. Nevertheless, we may do computations similar to~\eqref{eq41}, using~\eqref{exph}, to conclude that
we may choose ${\hat\delta}>0$ so that $|I\!V|+|V|<\frac\eps 3$, for all $u\in[\delta,U]$.
We fix such a ${\hat\delta}$. By~\eqref{eq40}, there exists $\tilde V_\eps>0$ such that for $v\geq\tilde V_\eps$ we have $|I|<\frac\eps 3$, again for all $u\in[\delta,U]$.
When estimating $|I\!I+I\!I\!I|$ we replace the upper limit of integration $u$ by $U$ (after gathering this difference into a single integral and taking absolut values).
Finally, by uniform convergence of the functions in the integral $I\!I$ to the functions in the integral $I\!I\!I$, in $[{\hat\delta},U]$, there exists $V_\eps\geq\tilde V_\eps$
such that $|I\!I+I\!I\!I|<\frac\eps 3$, for $v\geq V_\eps$. 
So for $v\geq V_\eps$ and for all $u\in[\delta,U]$, we have
$$
\Bigl|\frac\theta\lambda(u,v)-\frac\theta\lambda(u,\infty)\Bigr|<\eps.
$$
This establishes the desired uniform convergence.

{\em Step 2}. As in Step~2 of the proof of Proposition~\ref{fechado-0}, we conclude that $\tilde\theta$
is continuous in the closed rectangle $[\delta,U]\times[0,V]$. 

{\em Step 3}. As in Step~3 of the proof of Proposition~\ref{fechado-0}, we conclude 
that~\eqref{omega_v}, \eqref{theta_u} and~\eqref{zeta_v} are satisfied also on the segment $[\delta,U]\times\{V\}$. 
\end{proof}
\begin{Rmk}
The function $\tilde\varpi(U,\,\cdot\,)$ is continuously differentiable on $[0,V]$ due to~\eqref{omega_v}.
\end{Rmk}

\vspace{4mm}

\noindent {\em On the choice of initial data beyond the Cauchy horizon.} Fix $0<\hat\eps<\tilde r(U,V)$, and consider the continuous extension $\tilde\lambda(U,\,\cdot\,)\equiv -1$ to the interval\/ $[0,V+\hat\eps]$.
According to this choice, define
$$
\tilde r(U,\tilde v)=\tilde r(U,V)+\int_V^{\tilde v}\tilde\lambda(U,\bar v)\,d\bar v=\tilde r(U,V)-(\tilde v-V),
$$
for $\tilde v\in\,]V,V+\hat\eps]$.
The upper bound on $\hat\eps$ is imposed to guarantee that $$\tilde r(U,V+\hat\eps)=\tilde r(U,V)-\hat\eps>0.$$
Choose a continuously differentiable extension of $\tilde\varpi(U,\,\cdot\,)$ to the interval\/ $[0,V+\hat\eps]$, with $\partial_{\tilde v}\tilde\varpi(U,\,\cdot\,)\geq 0$,
for $\tilde v\in\,]V,V+\hat\eps]$. Since $\widetilde{(1-\mu)}(U,V)<0$, by continuity, there exists $0<\eps\leq\hat\eps$ such that
$$
\widetilde{(1-\mu)}(U,\tilde v)=\left(1-\frac{2\tilde\varpi}{\tilde r}+\frac{e^2}{\tilde r^2}-\frac\Lambda 3\tilde r^2\right)(U,\tilde v)<0,
$$
for $\tilde v\in\,]V,V+\eps]$. For $\tilde v\in\,]V,V+\eps]$, define
\begin{equation}\label{kappa_big_u}
\tilde\kappa(U,\tilde v)=\frac{-1}{\widetilde{(1-\mu)}(U,\tilde v)}
\end{equation}
and
\begin{equation}\label{teta}
\tilde\theta(U,\tilde v)={\rm sign}\,\tilde\theta(U,V)\sqrt{2\tilde\kappa\partial_{\tilde v}\tilde\varpi}\,(U,\tilde v).
\end{equation}
Take the sign of $\tilde\theta(U,V)$ to be $+1$ if $\tilde\theta(U,V)\geq 0$, and $-1$ if $\tilde\theta(U,V)<0$.
These choices guarantee~\eqref{initial_v} and~\eqref{kappa_at_zero}. Together with the values of $\tilde r(u,V)$,
$\tilde\nu(u,V)$ and $\tilde\zeta(u,V)$, they provide initial data for 
the first order system~\eqref{r_u}$-$\eqref{kappa_at_u} on $]0,U]\times\{V\}\cup\{U\}\times[V,V+\eps]$.

\begin{Thm}[Extending the solution of the first order system beyond the Cauchy horizon]\label{para-la} 
Suppose that  
$$  
|\zeta_0(u)|\leq cu^s\ {\rm for\ some}\ s>\frac{13\Psi}9-1,
$$ 
where $c>0$ and $u\in\left[0,U\right]$.
Then, provided that $U$ is sufficiently small, there exist (non-unique) extensions of the solution of the first order system~\eqref{r_u}$-$\eqref{kappa_at_u} beyond the Cauchy horizon, which are still solutions \linebreak of~\eqref{r_u}$-$\eqref{kappa_at_u}.
\end{Thm}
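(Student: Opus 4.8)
The plan is to feed the data constructed in the paragraph ``On the choice of initial data beyond the Cauchy horizon'' into Theorem~\ref{existence-alt} (well-posedness of the backwards problem) and then to glue the resulting solution to the one already obtained on $[\delta,U]\times[0,V]$. By Theorem~\ref{fechado} --- which is where the hypothesis $s>\frac{13\Psi}{9}-1$ enters, through Lemma~\ref{theta-control} --- for $U$ sufficiently small and every $0<\delta<U$ the functions $\tilde r,\tilde\nu,\tilde\lambda,\tilde\varpi,\tilde\theta,\tilde\zeta,\tilde\kappa$ are continuous on $[\delta,U]\times[0,V]$ and satisfy the full first order system~\eqref{r_u}--\eqref{kappa_at_u} there; in particular~\eqref{r_u}, \eqref{omega_u}, \eqref{omega_v} and~\eqref{kappa_at_u} hold up to $\tilde v=V$, so that $\tilde r(\,\cdot\,,V)$ and $\tilde\varpi(\,\cdot\,,V)$ are $C^1$ in $u$ with $\partial_u\tilde r(\,\cdot\,,V)=\tilde\nu(\,\cdot\,,V)$, and moreover $\tilde r>0$, $\tilde\nu<0$, $\tilde\kappa>0$ and $\widetilde{(1-\mu)}(\,\cdot\,,V)<0$ (Proposition~\ref{fechado-0}). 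Also $\tilde\varpi(U,\,\cdot\,)\in C^1([0,V])$ by the remark following Theorem~\ref{fechado}, and since the original data $\nu_0,\kappa_0,\lambda_0$ satisfy (h4), Lemma~\ref{regular} gives $r\in C^2$, hence $\tilde\nu(\,\cdot\,,V)\in C^1$.

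Next I would verify hypotheses (h1)--(h3) of Theorem~\ref{existence-alt} for the data on $]0,U]\times\{V\}\cup\{U\}\times[V,V+\eps]$ (after translating $\tilde v\mapsto\tilde v-V$ the two segments become $]0,U]\times\{0\}$ and $\{U\}\times[0,\eps]$). On the $u$-segment the data $\tilde r(\,\cdot\,,V),\tilde\nu(\,\cdot\,,V),\tilde\zeta(\,\cdot\,,V)$ are continuous with $\tilde r(\,\cdot\,,V)\in C^1$ and $\partial_u\tilde r(\,\cdot\,,V)=\tilde\nu(\,\cdot\,,V)$; on the $\tilde v$-segment $\tilde\lambda(U,\,\cdot\,)\equiv-1$, the chosen $C^1$ extension of $\tilde\varpi(U,\,\cdot\,)$ with $\partial_{\tilde v}\tilde\varpi\geq0$, $\tilde\kappa(U,\,\cdot\,)=-1/\widetilde{(1-\mu)}(U,\,\cdot\,)$ and $\tilde\theta(U,\,\cdot\,)$ given by~\eqref{teta} are continuous (using $\widetilde{(1-\mu)}(U,\,\cdot\,)<0$, $\partial_{\tilde v}\tilde\varpi\geq0$ and $\tilde\kappa>0$ on $[V,V+\eps]$ after shrinking $\eps$ as in that paragraph). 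This gives (h1) --- and in fact (h4) as well, though we do not need it here. The sign conditions (h2) were arranged directly: $\tilde r(U,V)-(\tilde v-V)>0$ since $\hat\eps<\tilde r(U,V)$, $\tilde\nu(\,\cdot\,,V)<0$ by~\eqref{nu_negativo}, and $\tilde\kappa(U,\,\cdot\,)>0$. For (h3): $r_0'=\nu_0$ is~\eqref{r_u} at $\tilde v=V$; $\omega_0'=\frac12\theta_0^2/\kappa_0$ holds because~\eqref{teta} is defined so that $\theta_0^2=2\tilde\kappa\,\partial_{\tilde v}\tilde\varpi$ on the $\tilde v$-segment; and~\eqref{kappa_at_zero} holds because there $\tilde\kappa=-1/\widetilde{(1-\mu)}$, $\tilde\lambda\equiv-1$, with the quantity ``$1-\mu$'' appearing in~\eqref{kappa_at_zero} computed precisely from the very data $\tilde r(U,\,\cdot\,)$ and $\tilde\varpi(U,\,\cdot\,)$. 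Finally, consistency of the two prescriptions of $\tilde\theta$ and $\tilde\kappa$ at the corner $(U,V)$ must be checked: by~\eqref{omega_v} (valid at $\tilde v=V$) one has $2\tilde\kappa\,\partial_{\tilde v}\tilde\varpi(U,V)=\tilde\theta^2(U,V)$, so~\eqref{teta} returns exactly $\tilde\theta(U,V)$; and by~\eqref{kappa_at_u} one has $\tilde\kappa(U,V)=-1/\widetilde{(1-\mu)}(U,V)$, matching~\eqref{kappa_big_u}.

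With (h1)--(h3) in hand, Theorem~\ref{existence-alt} furnishes, for each $0<\delta<U$, a unique solution of~\eqref{r_u}--\eqref{kappa_at_u} on a maximal reflected past set $\mathcal{R}_\delta$ containing a (two-dimensional) neighborhood of $]0,U]\times\{V\}\cup\{U\}\times[V,V+\eps]$, hence in particular containing points with $\tilde v>V$, i.e.\ beyond the Cauchy horizon. Since the data on $]0,U]\times\{V\}$ is, by construction, the restriction to $\tilde v=V$ of the solution already built on $[\delta,U]\times[0,V]$, the two solutions agree along $\tilde v=V$ and their one-sided $\tilde v$-derivatives there coincide (all of $\tilde\nu,\tilde\lambda,\tilde\varpi,\tilde\theta,\tilde\zeta,\tilde\kappa$ match at $\tilde v=V$), so they patch together into a single solution of~\eqref{r_u}--\eqref{kappa_at_u} on $[\delta,U]\times[0,V]\cup\mathcal{R}_\delta$; taking the union over $0<\delta<U$ yields the desired extension beyond the Cauchy horizon. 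Non-uniqueness is then immediate: the construction depends on the free choice of the $C^1$ extension of $\tilde\varpi(U,\,\cdot\,)$ past $\tilde v=V$ (constrained only by $\partial_{\tilde v}\tilde\varpi\geq0$), and distinct such choices produce distinct solutions near the Cauchy horizon.

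I expect the only genuinely delicate point to be the corner consistency of the $\tilde\theta$- and $\tilde\kappa$-data at $(U,V)$ in the verification of (h3); this rests entirely on the fact --- secured by Theorem~\ref{fechado}, and thus ultimately on the bound for $\frac{\theta}{\lambda}$ of Lemma~\ref{theta-control} that forces $s>\frac{13\Psi}{9}-1$ --- that~\eqref{omega_v} and~\eqref{kappa_at_u} remain valid all the way up to $\tilde v=V$. Everything else amounts to a routine verification of the hypotheses of an already-established well-posedness theorem.
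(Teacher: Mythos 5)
Your proposal is correct and follows essentially the same route as the paper: construct the data on $]0,U]\times\{V\}\cup\{U\}\times[V,V+\eps]$ from Theorem~\ref{fechado} and a chosen $C^1$ extension of $\tilde\varpi(U,\,\cdot\,)$, verify the hypotheses of Theorem~\ref{existence-alt}, and glue the resulting solution to the original one along $\tilde v=V$, with non-uniqueness coming from the freedom in extending $\tilde\varpi(U,\,\cdot\,)$. Your explicit verification of (h1)--(h3) and of the corner consistency at $(U,V)$ simply spells out what the paper delegates to the preceding paragraph on the choice of initial data.
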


\begin{proof}
Choose any continuously differentiable extension of $\tilde\varpi(U,\,\cdot\,)$, with $\partial_{\tilde v}\tilde\varpi(U,\,\cdot\,)\geq 0$.
As described above, this determines initial data for 
the first order system~\eqref{r_u}$-$\eqref{kappa_at_u} on $]0,U]\times\{V\}\cup\{U\}\times[V,V+\eps]$, for some $\eps>0$.
According to\/ {\rm Theorem~\ref{existence-alt}}, there exists a unique solution defined on a maximal reflected past set $\mathcal{R}$ containing a neighborhood of $]0,U]\times\{V\}\cup\{U\}\times[V,V+\eps]$.
This is an extension of the original solution beyond the Cauchy horizon: as explained in the discussion preceding Theorem 4.2 of Part 1, solutions of~\eqref{r_u}$-$\eqref{kappa_at_u} can be glued along a common edge of two rectangles provided that all functions coincide on that edge, since the extended functions are clearly continuous and the equations imply the continuity of the relevant partial derivatives.
\end{proof}

\begin{Rmk}
The original version of the existence and uniqueness theorem in\/ {\rm Part~1} could have been used here instead of\/ {\rm Theorem~\ref{existence-alt}}, by defining our coordinate $\tilde v$ using the values of $r(\delta,\,\cdot\,)$
instead of the values $r(U,\,\cdot\,)$, i.e.\ we could have replaced~\eqref{smooth_coordinate} by
$$
\tilde v=f(v)=r(\delta,0)-r(\delta,v).
$$
In this case, we should 
consider the first order system with initial data on\/ $[\delta,U]\times\{V\}\cup\{\delta\}\times[V,V+\eps]$.
We would then obtain an extension of the solution to\/ $[\delta,U]\times[V,V+\tilde\eps]$ for some $0 < \tilde\eps < \eps$. However, our approach above, using\/ {\rm Theorem~\ref{existence-alt}}, guarantees
that the domain of our extended solution contains a neighborhood of the whole Cauchy horizon\/ $]0,U]\times\{V\}$.
If we had insisted on using the original existence and uniqueness theorem in\/ {\rm Part~1}, we would only have known
that there existed a solution whose domain contained a neighborhood of $]\delta,U]\times\{V\}$, for $\delta$ arbitrarily small; but if $\delta$ changed, the solution might change,
because we would have to change the initial data.
\end{Rmk}

We now wish to see that the solution of our first order system corresponds to a solution of the Einstein equations.
Using Propositions~\ref{einstein-2} and~\ref{einstein-3}, 
we know that this is the case provided that the regularity hypothesis (h4) (see Section~\ref{main-one-two}) is satisfied, which
it is. Indeed, the extended solution is a solution of the backward problem where $\tilde\lambda(U,\tilde v) \equiv -1$ and $\tilde\kappa(U, \tilde v)$ are $C^1$ on $[0,V+\eps]$ by our choice of initial data. On the other hand, $\tilde\nu(u,0)= \nu_0(u) \equiv -1$.
Hence, we proved
\begin{Thm}[Extending the solution of the Einstein equations beyond the Cauchy horizon]\label{fine} 
Under the hypotheses of\/ {\rm Theorem~\ref{para-la}}, there exists a neighborhood  ${\cal V}$ of $]0,U]\times\{V\}$ such that
the extended functions $\check r$, $\check\phi$ and $\check\Omega$ are (classical) solutions of the Einstein equations 
\eqref{wave_r}, \eqref{wave_phi}, \eqref{r_uu}, \eqref{r_vv} and~\eqref{wave_Omega} in  $[0,U]\times[0,V]\setminus\{(0,V)\}\cup{\cal V}$.
\end{Thm}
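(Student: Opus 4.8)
The plan is to deduce the Einstein equations from the first order system on the extended domain by invoking the equivalence results already available in the excerpt, after verifying that the regularity hypothesis (h4) holds for the extended solution. First I would recall from Propositions~\ref{einstein-2} and~\ref{einstein-3} that, once hypotheses (h1)--(h4) are in force, the tuple $(r,\nu,\lambda,\varpi,\theta,\zeta,\kappa)$ solving~\eqref{r_u}--\eqref{kappa_at_u} produces functions $r$, $\phi$ and $\Omega$ satisfying all of~\eqref{wave_r}--\eqref{wave_Omega}; Lemma~\ref{regular} then upgrades this to $r\in C^2$ and $\kappa\in C^1$. So the entire content of the theorem reduces to checking that the initial data used to build the extension in Theorem~\ref{para-la} satisfy (h1)--(h4), and that the extended functions are genuinely $C^1$ (resp. $C^2$ for $r$) across the Cauchy horizon $]0,U]\times\{V\}$, so that the two solutions — the original one on $[0,U]\times[0,V]$ and the new one on the neighborhood ${\cal V}$ — glue to a single classical solution.

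Concretely, I would proceed in two steps. Step~1: verify (h1)--(h4) for the backward problem that produced the extension. On the segment $\{U\}\times[V,V+\eps]$ the data were chosen by hand: $\tilde\lambda(U,\,\cdot\,)\equiv-1$ (trivially $C^1$), $\tilde\varpi(U,\,\cdot\,)$ a $C^1$ extension with $\partial_{\tilde v}\tilde\varpi\geq 0$, $\tilde\kappa(U,\,\cdot\,)=-1/\widetilde{(1-\mu)}(U,\,\cdot\,)$ which is $C^1$ because $\widetilde{(1-\mu)}$ is a $C^1$ function of $\tilde r$ and $\tilde\varpi$ and is bounded away from zero by our choice of $\eps$, and $\tilde\theta(U,\,\cdot\,)=\pm\sqrt{2\tilde\kappa\,\partial_{\tilde v}\tilde\varpi}$, which is continuous (the sign being constant by construction). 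The compatibility conditions~\eqref{initial_v} and~\eqref{kappa_at_zero} hold by the very definitions of $\tilde\kappa$ and $\tilde\theta$; on the segment $]0,U]\times\{V\}$ the data come from Theorem~\ref{fechado}, which already guarantees that $\tilde r$, $\tilde\nu$, $\tilde\zeta$ and the remaining functions solve the first order system there, so~\eqref{initial_rp} and~\eqref{kappa_at_zero} transfer automatically. Along $u=0$ — more precisely on the original initial segment — we have $\tilde\nu(u,0)=\nu_0(u)\equiv-1$ and $\tilde\kappa(0,v)\equiv 1$, $\tilde\lambda(0,v)\equiv0$, all $C^1$, so (h4) holds. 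Thus Theorem~\ref{existence-alt} applies and Lemma~\ref{regular} gives $\check r\in C^2$, $\check\kappa\in C^1$, and hence $\check\Omega^2=-4\check\nu\check\kappa\in C^1$ and $\check\phi\in C^1$ on ${\cal V}$.

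Step~2: glue. On the original rectangle $[0,U]\times[0,V]\setminus\{(0,V)\}$ the functions solve~\eqref{wave_r}--\eqref{wave_Omega} by Propositions~\ref{einstein-2}--\ref{einstein-3} (the hypotheses (h1)--(h4) were verified in Section~\ref{main-one-two} for the original data). On ${\cal V}$ they solve the same equations by Step~1. Along the common edge $]0,U]\times\{V\}$ all seven functions of the first order system agree (the extension was defined by taking their Cauchy-horizon limits as initial data), and the system~\eqref{r_u}--\eqref{kappa_at_u} forces the relevant first partial derivatives to match across the edge; hence, as recalled in the proof of Theorem~\ref{para-la} (and as in Part~1), the glued functions are $C^1$, with $\check r\in C^2$. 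A $C^1$ (resp. $C^2$ for $r$) solution of~\eqref{wave_r}--\eqref{wave_Omega} on each side that matches to this order across the interface is a classical solution on the union $[0,U]\times[0,V]\setminus\{(0,V)\}\cup{\cal V}$. This yields the theorem.

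The main obstacle is Step~1, and within it the verification that the hand-chosen data on $\{U\}\times[V,V+\eps]$ genuinely meet (h1)--(h4) — in particular that $\tilde\kappa(U,\,\cdot\,)$ and $\tilde\theta(U,\,\cdot\,)$ have the claimed regularity and that $\widetilde{(1-\mu)}(U,\,\cdot\,)$ stays negative (hence bounded away from zero), which is exactly what the upper bound on $\hat\eps$ and the subsequent shrinking to $\eps$ were designed to secure — together with confirming that nothing in the gluing along $]0,U]\times\{V\}$ degrades regularity (here Remark after Theorem~\ref{para-la}, $\tilde\varpi(U,\,\cdot\,)\in C^1([0,V])$, is the relevant input, ensuring $\tilde\kappa(U,\,\cdot\,)$ extends $C^1$). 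Everything else is a direct appeal to Theorem~\ref{existence-alt}, Lemma~\ref{regular}, and Propositions~\ref{einstein-2}--\ref{einstein-3}.
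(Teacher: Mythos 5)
Your proposal is correct and follows essentially the paper's own argument: the theorem is reduced to Propositions~\ref{einstein-2} and~\ref{einstein-3} after checking hypothesis (h4) for the data defining the extension (namely $\tilde\lambda(U,\,\cdot\,)\equiv-1$ and $\tilde\kappa(U,\,\cdot\,)$ being $C^1$ on $[0,V+\eps]$, together with $\tilde\nu(u,0)\equiv-1$), with Theorem~\ref{existence-alt} and Lemma~\ref{regular} supplying the required regularity. The only cosmetic difference is that the paper dispenses with your Step~2 gluing by regarding the whole extended solution as a single solution of the backward problem with $v$-data on $\{U\}\times[0,V+\eps]$, so the equivalence with the Einstein equations holds across the Cauchy horizon in one stroke.
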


\begin{Rmk}\label{fail}
By\/ {\rm Lemma~\ref{regular}} we conclude that $\check r$ is $C^2$, and $\check\nu$ and $\check\kappa$ are $C^1$.
Therefore, ${\check\Omega}^2$ is $C^1$, and so \label{metric_field} the metric is also $C^1$. 
The field $\check\phi$ is also $C^1$ because $\check\theta$ and $\check\zeta$ are continuous.
Furthermore, $\partial_u\partial_{\tilde v}\check\Omega^2$
exists and is continuous in this $(u,\tilde v)$ chart. We emphasize that $\check\Omega^2$ does not have to be $C^2$ in this $(u,\tilde v)$ chart. Indeed, 
$$
\check\Omega^2(u,0)=-4\nu_0(0)\kappa(u,0)=4e^{-\int_0^u\frac{\zeta_0^2(u')}{r(u',0)}\,du'}.
$$
This implies
\begin{equation}\label{lip}
\partial_u\check\Omega^2(u,0)=-4\frac{\zeta_0^2(u)}{r(u,0)}e^{-\int_0^u\frac{\zeta_0^2(u')}{r(u',0)}\,du'},
\end{equation}
and
\begin{eqnarray*}
\partial^2_u\check\Omega^2(u,0)&=&4\left(-\,\frac{\zeta_0^2(u)}{r^2(u,0)}+\frac{\zeta_0^4(u)}{r^2(u,0)}-\frac{(\zeta_0^2)'(u)}{r(u,0)}\right)e^{-\int_0^u\frac{\zeta_0^2(u')}{r(u',0)}\,du'},
\end{eqnarray*}
with $r(u,0)=r_+-u$. So, if\/ $0\leq u\leq U$ is a point where $\zeta_0^2$ is not differentiable, then $\partial^2_u\check\Omega^2(u,0)$ does not exist.
\end{Rmk}

\vspace{4mm}

\noindent {\em The Kretschmann scalar.} Consider ${\cal M}$ as a $C^3$ manifold.
We finish with some remarks about the behavior of the Kretschmann scalar
$$
R_{\alpha\beta\gamma\delta}R^{\alpha\beta\gamma\delta},
$$
whose blowup prevents the existence of $C^2$ extensions of the metric across the Cauchy horizon.
A~straightforward, though lengthy, computation shows that 
\begin{eqnarray*}
 R_{\alpha\beta\gamma\delta}R^{\alpha\beta\gamma\delta}
 &=&\frac{16}{r^6}\left[\left(\varpi-\,\frac{3e^2}{2r}+\frac\Lambda 6r^3\right)+\frac{r(1-\mu)}2\Bigl(\frac{\zeta}{\nu}\Bigr)\Bigl(\frac{\theta}{\lambda}\Bigr)\right]^2\\
&&+\frac{16}{r^6}\left(\varpi-\,\frac{e^2}{2r}+\frac\Lambda 6r^3\right)^2
+\frac{16}{r^6}\left(\varpi-\,\frac{e^2}{r}-\,\frac\Lambda 3r^3\right)^2\\
&&+4\frac{(1-\mu)^2}{r^4}\Bigl(\frac{\zeta}{\nu}\Bigr)^2\Bigl(\frac{\theta}{\lambda}\Bigr)^2
\end{eqnarray*}
(see~\cite[Section~2]{RendallNature}, \cite[Equation~(5)]{Dafermos1} and~\cite[Appendix~A]{DafermosStrong}). Note that if $e=\Lambda=0$, the Kretschmann scalar reduces to
$\frac{48\varpi^2}{r^6}$, the well known value for the Schwarzschild metric.
\begin{Rmk}[Kretschmann scalar]\ \label{RmkSCC1}
\begin{enumerate}[{\rm (i)}]
\item
Under the hypotheses of\/ {\rm Theorem~\ref{inflation2}}, for each $0<u\leq U$,
$$ (R_{\alpha\beta\gamma\delta}R^{\alpha\beta\gamma\delta})(u,\tilde v)\to\infty,\ {\rm as}\ \tilde v\nearrow V.$$
\item
Under the hypotheses of\/ {\rm Theorem~\ref{fechado}}, 
$$\exists_{C>0}\ |R_{\alpha\beta\gamma\delta}R^{\alpha\beta\gamma\delta}|\leq C.$$
\end{enumerate}
\end{Rmk}
\begin{proof}
In case (i), the conclusion is immediate if $\tilde\varpi(u,V)=\infty$. When $\tilde\varpi(u,V)<\infty$,
we know that $\tilde\varpi(u,V)$ is close to $\varpi_0$ for small $u$.
We have estimates~\eqref{up-n} and~\eqref{piece-2}, for $-\nu$ from above and for $\zeta$ from below, respectively, and
also that $(1-\mu)(u,\,\cdot\,)$ is bounded from above by a negative constant (see the proof of Proposition~\ref{fechado-0};
it applies to the present situation because we only need $\tilde\varpi(u,V)$ to be close to $\varpi_0$ to show that $\tilde\nu(u,V)<0$).
Therefore, the result follows from $\bigl|\frac{\tilde\theta}{\tilde\lambda}\bigr|(u,\tilde v)\to+\infty$, as $\tilde v\nearrow V$, for $u>0$.

In case (ii), the renormalized mass $\varpi$ and
$\bigl|\frac{\tilde\theta}{\tilde\lambda}\bigr|$ are bounded (see~\eqref{eq34}).
\end{proof}



\appendix

\section{On the choice of the parameters and its consequences}\label{appendix-2} 

The objective of this appendix is to study the behavior of $\Psi$ (defined in~\eqref{def_psi},
the quotient of the surface gravities at $r_-$ and at $r_+$ of the reference subextremal Reissner-Nordstr\"{o}m black hole) as a function of
the parameters $\Lambda$, $\varpi_0$ and $e$. It turns out that it is easiest to express $\Psi$ in terms of the new parameters $\sigma$ and $\Upsilon$,
defined in~\eqref{sigma-upsilon}. The formula for $\Psi$ in terms of $\sigma$ and $\Upsilon$ is given in~\eqref{dois}. At the end of this appendix,
the reader can find a figure showing the behavior of $\Psi$ in the $(\sigma,\Upsilon)$ plane.

We consider the fourth order polynomial 
$$
p(r):=r^2(1-\mu)(r,\varpi_0)=-\,\frac\Lambda 3r^4+r^2-2\varpi_0r+e^2.
$$
Since we assume $p$ has zeros at $r_-$ and $r_+$, it can be factored as
$$
p(r)=[r^2-(r_++r_-)r+r_-r_+]\left[{\textstyle-\,\frac\Lambda 3r^2+cr+\frac{e^2}{r_-r_+}}\right].
$$
The constant $c$ can be computed by imposing that the coefficient of $p$ in $r^3$ is equal to zero.
We obtain $c=-\,\frac\Lambda 3(r_-+r_+)$. Hence, $p$ can be factored as 
$$
p(r)=[r^2-(r_++r_-)r+r_-r_+]\left[{\textstyle-\,\frac\Lambda 3r^2-\,\frac\Lambda 3(r_-+r_+)r+\frac{e^2}{r_-r_+}}\right].
$$
Since the coefficient of $p$ in $r$ is equal to $-2\varpi_0$, we must have
$$
\varpi_0=\frac{e^2}{2r_-}+\frac{e^2}{2r_+}+\frac\Lambda 6(r_-+r_+)r_-r_+.
$$
On the other hand, since the coefficient of $p$ in $r^2$ is equal to $1$, we must have
$$
\frac{e^2}{r_-r_+}=1-\frac{\Lambda}{3}(r_-^2+r_-r_++r_+^2).
$$
We define 
\begin{equation}\label{sigma-upsilon}
\sigma:=\frac{r_+}{r_-}\qquad{\rm and}\qquad\Upsilon:=\frac{\Lambda r_-^2}{3}.
\end{equation}
Then
$$
\frac{e^2}{r_-r_+}=1-\Upsilon(\sigma^2+\sigma+1).
$$
A simple computation shows that
$$
\frac{\varpi_0}{r_-}=\frac 12(\sigma+1)[1-\Upsilon(\sigma^2+1)].
$$
Of course, we could think of $\Lambda$, $\varpi_0$ and $e$ as the independent parameters, and use the equation $p(r)=0$ to determine $r_-$ and $r_+$. Instead,
we think of $r_-$, $r_+$ and $\Lambda$ as the independent parameters, and $\varpi_0$ and $e$ as the dependent ones. More precisely,
we regard $r_-$, $\sigma$ and $\Upsilon$ as the independent parameters and $\frac{e^2}{r_-r_+}$ and $\frac{\varpi_0}{r_-}$ as the dependent ones.
Clearly, $\sigma>1$.

When $\Lambda>0$, the polynomial $p$ has a third positive root $r_c$, the radius of the Reissner-Nordstr\"{o}m de Sitter cosmological event horizon. This is the positive solution of
$$
r^2+(r_-+r_+)r-\frac{3e^2}{\Lambda r_-r_+}=0.
$$
The value of $r_c$ is given by
$$
r_c=\frac{-(r_-+r_+)+\sqrt{(r_-+r_+)^2+\frac{12e^2}{\Lambda r_-r_+}}}{2}\,.
$$
The fact that $r_+<r_c$ imposes a restriction on our independent parameters, namely
$$
\frac{3e^2}{\Lambda r_-r_+}>2r_+^2+r_-r_+.
$$
In terms of $\sigma$ and $\Upsilon$, this can be written as
$$
\frac{1-\Upsilon(\sigma^2+\sigma+1)}{\Upsilon}>2\sigma^2+\sigma,
$$
or
\begin{equation}\label{upsilon}
\Upsilon<\frac{1}{3\sigma^2+2\sigma+1}.
\end{equation}
If $\Lambda\leq 0$, condition~\eqref{upsilon} is also trivially satisfied.
We say that a choice of parameters $(\sigma,\Upsilon)$ is admissible if $\sigma>1$ and~\eqref{upsilon} holds.

Now we compute $\Psi$ as defined in~\eqref{def_psi}, obtaining
\begin{eqnarray}
\Psi&=&\left(\frac{r_+}{r_-}\right)^2\frac{\frac{e^2}{r_-}+\frac\Lambda 3r_-^3-\varpi_0}{-\,\frac{e^2}{r_+}-\,\frac\Lambda 3r_+^3+\varpi_0}\nonumber\\
&=&\left(\frac{r_+}{r_-}\right)^2\frac{\frac{e^2}{r_-r_+}\frac{r_+}{r_-}+\frac{\Lambda r_-^2}{3}-\frac{\varpi_0}{r_-}}{-\,\frac{e^2}{r_-r_+}-\,\frac{\Lambda r_-^2}{3}\frac{r_+^3}{r_-^3}+\frac{\varpi_0}{r_-}}\nonumber\\
&=&\sigma^2\frac{(1-\Upsilon(\sigma^2+\sigma+1))\sigma+\Upsilon-\frac 12(\sigma+1)[1-\Upsilon(\sigma^2+1)]}{-(1-\Upsilon(\sigma^2+\sigma+1))-\Upsilon\sigma^3+\frac 12(\sigma+1)[1-\Upsilon(\sigma^2+1)]}\nonumber\\
&=&\sigma^2\frac{1-\Upsilon(\sigma^2+2\sigma+3)}{1-\Upsilon(3\sigma^2+2\sigma+1)}.\label{dois}
\end{eqnarray}
Taking into account~\eqref{upsilon}, in the region of interest, the condition $\Psi>1$ is equivalent to
$$
\Upsilon<\frac{1}{3\sigma^2+2\sigma+1}\quad{\rm and}\quad\Upsilon<\frac 1{(\sigma+1)^2}.
$$
As the first upper bound is smaller than the second, we conclude that for all admissible choices of parameters we have $\Psi>1$, that is
$$ 
-\partial_r(1-\mu)(r_-,\varpi_0)>\partial_r(1-\mu)(r_+,\varpi_0).
$$ 

We prove mass inflation in the region $\Psi>2$. Using~\eqref{upsilon} and~\eqref{dois}, the condition $\Psi>2$ 
is equivalent to 
$$
\frac{\sigma^2-2}{\sigma^4+2\sigma^3-3\sigma^2-4\sigma-2}<\Upsilon<\frac{1}{3\sigma^2+2\sigma+1}
$$
if
$$
\sigma<\sigma_0:=\frac 12\left(-1+{\textstyle\sqrt{9+4\sqrt{6}}}\right)\approx 1.66783.
$$
The value $\sigma_0$ is the only positive solution of $\sigma^4+2\sigma^3-3\sigma^2-4\sigma-2=0$.
For $\sigma\geq\sigma_0$, the condition $\Psi>2$, with the restriction~\eqref{upsilon}, is always satisfied.
Indeed, for $\sigma>\sigma_0$, we have
$$
\frac{\sigma^2-2}{\sigma^4+2\sigma^3-3\sigma^2-4\sigma-2}>\frac{1}{3\sigma^2+2\sigma+1}
$$
because the difference
$$
\frac{\sigma^2-2}{\sigma^4+2\sigma^3-3\sigma^2-4\sigma-2}-\frac 1{(\sigma+1)^2}
$$
is equal to
$$
\frac{2\sigma^2}{(\sigma^4+2\sigma^3-3\sigma^2-4\sigma-2)(\sigma+1)^2},
$$
and this is positive for $\sigma>\sigma_0$.

In the next figure we sketch part of the $(\sigma,\Upsilon)$-plane. As we just saw, the restriction $r_+<r_c$ 
translates into~\eqref{upsilon} and this region (shaded in the figure) is the only relevant one for our purposes.
We remark that the limit value of $\Psi$ on the line $\sigma=1$ is one.

\vspace{1mm}


\begin{center}
\begin{psfrags}
\psfrag{q}{\tiny $\!\!\!\!\!\!\!\sqrt{2}$}
\psfrag{r}{\tiny $\!\sigma_0$}
\psfrag{s}{\tiny $\sigma=\frac{r_+}{r_-}$}
\psfrag{L}{\tiny $\!\!\!\!\!\!\!\!\!\Upsilon=\frac{\Lambda r_-^2}{3}$}
\psfrag{t}{\tiny $\!\!\!\frac 16$}
\psfrag{f}{\tiny $\!\!\!\frac 14$}
\psfrag{b}{\tiny $\Psi=+\infty \Leftrightarrow r_+=r_c$}
\psfrag{c}{\tiny $\Psi=1$}
\psfrag{z}{\tiny $\!\!\Psi=1$}
\psfrag{y}{\tiny $\!\!\Psi=2$}
\psfrag{d}{\tiny $\Psi=2$}
\psfrag{1}{\tiny $1$}
\psfrag{p}{}
\psfrag{m}{}
\includegraphics[scale=1]{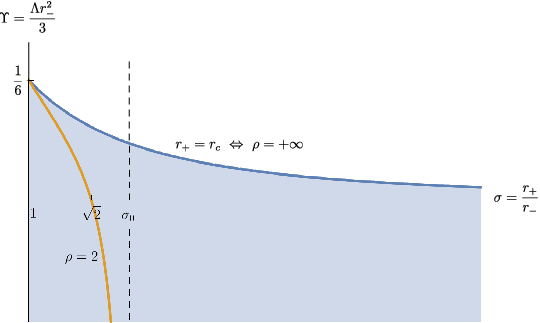}
\end{psfrags}
\end{center}

\section{Proof of Theorem~\ref{inflation}}\label{prova-massa}

We start by establishing the following useful result.

\begin{Lem}\label{sinal-theta} 
 Assume that $\zeta_0(u)>0$ for $u>0$. Then $\theta>0$ and $\zeta>0$ in ${\cal P}\setminus\{0\}\times[0,\infty[$.
\end{Lem}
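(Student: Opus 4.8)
The plan is to establish positivity of $\theta$ and $\zeta$ by a continuity/bootstrap argument on the region ${\cal P}\setminus\{0\}\times[0,\infty[$, using the evolution equations~\eqref{theta_u} and~\eqref{zeta_v} together with the sign information already recorded in Lemma~\ref{sign}. Recall from Lemma~\ref{sign} that $\nu<0$, $\lambda<0$ (away from $u=0$), and $r>0$ throughout. Along the initial segments we have $\theta_0\equiv 0$ on $\{0\}\times[0,\infty[$ and $\zeta(u,0)=\zeta_0(u)>0$ for $u>0$; moreover, integrating~\eqref{theta_u} from the $v$-axis gives $\theta(u,v)=-\int_0^u\frac{\zeta\lambda}{r}(\tilde u,v)\,d\tilde u$, so $\theta(u,0)>0$ for $u>0$ as well, since on $v=0$ we have $\zeta=\zeta_0>0$, $\lambda<0$, $r>0$.

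The key observation is the coupled monotonicity: where $\zeta>0$ and $\lambda<0$, equation~\eqref{theta_u} gives $\partial_u\theta=-\zeta\lambda/r>0$; and where $\theta>0$ and $\nu<0$, equation~\eqref{zeta_v} gives $\partial_v\zeta=-\theta\nu/r>0$. First I would fix $u_0>0$ and work on the slab $[u_0,U]\times[0,\infty[$ (or, in the unbounded-in-$u$ formulation, on $[u_0,\bar u]\times[0,\infty[$), and argue that the set where both $\theta>0$ and $\zeta>0$ is open, closed, and nonempty in an appropriate sense. Concretely: suppose for contradiction there is a first point (in a suitable ordering, e.g.\ minimal $v$ along which positivity fails for some $u\geq u_0$) where $\theta$ or $\zeta$ vanishes. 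Up to that point $\zeta>0$ forces $\theta(u,v)=\theta(u_0,v)+\int_{u_0}^u(-\zeta\lambda/r)(\tilde u,v)\,d\tilde u>\theta(u_0,v)>0$, using that on $u=u_0$ (or on the $v$-axis) $\theta$ is already strictly positive and the integrand is positive; similarly $\zeta(u,v)=\zeta(u,0)+\int_0^v(-\theta\nu/r)(u,\tilde v)\,d\tilde v\geq\zeta_0(u)>0$ (or, if starting the $v$-integration from wherever convenient, $\zeta$ stays bounded below by its positive initial value plus a positive integral). This strict positivity contradicts the assumed vanishing, closing the bootstrap. Letting $u_0\searrow 0$ then yields positivity on all of ${\cal P}\setminus\{0\}\times[0,\infty[$.

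The main obstacle I anticipate is purely bookkeeping about the geometry of the maximal past set ${\cal P}$ and the correct way to run the continuity argument there — in particular making the ``first point of failure'' argument rigorous when ${\cal P}$ is not a rectangle, and handling the fact that $\theta=0$ identically on $\{0\}\times[0,\infty[$ so one cannot simply integrate $\zeta$ from $u=0$. The clean fix is to never touch $u=0$: prove the statement on $[u_0,U]\times[0,\infty[$ for every $u_0>0$, integrating $\theta$ in $u$ starting from the value on the $v$-axis $v=0$ (where we verified $\theta(u,0)>0$ directly) rather than from $u=0$, and integrating $\zeta$ in $v$ starting from $v=0$ where $\zeta(u,0)=\zeta_0(u)>0$. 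Since $\lambda<0$, $\nu<0$, $r>0$ are already known on the whole of ${\cal P}\setminus\{0\}\times[0,\infty[$ by Lemma~\ref{sign}, no further sign input is needed, and the argument is a short Grönwall-free induction on the integral identities.
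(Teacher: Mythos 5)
Your preliminary computations are correct and useful: on the segment $v=0$ one indeed gets $\theta(u,0)=\int_0^u\bigl(\zeta_0\,(-\lambda)/r\bigr)(\tilde u,0)\,d\tilde u>0$ for $u>0$ from \eqref{theta_u} and Lemma~\ref{sign}, and the final strict-positivity computations (integrating \eqref{zeta_v} in $v$ from $v=0$ and \eqref{theta_u} in $u$) are exactly how the paper upgrades nonnegativity to positivity off the axis (its Step~3). The gap is in the middle: your bootstrap never secures the nonnegativity of $\theta$ and $\zeta$ in the interior, because the data are degenerate on the axis $u=0$, where $\theta(0,\cdot)\equiv 0$ and also $\zeta(0,\cdot)\equiv 0$ (recall $\zeta_0(0)=0$). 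Concretely, on the slab $[u_0,U]\times[0,\infty[$ your contradiction step uses $\theta(u_0,v)>0$; but $\theta(u_0,\cdot)$ is not initial data, and if the ``first failure'' occurs on the edge $u=u_0$ your ordering gives you nothing there. The only way to evaluate $\theta(u_0,v)$ is to integrate \eqref{theta_u} from $u=0$, i.e.\ $\theta(u_0,v)=\int_0^{u_0}\bigl(\zeta(-\lambda)/r\bigr)(\tilde u,v)\,d\tilde u$, which requires sign control of $\zeta$ at $u<u_0$ — outside your slab. So the slab truncation does not in fact ``never touch $u=0$''. If instead you run the first-failure argument on all of $\{u>0\}$, the edge problem disappears, but then the failure set need not have a first element: failure points can a priori accumulate at the axis, where $\theta=\zeta=0$ identically, so no contradiction is reached there. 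Closing this requires a quantitative ingredient — e.g.\ a Gronwall-type estimate on the negative parts of $\theta,\zeta$ over a compact rectangle touching the axis — which contradicts your claim that the argument is ``Gr\"onwall-free'' and that only bookkeeping remains.

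The paper handles precisely this degeneracy by a different device: its Step~1 is your first-failure argument, but only under \emph{strictly} positive data $\theta_0>0$, $\zeta_0>0$ on both initial segments (then the closed failure set cannot meet the axes and a minimal failure point yields the contradiction cleanly); its Step~2 invokes the continuous dependence on $(\theta_0,\zeta_0)$ proved in Part~1 to pass from strictly positive approximating data to the actual data \eqref{RN2} with $\theta_0\equiv 0$, obtaining $\theta\geq 0$, $\zeta\geq 0$; and Step~3 then gives strict positivity for $u>0$ as you do. So your proposal is salvageable (either by inserting the continuous-dependence step, or by supplying the Gronwall estimate near the axis), but as written the nonnegativity step — the heart of the lemma — is not established.
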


\begin{proof}
The proof proceeds in three steps.

{\em Step 1}. If $\theta_0>0$ and $\zeta_0>0$, then $\theta>0$ and $\zeta>0$ in ${\cal P}$.
Otherwise, there would exist a point $(u,v)\in{\cal P}$ such that $\theta(u,v)=0$ or $\zeta(u,v)=0$
but $\theta>0$ and $\zeta>0$ in $J^-(u,v)$. Integrating~\eqref{theta_u} and~\eqref{zeta_v},
we obtain a contradiction.

{\em Step 2}. Since in Part~1 we proved continuous dependence of the solution on $\theta_0$ and $\zeta_0$, 
if $\theta_0\geq 0$ and $\zeta_0\geq 0$, then $\theta\geq 0$ and $\zeta\geq 0$.

{\em Step 3}.
 Suppose that $(u,v)\in{\cal P}\setminus\{0\}\times[0,\infty[$.
Since $\zeta_0(u)>0$ for $u>0$, \eqref{zeta_v} implies that $\zeta(u,v)>0$, because, from the previous step, $\theta\geq 0$.
So $\zeta>0$ in ${\cal P}\setminus\{0\}\times[0,\infty[$. Now~\eqref{theta_u} implies that $\theta$ is positive on ${\cal P}\setminus\{0\}\times[0,\infty[$
because $\lambda$ is negative on this set.
\end{proof}

\begin{Cor}\label{Jim}
Under the hypotheses of\/ {\rm Lemma~\ref{sinal-theta}}, for $u>\bar{u}$ and $v>\bar{v}$,
we have
$$
\varpi(u,v)-\varpi(u,\bar{v})\geq \varpi(\bar{u},v)-\varpi(\bar{u},\bar{v}).
$$
\end{Cor}
\begin{proof}
This is an easy consequence of the fact that 
$$\partial_u\partial_v\varpi=-\frac{\theta\zeta\lambda}{\kappa r}-\frac{\theta^2\zeta^2}{2\kappa r \nu}\geq 0.$$
\end{proof}

\begin{proof}[Proof of\/ {\rm Theorem~\ref{inflation}}]
We follow the argument on pages~493--497 of~\cite{Dafermos2}. 
We consider the same three cases as in the proof of Lemma~\ref{l-kappa}, presented in Part~2.

{\em Case 1}. If~\eqref{infinito} holds, there is nothing to prove.

{\em Case 2}. If
\begin{equation}
\label{welch}
\lim_{u\searrow 0}\varpi(u,\infty)>\varpi_0,
\end{equation}
then~\eqref{infinito} holds. This was proven on page~494 of~\cite{Dafermos2} and is repeated here for the convenience of the reader. Suppose that~\eqref{welch} holds.
Then there exists $\eps>0$ such that
$$
\lim_{u\searrow 0}\varpi(u,\infty)>\varpi_0+3\eps.
$$
Since $\lim_{v\to\infty}\varpi(u_\gamma(v),v)=\varpi_0$ (see~\eqref{espana}), we have
$$
\varpi(u_\gamma(v),v)<\varpi_0+\eps
$$
for, say, $v\geq V$. Hence, if $u$ is sufficiently small, there exists $v=v_{\gamma,\eps}(u)$ so that $\varpi(u,v_{\gamma,\eps}(u))-\varpi(u,v_\gamma(u))=\eps$. We construct a sequence
$(u_n,v_n)$ in the following way. Starting with $(u_0,v_0)=(u_\gamma(V),V)$, we let
$(u_{n+1},v_{n+1})$ be such that $v_{n+1}=v_{\gamma,\eps}(u_n)$ and
$u_{n+1}=u_\gamma(v_{n+1})$. According to Lemma~\ref{Jim},
$$
\varpi(u_n,v_{n+1})-\varpi(u_n,v_n)\geq \varpi(u_{n+1},v_{n+1})-\varpi(u_{n+1},v_n)=\eps
$$
for all $n$,
$$
\varpi(u_n,v_{n+2})-\varpi(u_n,v_{n+1})\geq \varpi(u_{n+1},v_{n+2})-\varpi(u_{n+1},v_{n+1})\geq\eps
$$
for all $n$,
and
$$
\varpi(u_n,v_{n+k})-\varpi(u_n,v_{n+k-1})\geq \varpi(u_{n+1},v_{n+k})-\varpi(u_{n+1},v_{n+k-1})\geq\eps
$$
for all $n$ and all $k\geq 1$.
Hence,
$$
\varpi(u_n,v_{n+k})-\varpi(u_n,v_n)\geq k\eps
$$
for all $n$ and all $k\geq 1$.
This implies~\eqref{infinito} because $\partial_v\varpi\geq 0$.


{\em Case 3}. Suppose now that $\lim_{u\searrow 0}\varpi(u,\infty)=\varpi_0$. 
As in the proof of Lemma~\ref{l-kappa}, we have $\left(\vlinha\right)(u,v)\geq 0$ for $(u,v) \in J^+(\cg_{\ckrm})$ and $u$ sufficiently small. 
Then, from~\eqref{lambda_u} it follows that $\partial_u(-\lambda)\leq 0$ in $ J^+(\cg_{\ckrm})$, whereas from Lemma~\ref{sinal-theta} it follows $\partial_u\theta\geq 0$.
As a consequence, the integral
$$
I(u):=\int_{\vgr(u)}^\infty\left[\frac{\theta^2}{-\lambda}\right](u,\tilde v)\,d\tilde v
$$
is a nondecreasing function of $u$. Therefore we have two alternatives to consider.

{\em Case 3.1.} $I(u)=+\infty$ for all small $u$, say $0<u\leq U$. Consider such a~$u$. We observe that the following limit exists and is finite:
$$
\lim_{v\nearrow\infty}(1-\mu)(u,v)=:(1-\mu)(u,\infty)=1-\frac{2\varpi(u,\infty)}{r(u,\infty)}+\frac{e^2}{r^2(u,\infty)}-\frac\Lambda 3r^2(u,\infty).
$$
Equation~\eqref{nu_v} and $\left(\vlinha\right)(u,v)\geq 0$ imply that $v\mapsto\nu(u,v)$ is a nondecreasing function in $J^+(\gam)$. So we may define
\begin{eqnarray*}
\nu(u,\infty)&=&\lim_{v\nearrow+\infty}\nu(u,v).
\end{eqnarray*}
Integrating~\eqref{ray_v_bis} we get
$\lim_{v\nearrow\infty}\frac{\nu}{1-\mu}(u,v)=0$. Therefore, $\nu(u,\infty)=0$. Let $0<\delta<u\leq U$. Clearly,
$$
r(u,v)=r(\delta,v)+\int_\delta^u\nu(s,v)\,ds.
$$
 Thus, by Lebesgue's Monotone Convergence Theorem, 
$$ 
r(u,\infty)=r(\delta,\infty)+\int_\delta^u\nu(s,\infty)\,ds=r(\delta,\infty).
$$ 
Letting $\delta$ decrease to zero, due to~\eqref{r-menos}, we obtain $r(u,\infty)\equiv r_-$. 
This contradicts Theorem~\ref{rmenos}.

{\em Case 3.2.} $I(u)<+\infty$ for all small $u$, say $0<u\leq U$. Arguing as in pages 495--496 of~\cite{Dafermos2}, we know $\lim_{u\searrow 0}I(u)=0$.
We will use this information to improve our upper bound on $-\lambda$ in the region $J^+(\gam)$. Then we will obtain a lower bound for
$\theta$ in this region. Finally, we use these bounds to arrive at the contradiction that $I(u)=+\infty$.

Let $\eps>0$. As $\lim_{u\searrow 0}I(u)=0$, we may choose $U>0$ sufficiently small, so that for all $(u,v)\in J^+(\gam)$ with $0<u\leq U$,
\begin{equation}\label{integrating_factor2}
e^{\frac 1{r(U,\infty)}\int_{\vgr(\bar u)}^v\bigl[\big|\frac\theta\lambda\bigr||\theta|\bigr](\bar u,\tilde v)\,d\tilde v}\leq 1+\eps,
\end{equation}
for $\bar u\in[\ugam(v),u]$. 
\begin{center}
\begin{turn}{45}
\begin{psfrags}
\psfrag{f}{{\tiny \!\!\!\!\!\!\!\!\!\!\!\!\!\!$\ugam(v)$}}
\psfrag{y}{{\tiny $\gam$}}
\psfrag{u}{{\tiny $u$}}
\psfrag{a}{{\tiny $v$}}
\psfrag{x}{{\tiny $\ \cg_{\ckrm}$}}
\psfrag{d}{{\tiny \!\!\!\!$u$}}
\psfrag{e}{{\tiny \!\!\!\!$u$}}
\psfrag{v}{{\tiny $v$}}
\psfrag{b}{{\tiny $\vgr(u)$}}
\psfrag{c}{{\tiny $\vgr(\ugam(v))$}}
\psfrag{p}{{\tiny $(u,v)$}}
\includegraphics[scale=.7]{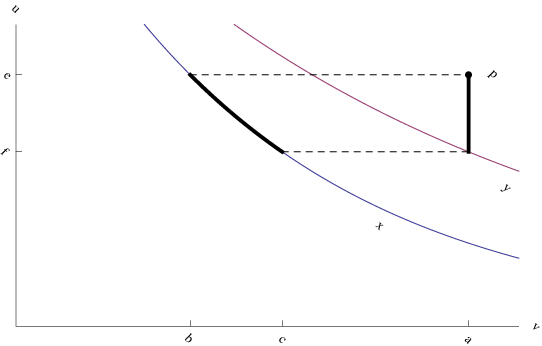}
\end{psfrags}
\end{turn}
\end{center}
\noindent
Next we use~\eqref{ray_v_bis}, \eqref{D} and~\eqref{integrating_factor2}.
We may bound the integral of $\nu$ along $\cg_{\ckrm}$ in terms of the integral of $\frac{\nu}{1-\mu}$ on the segment
$\bigl[\ugam(v),u\bigr]\times\{v\}$ in the following way:
\begin{eqnarray}
&&-\int_{\ugam(v)}^u\nu(\tilde u,\vgr(\tilde u))\,d\tilde u \leq\nonumber \\
&&\qquad\qquad\qquad-\min_{\cg_{\ckrm}}(1-\mu)
\int_{\ugam(v)}^u\frac\nu{1-\mu}(\tilde u,\vgr(\tilde u))\,d\tilde u\leq\nonumber\\
&&\qquad\qquad\qquad-(1+\eps)\min_{\cg_{\ckrm}}(1-\mu)
\int_{\ugam(v)}^u\frac\nu{1-\mu}(\tilde u,v)\,d\tilde u.\label{EE}
\end{eqnarray}
Applying successively~\eqref{EE}, \eqref{integral}, \eqref{F}, and~\eqref{kappaquaseum},
\begin{eqnarray}
&&\int_{\ugam(v)}^u\frac\nu{1-\mu}(\tilde u,v)\,d\tilde u\nonumber\\
&&\qquad\qquad\geq \mbox{\tiny$\frac{1}{-(1+\eps)\min_{\cg_{\ckrm}}(1-\mu)}$}\int_{\ugam(v)}^u-\nu(\tilde u,\vgr(\tilde u))\,d\tilde u\nonumber\\
&&\qquad\qquad= \mbox{\tiny$\frac{1}{-(1+\eps)\min_{\cg_{\ckrm}}(1-\mu)}$}\int_{\vgr(u)}^{\vgr(\ugam(v))}-\lambda(\ugr(\tilde v),\tilde v)\,d\tilde v\nonumber\\
&&\qquad\qquad\geq \mbox{\tiny$\frac{\max_{\cg_{\ckrm}}(1-\mu)}{(1+\eps)\min_{\cg_{\ckrm}}(1-\mu)}$}\int_{\frac{\vgam(u)}{1+\beta}}^{\frac v{1+\beta}}\kappa(\ugr(\tilde v),\tilde v)\,d\tilde v\nonumber\\
&&\qquad\qquad\geq \mbox{\tiny$\frac{(1-\eps)\max_{\cg_{\ckrm}}(1-\mu)}{(1+\eps)\min_{\cg_{\ckrm}}(1-\mu)}$}\bigl({\textstyle\frac{v-\vgam(u)}{1+\beta}}\bigr).\label{argue2}
\end{eqnarray}
Thus,
\begin{eqnarray}
&&e^{\int_{\ugam(v)}^{u}\bigl[\frac{\nu}{1-\mu}\partial_r(1-\mu)\bigr](\tilde u,v)\,d\tilde u}\nonumber\\
&&\qquad\qquad\leq
e^{\bigl[\max_{J^+(\gam)}\partial_r(1-\mu)\bigr]
\int_{\ugam(v)}^{u}\frac\nu{1-\mu}(\tilde u,v)\,d\tilde u}\nonumber\\
&&\qquad\qquad\leq
e^{\bigl[\max_{J^+(\gam)}\partial_r(1-\mu)\bigr]
\mbox{\tiny$\frac{(1-\eps)}{(1+\eps)}\frac{\max_{\cg_{\ckrm}}(1-\mu)}{\min_{\cg_{\ckrm}}(1-\mu)}$}\, \bigl({\textstyle\frac{v-\vgam(u)}{1+\beta}}\bigr)}\nonumber\\
&&\qquad\qquad\leq
e^{\bigl[\partial_r(1-\mu)(\ckrm,\varpi_0)+\max_{J^+(\gam)}\mbox{\tiny$\frac{2(\varpi-\varpi_0)}{r^2}$}\bigr]
\mbox{\tiny$\frac{(1-\eps)}{(1+\eps)}\frac{\max_{\cg_{\ckrm}}(1-\mu)}{\min_{\cg_{\ckrm}}(1-\mu)}$}\,\bigl({\textstyle\frac{v-\vgam(u)}{1+\beta}}\bigr)}\nonumber\\
&&\qquad\qquad\leq
e^{\bigl[\partial_r(1-\mu)(\ckrm,\varpi_0)+\mbox{\tiny$\frac\eps{(r_--\eps_0)^2}$}\bigr]
\mbox{\tiny$\frac{(1-\eps)}{(1+\eps)}\frac{\max_{\cg_{\ckrm}}(1-\mu)}{\min_{\cg_{\ckrm}}(1-\mu)}$}\, \bigl({\textstyle\frac{v-\vgam(u)}{1+\beta}}\bigr)}.\label{III}
\end{eqnarray}
We integrate~\eqref{lambda_u} and we use~\eqref{lambda-above} and~\eqref{III} to obtain
\begin{eqnarray}
-\lambda(u,v)&=&
-\lambda(\ugam(v),v)e^{\int_{\ugam(v)}^{u}\bigl[\frac{\nu}{1-\mu}\partial_r(1-\mu)\bigr](\tilde u,v)\,d\tilde u}\label{lambda_above}\\
&\leq&C
e^{(1-\tilde\delta)\partial_r(1-\mu)(\ckrm,\varpi_0)\bigl(\frac{\beta v}{1+\beta}+\frac v{1+\beta}-\frac{\vgam(u)}{1+\beta}\bigr)}\nonumber\\
&=&C(u)
e^{(1-\tilde\delta)\partial_r(1-\mu)(\ckrm,\varpi_0)v}.\label{lambda_cima}
\end{eqnarray}
The value of $\tilde\delta$ can be made
small by choosing $U$ sufficiently small. 
Here $C(u)=Ce^{-(1-\tilde\delta)\partial_r(1-\mu)(\ckrm,\varpi_0)\frac{\vgam(u)}{1+\beta}}$.
This is the desired upper estimate for $-\lambda$.

Now we turn to obtaining the lower estimate for $\theta$.
Combining~\eqref{bound_lambda2} with~\eqref{region_delta}, for $(u,v)\in\cg_{r_+-\delta}$ we have
\begin{eqnarray}
-\lambda(u,v)&\geq&\Bigl(\frac{r_+-\delta}{r_+}\Bigr)\frac{\partial_r(1-\mu)(r_+,\varpi_0)}{1+\eps}u\,e^{[\partial_r(1-\mu)(r_+,\varpi_0)-\eps]v}\nonumber\\
&\geq&Cu\,e^{[\partial_r(1-\mu)(r_+,\varpi_0)-\eps]v}.\label{piece-1}
\end{eqnarray}
Note that $C$ can be chosen independently of $\delta$.
Using~\eqref{zeta_v}, Lemma~\ref{sinal-theta} and~\eqref{zeta-0},
\begin{equation}\label{piece-2}
\zeta(u,v)\geq cu^s\ {\rm for\ all}\/\ (u,v).
\end{equation}
We take into account that~\eqref{piece-1} and~\eqref{piece-2} are valid for arbitrary $\delta$, small,
and that $J^-(\cg_{\ckrp})$ is foliated by curves $\cg_{r_+-\delta}$ for $0 < \delta < r_+ - \ckrp$. Therefore,
integrating~\eqref{theta_u}, 
for $(u,v)\in\cg_{\ckrp}$ we have
\begin{eqnarray}
\theta(u,v)&\geq& Cu^{s+2}e^{[\partial_r(1-\mu)(r_+,\varpi_0)-\eps]v}\nonumber\\
&\geq&Ce^{[-(s+1)\partial_r(1-\mu)(r_+,\varpi_0)-\tilde\eps]v}.\label{theta_below}
\end{eqnarray}
For the last inequality, we used~\eqref{region_delta}. The constant $C$ depends on $\ckrp$. The value of $\tilde\eps$ can be made
small by choosing $\ckrp$ sufficiently close to $r_+$. We know that $\partial_u\theta\geq 0$. Thus, \eqref{theta_below} also holds in $J^+(\cg_{\ckrp})$.
This is the desired lower estimate for $\theta$.

We can now obtain a lower bound for $I(u)$ using~\eqref{lambda_cima} and~\eqref{theta_below}:
\begin{eqnarray*}
I(u)&\geq&\int_{\vgam(u)}^\infty\left[\frac{\theta^2}{-\lambda}\right](u,\tilde v)\,d\tilde v\\
&\geq&C(u)\int_{\vgam(u)}^\infty\frac{e^{[-2(s+1)\partial_r(1-\mu)(r_+,\varpi_0)-2\tilde\eps]\tilde v}}{e^{(1-\tilde\delta)\partial_r(1-\mu)(\ckrm,\varpi_0)\tilde v}}\,d\tilde v.
\end{eqnarray*}
This integral is infinite if
$$
-2(s+1)\partial_r(1-\mu)(r_+,\varpi_0)-\partial_r(1-\mu)(\ckrm,\varpi_0)>0,
$$
or, equivalently,
\begin{equation} \label{Denise}
s<\frac 12\frac{-\partial_r(1-\mu)(\ckrm,\varpi_0)}{\partial_r(1-\mu)(r_+,\varpi_0)}-1,
\end{equation}
provided that $\tilde\eps$ and $\tilde\delta$ are chosen sufficiently small (which we can achieve by decreasing $U$ and $\delta$, if necessary).
To complete the proof of Theorem~\ref{inflation} we just have to note that given $s<\frac\Psi 2-1$ we can always choose $\ckrm$ so that \eqref{Denise} holds, contradicting $I(u)<\infty$.
\end{proof}

\section{Some useful formulas}\label{Costa}

Here we collect some formulas that were obtained in Part~2 and that are needed to study the behavior of the solution at the Cauchy horizon.

\subsection*{The Raychaudhuri equations written in terms of $\kappa$ and $\frac\nu{1-\mu}$}

Using equations (\ref{r_v}), (\ref{nu_v}), (\ref{omega_v}) and (\ref{kappa_at_u}), we get
\begin{equation}\label{ray_v_bis} 
\partial_v\left(\frac{\nu}{1-\mu}\right)=\frac{\nu}{1-\mu}\left(\frac{\theta}{\lambda}\right)^2\frac{\lambda}{r}.
\end{equation}
The equations~\eqref{kappa_u} and~\eqref{ray_v_bis} are the Raychaudhuri equations.

\subsection*{Evolution equations for $\frac\theta\lambda$ and $\frac\zeta\nu$} 

Using equations \eqref{lambda_u}, \eqref{theta_u} and \eqref{nu_v}, \eqref{zeta_v} we obtain, respectively,
 \begin{equation}\label{theta_lambda} 
  \partial_u\frac\theta\lambda=
	-\,\frac\zeta r-\frac\theta\lambda\frac{\nu}{1-\mu}\partial_r(1-\mu),
 \end{equation}
 \begin{equation}\label{zeta_nu} 
  \partial_v\frac\zeta\nu=
	-\,\frac\theta r-\frac\zeta\nu\frac{\lambda}{1-\mu}\partial_r(1-\mu).
 \end{equation}

\subsection*{The integrals of $\nu$ and $\lambda$ along a curve $\cg_{\ckr}$} Equation (118) in Part~2 is

\begin{equation} 
 \int_{\uckr(v)}^u\nu(\tilde u,\vckr(\tilde u))\,d\tilde u=\int_{\vckr(u)}^v\lambda(\uckr(\tilde v),\tilde v)\,d\tilde v.
\label{integral}
\end{equation}

\subsection*{Estimates in $J^-(\cg_{\ckrp})$} Estimates (42) and (50) in Part~2 are
$$
\left|\frac\zeta\nu\right|(u,v)\leq C\max_{\bar u\in[0,u]}|\zeta_0|(\bar u).
$$
\begin{eqnarray} 
-\,\frac{\lambda}{1-\mu}\partial_r(1-\mu)&\leq&
-\Bigl(\frac{\ckrp}{r_+}\Bigr)^{\hat\delta^2}\min_{r\in[\ckrp,r_+]}\partial_r(1-\mu)(r,\varpi_0)\nonumber\\
&=&-\alpha<0,\label{alp_new}
\end{eqnarray}
where $\hat\delta$ is a bound for $\bigl|\frac\zeta\nu\bigr|$ in $J^-(\ckrp)$.

\subsection*{Estimates for $(u,v)\in\cg_{r_+-\delta}$} Estimates (82) and (84) in Part~2 are

\begin{eqnarray} 
-\,\Bigl(\frac{r_+}{r_+-\delta}\Bigr)
\left(\frac{\partial_r(1-\mu)(r_+,\varpi_0)}{1-\eps}+\frac{4\tilde{\delta}}{r_+^2}\right)\,\delta\leq\lambda\qquad\qquad\qquad\qquad&&\label{bound_lambda2}\\
\leq
-\Bigl(\frac{r_+-\delta}{r_+}\Bigr)\frac{\partial_r(1-\mu)(r_+,\varpi_0)}{1+\eps}\,\delta,&&\nonumber
\end{eqnarray}

\begin{equation}\label{region_delta} 
\delta\, e^{-[\partial_r(1-\mu)(r_+,\varpi_0)+\eps]\,v}
\leq u \leq
\delta\, e^{-[\partial_r(1-\mu)(r_+,\varpi_0)-\eps]\,v},
\end{equation}
for $\delta > 0$ sufficiently small, where $\eps>0$ and $\tilde{\delta}>0$ can be chosen arbitrarily close to zero if $\delta$ is small enough.

\subsection*{Estimate in $J^-(\cg_{\ckrm})\cap J^+(\cg_{\ckrp})$} Estimate (79) in Part~2 is
\begin{equation}\label{nu_mu_bis} 
\Bigl(\frac{\ckrm}{r_+}\Bigr)^{\hat\delta^2}
\frac{1-\eps}{\partial_r(1-\mu)(r_+,\varpi_0)\,u}\leq\frac{\nu}{1-\mu}(u,v)\leq\frac{1+\eps}{\partial_r(1-\mu)(r_+,\varpi_0)\,u},
\end{equation}
where $\hat{\delta}>0$ can be chosen arbitrarily close to zero if $U$ is small enough.

\subsection*{Estimate in $J^-(\cg_{\ckrm})$} Equation (93) in Part~2 is
\begin{equation}\label{canto-part}
\lim_{\stackrel{(u,v)\to(0,\infty)}{{\mbox{\tiny{$(u,v)\in J^-(\ckrm)$}}}}}\,\Bigl|\frac\theta\lambda\Bigr|(u,v)=0.
\end{equation}

\subsection*{Relation between the integrals of $\lambda$ and $\kappa$ along the curve $\cg_{\ckrm}$} Estimates (119) and (120) in Part~2 are
\begin{eqnarray} 
&&-\max_{\cg_{\ckrm}}(1-\mu)
\int_{\vgr(u)}^v\kappa(\ugr(\tilde v),\tilde v)\,d\tilde v\label{F}\\
&&\qquad\qquad\leq -\int_{\vgr(u)}^v\lambda(\ugr(\tilde v),\tilde v)\,d\tilde v\leq\nonumber \\
&&\qquad\qquad\qquad\qquad-\min_{\cg_{\ckrm}}(1-\mu)
\int_{\vgr(u)}^v\kappa(\ugr(\tilde v),\tilde v)\,d\tilde v.\label{A}
\end{eqnarray}

\subsection*{Relation between the integrals of $\nu$ and $\frac\nu{1-\mu}$ along the curve $\cg_{\ckrm}$} Estimates (121) and (122) in Part~2 are
\begin{eqnarray} 
&&-\max_{\cg_{\ckrm}}(1-\mu)
\int_{\ugr(v)}^u\frac\nu{1-\mu}(\tilde u,\vgr(\tilde u))\,d\tilde u\label{B}\\
&&\qquad\ \ \leq-\int_{\ugr(v)}^u\nu(\tilde u,\vgr(\tilde u))\,d\tilde u\leq\nonumber \\
&&\qquad\qquad\qquad -\min_{\cg_{\ckrm}}(1-\mu)
\int_{\ugr(v)}^u\frac\nu{1-\mu}(\tilde u,\vgr(\tilde u))\,d\tilde u.\label{D}
\end{eqnarray}

\subsection*{Relation between the integrals of $\frac\nu{1-\mu}$ and $\kappa$ along the curve $\cg_{\ckrm}$} Estimates (130) and (123) in Part~2 are
\begin{eqnarray} 
&&\mbox{\tiny$\frac{\max_{\cg_{\ckrm}}(1-\mu)}{(1+\eps)\min_{\cg_{\ckrm}}(1-\mu)}$}\int_{\vgr(u)}^v\kappa(\ugr(\tilde v),\tilde v)\,d\tilde v\label{argue}\\
&&\qquad\ \ \leq\int_{\ugr(v)}^u\frac\nu{1-\mu}(\tilde u,v)\,d\tilde u\leq\nonumber\\
&&\qquad\qquad\qquad \mbox{\tiny$\frac{\min_{\cg_{\ckrm}}(1-\mu)}{\max_{\cg_{\ckrm}}(1-\mu)}$}\int_{\vgr(u)}^v\kappa(\ugr(\tilde v),\tilde v)\,d\tilde v.\label{argue3} 
\end{eqnarray}
where $\varepsilon>0$ can be chosen arbitrarily close to zero for appropriate choices of the parameters $\beta_-$, $\beta_+$, $\ckrm$, $\ckrp$, $\varepsilon_0$ and $U$.

\subsection*{Estimates in $J^-(\gam)\cap J^+(\cg_{\ckrm})$} Estimates (101), (105), (126) and (127) in Part~2 are
\begin{eqnarray} 
\Bigl|\frac\zeta\nu\Bigr|(u,v)
&\leq&C\sup_{[0,u]}|\zeta_0|e^{-\bigl(\frac\alpha{1+\beta^+}+\partial_r(1-\mu)(r_--\eps_0,\varpi_0)\beta\bigr) v},\label{zetaNuGamma}
\end{eqnarray}
\begin{eqnarray} 
\kappa(u,v) \geq 1 - \varepsilon,\label{kappaquaseum}
\end{eqnarray}
\begin{eqnarray} 
&&\Bigl|\frac\theta\lambda\Bigr|(u,v)
\leq C\sup_{[0,u]}|\zeta_0|e^{-\left(\frac\alpha{1+\beta^+}+
\partial_r(1-\mu)(r_--\eps_0,\varpi_0)\mbox{\tiny$\frac{\min_{\cg_{\ckrm}}(1-\mu)}{\max_{\cg_{\ckrm}}(1-\mu)}$}\beta\right)v},\label{thetaLambdaGamma}
\end{eqnarray}
\begin{equation}\label{integrating_factor} 
e^{\frac 1{r_--\eps_0}\int_{\vgr(\bar u)}^v\bigl[\big|\frac\theta\lambda\bigr||\theta|\bigr](\bar u,\tilde v)\,d\tilde v}\leq 1+\eps,
\end{equation}
where $\varepsilon>0$ can be chosen arbitrarily close to zero for appropriate choices of the parameters $\beta_-$, $\beta_+$, $\ckrm$, $\ckrp$, $\varepsilon_0$ and $U$.

\subsection*{Estimates for $(u,v)\in\gam$} Estimates (131), (109), (110), (135) and (136) in Part~2 are
\begin{eqnarray} 
&&e^{\int_{\ugr(v)}^{u}\bigl[\frac{\nu}{1-\mu}\partial_r(1-\mu)\bigr](\tilde u,v)\,d\tilde u}\nonumber\\
&&\qquad\qquad\leq
e^{\bigl[\partial_r(1-\mu)(\ckrm,\varpi_0)+\mbox{\tiny$\frac\eps{(r_--\eps_0)^2}$}\bigr]
\mbox{\tiny$\frac{(1-\eps)}{(1+\eps)}\frac{\max_{\cg_{\ckrm}}(1-\mu)}{\min_{\cg_{\ckrm}}(1-\mu)}$}\, \frac{\beta}{1+\beta} v},\label{I}
\end{eqnarray}
 \begin{eqnarray} 
 &&
 \tilde ce^{(1+\delta)\partial_r(1-\mu)(r_--\eps_0,\varpi_0)\frac\beta{1+\beta}\,v}\label{lambda-below}\\
 &&\qquad\qquad\qquad\qquad\leq
  -\lambda(u,v)\leq\nonumber\\
  &&\qquad\qquad\qquad\qquad\qquad\qquad
  \tilde Ce^{(1-\delta)\partial_r(1-\mu)(\ckrm,\varpi_0)\frac\beta{1+\beta}\,v},\label{lambda-above}
 \end{eqnarray}
\begin{eqnarray} 
&&ce^{-\partial_r(1-\mu)(r_+,\varpi_0)\frac{v}{1+\beta^-}}\label{O}\\
&&\qquad\qquad\qquad\qquad\leq u\leq \nonumber\\
&&\qquad\qquad\qquad\qquad\qquad Ce^{-\partial_r(1-\mu)(r_+,\varpi_0)\frac{v}{1+\beta^+}},\label{P}
\end{eqnarray}
for $\varepsilon<\varepsilon_0$.
The bound~\eqref{P} is actually valid in $J^-(\gam)\cap J^+(\cg_{\ckrm})$.

\subsection*{Estimates in $J^+(\gam)$} Lemmas~7.1 and 7.2 in Part~2 imply
\begin{eqnarray} 
-\lambda(u,v)&\leq&Ce^{(1-\delta)\partial_r(1-\mu)(\ckrm,\varpi_0)\frac\beta{1+\beta}\,v},\label{up-l}\\
-\nu(u,v)
&\leq&Cu^{\mbox{\tiny$-\,\frac{1+\beta^-}{1+\beta^+}\frac{\partial_r(1-\mu)(\ckrm,\varpi_0)}{\partial_r(1-\mu)(r_+,\varpi_0)}\beta$}\,-1},\label{up-n}
\end{eqnarray}
for appropriate choices of the parameters $\beta_-$, $\beta_+$, $\ckrm$, $\ckrp$, $\varepsilon_0$ and $U$.

\end{document}